\documentclass[preprint,12pt,authoryear]{elsarticle}

\usepackage[utf8]{inputenc}
\usepackage[T1]{fontenc}
\usepackage{lmodern}
\usepackage[english]{babel}
\usepackage{csquotes}
\usepackage{microtype}
\usepackage{enumitem}

\usepackage[margin=2cm]{geometry}
\usepackage{setspace}
\usepackage{amsmath, amssymb, amsthm}

\usepackage{booktabs}
\usepackage{longtable}
\usepackage{graphicx}
\usepackage{float}
\usepackage{multirow}
\usepackage{tabularx}
\newcolumntype{L}{>{\hsize=.82\hsize\linewidth=\hsize\raggedright\arraybackslash}X}
\newcolumntype{M}{>{\hsize=1.06\hsize\linewidth=\hsize\raggedright\arraybackslash}X}
\newcolumntype{R}{>{\hsize=1.12\hsize\linewidth=\hsize\raggedright\arraybackslash}X}

\theoremstyle{definition}
\newtheorem{thm}{Theorem}
\newtheorem{lem}{Lemma}

\newtheorem{prop}{Proposition}
\newtheorem{asm}{Assumption}
\newtheorem{rem}{Remark}
\newtheorem{ex}{Example}

\newcommand{\E}{\mathbb{E}}
\newcommand{\Prob}{\mathbb{P}}
\newcommand{\one}{\mathbf{1}}

\usepackage[dvipsnames]{xcolor}
\usepackage[colorlinks=true,
linkcolor=MidnightBlue,
citecolor=MidnightBlue,
urlcolor=MidnightBlue]{hyperref}
\hypersetup{pdfauthor={Junjie Li; Yukitoshi Matsushita}}

\journal{}
\makeatletter
\def\ps@pprintTitle{}
\makeatother

\begin{document}
\begin{frontmatter}
\title{Efficient difference-in-differences estimation under partial interference with incremental propensity score policies}

\author[hit]{Junjie Li}
\ead{junjieli.eco@gmail.com}

\author[hit]{Yukitoshi Matsushita\fnref{fund}}
\ead{matsushita.y@r.hit-u.ac.jp}

\address[hit]{Graduate School of Economics, Hitotsubashi University,
2-1 Naka, Kunitachi, Tokyo 186-8601, Japan.}


\begin{abstract}
This paper develops efficient difference-in-differences (DID) estimation under partial interference with a cluster incremental propensity score (CIPS) policy. We define direct and spillover average treatment effects on the treated, establish their identification, and derive their efficient influence functions, from which we construct a cross-fitted estimator. Simulations evaluate its finite-sample performance. An application to China's New Rural Pension Scheme recovers the reduction in farmwork among pension recipients reported by the original county-level analysis, separately estimates a within-household spillover alongside the direct effect, and traces both effects across the policy parameter.
\end{abstract}

\begin{keyword}
Difference-in-differences \sep Partial interference \sep Incremental propensity score \sep Stochastic policy \sep Semiparametric efficiency \sep Spillover effects
\end{keyword}

\end{frontmatter}
\section{Introduction}
\label{sec:intro}
Difference-in-differences (DID) is one of the most widely used research designs for evaluating policy effects with panel or repeated cross-sectional data. Its standard identification argument relies on a no-interference (SUTVA) assumption, under which each unit's potential outcomes depend only on its own treatment status. In many empirical settings, however, this assumption is unlikely to hold. For example, when a place-based subsidy is granted to some firms, neighboring firms may benefit through agglomeration spillovers; when some children in a household are encouraged to attend school, their siblings may be affected through shared resources; when some villages in a county enter a special economic zone, surrounding villages may gain or lose through factor reallocation and local-market linkages; and when some members of a rural household begin to receive a social pension, their co-residents may adjust their labor supply through pooled household income, which is the setting of our empirical application. In these settings, a unit's outcome responds not only to its own treatment (a direct effect) but also to the treatment received by its peers (a spillover effect). Ignoring interference can therefore lead to a mischaracterization of the causal estimand and overlook spillover effects that are often of independent policy interest.

This paper studies efficient DID estimation under partial (clustered) interference: units are grouped into clusters, interference is unrestricted within a cluster but absent across clusters, and the number of clusters grows while cluster sizes stay bounded. We consider the data structure following \citet{lee2025efficient}. Clusters are assumed to be independent and identically distributed draws from the population, and the cluster size $N_i$ is a bounded random variable that is observed and incorporated throughout the analysis. 
Heterogeneity across clusters of different sizes is captured by conditioning the nuisance functions on the observed cluster size $N_i$, while allowing for arbitrary within-cluster dependence. This framework is more closely aligned with empirical applications discussed above.
In addition, we replace the type-B policy, under which each peer is independently assigned to treatment with the same probability $\alpha$, with a cluster incremental propensity score (CIPS) policy.
As \citet{lee2025efficient} emphasize, the type-B policy may lack practical
relevance because realistic counterfactual scenarios allow treatment
probabilities to vary across units according to their covariates.  Moreover,
when $\alpha$ differs substantially from the observed treatment rate, the
type-B estimand averages over peer treatment configurations that are rarely or
never observed.  A further difficulty is specific to the
difference-in-differences setting.  In randomized experiments the assignment
probability is fixed by design, so a counterfactual that sets it to a common
value $\alpha$ can be compared with the design probability.  In observational
panel data no such externally imposed probability exists, and it is unclear
which value of $\alpha$ should be regarded as close to the factual assignment
probabilities. To address these limitations, we adopt the cluster incremental propensity score (CIPS) policy of \citet{lee2025efficient}, which extends the incremental propensity score intervention of \citet{kennedy2019nonparametric} to clustered interference.

We make three contributions. First, we define policy-indexed direct and
spillover average treatment effects on the treated under partial interference
in a DID design. The CIPS policy traces these effects as the odds of
the observed, covariate-dependent treatment probabilities are shifted, while a
fixed type-B policy provides a useful benchmark. Under no anticipation and
configuration-specific conditional parallel trends, we establish identification
of both effects without imposing a parametric model on the treatment assignment
or outcome evolution. 
Second, we derive the efficient influence functions and semiparametric
efficiency bounds for the direct and spillover effects under both policies. The
ATT-type conditioning is consequential: its ratio normalization weights the
outcome-regression and policy-derivative components differently, so the
resulting influence functions cannot be obtained by simply replacing outcomes
with first differences in the ATE-type influence functions of
\citet{lee2025efficient}.  Under the CIPS policy they contain an
additional correction because the CIPS policy distribution is itself a
functional of the observed propensity scores; this correction vanishes under
the fixed type-B policy.
Third, we use these influence functions to construct cluster-level cross-fitted
estimators and establish asymptotic linearity, normality, consistent variance
estimation, and attainment of the semiparametric efficiency bounds. The fixed type-B estimator is doubly robust for consistency, whereas the CIPS estimator does not possess double robustness because misspecification of the propensity score changes the target policy itself, and consistency cannot be restored even with a correctly specified outcome model.  This result reproduces, for ATT, the one-sided robustness documented by \citet{lee2025efficient} for their ATE.

This work connects two strands of literature.  The first concerns causal
inference under interference and stochastic interventions.  Under partial
interference, \citet{hudgens2008toward} formalize direct, indirect, total, and
overall effects of treatment-allocation strategies, and
\citet{tchetgen2012causal} further develop causal estimands and
inverse-probability methods for settings with interference.  For observational
studies, \citet{liu2019doubly} construct estimators that remain consistent when
either the propensity-score or outcome-regression model is correctly
specified, while \citet{park2022efficient} characterize efficient influence
functions and efficiency bounds for a broad class of network treatment effects
under partial interference.  More recently, \citet{lee2025efficient} define a general class of
clustered stochastic policies, including policies that shift the observed
propensity-score distribution, and develop efficient nonparametric estimators
for their effects.
Our CIPS construction also draws on the broader stochastic-intervention
literature: \citet{munoz2012population} introduce population effects of
stochastically assigned exposures, \citet{kennedy2019nonparametric} develop
incremental interventions that shift the odds of the observed propensity
score, and \citet{wen2023intervention} study robustness for intervention
distributions that depend on the observed treatment process.

The second strand concerns DID with covariates and interference.
\citet{sant2020doubly} establish doubly robust and semiparametrically efficient
estimation of the ATT under conditional parallel trends without interference.
Allowing for interference, \citet{butts2023difference} studies treatment
assigned along geographic boundaries and traces part of the resulting bias to a
contaminated comparison group, whose outcomes are themselves affected by nearby
treatment.  The conditional parallel trends conditions (CPT-S1) and (CPT-S2) below formalize this distinction. 
\citet{xu2023difference} shows why canonical DID
estimators generally fail to identify direct and spillover effects and develops
doubly robust alternatives under modified parallel-trends conditions, whereas
\citet{sun2025difference} studies doubly robust estimation of direct and
spillover ATT parameters in networks after adjustment for high-dimensional
network confounders.  The present paper differs from this work in two respects.  First, these
papers target contrasts under fixed treatment allocations, whereas our estimands
are indexed by the CIPS parameter $\delta$ and therefore trace an entire policy
curve rather than a single number.  Second, they establish double robustness,
whereas we characterize the efficient influence function and the semiparametric
efficiency bound; to our knowledge, efficiency theory for difference-in-differences
under partial interference has not been available.  


The remainder of the paper is organized as follows.
Section~\ref{sec:setup} sets up the model, the CIPS policy, and the estimands.
Section~\ref{sec:identification} states the identifying assumptions and the
identification result, and illustrates them in a linear model with peer
effects.  Section~\ref{sec:efficiency} derives the efficient influence
functions and the semiparametric efficiency bounds, and develops the
cross-fitted estimator and its asymptotic theory.
Section~\ref{sec:simulation} reports a simulation study and
Section~\ref{sec:application} presents the empirical application;
Section~\ref{sec:discussion} concludes.  \ref{app:notation} collects the
notation and \ref{app:proofs} the proofs.

\section{Set-up}
\label{sec:setup}

\subsection{Data structure}

We first introduce the cluster-sampling framework and notation that will be used to define the causal estimands and derive their identification results. Suppose that $M$ observable clusters are independently sampled from a super-population $\Prob$. Cluster $i$ has a random size $N_i\in\mathbb{N}$ and contains units $j=1,\dots,N_i$. We focus on two time periods $t=1,2$, with treatment realized only in period $2$. Following the DID convention, we write the period-2 treatment as $A_{ij}\equiv A_{ij2}\in\{0,1\}$. For unit $j$ in cluster $i$, let $Y_{ijt}\in\mathbb{R}$ be the outcome at time $t$ and $X_{ij}\in\mathbb{R}^p$ a vector of pre-treatment covariates. Stacking units within a cluster,
\[
Y_{it}=(Y_{i1t},\dots,Y_{iN_it})^\top,\quad
A_i=(A_{i1},\dots,A_{iN_i})^\top\in\mathcal{A}(N_i),\quad
X_i=(X_{i1}^\top,\dots,X_{iN_i}^\top)^\top,
\]
where $\mathcal{A}(s)$ denotes the set of all length-$s$ binary vectors. For unit $j$ in cluster $i$, let
$A_{i(-j)}=(A_{ik})_{k\ne j,\,1\le k\le N_i}$ denote the treatment vector of $j$'s peers, so that 
$A_i=(A_{ij},A_{i(-j)})$ after a relabeling of coordinates.  Write
$\mathbf 0_{-j}\in\mathcal A(N_i-1)$ for the all-zero peer-treatment vector.
The observed data for cluster $i$ are
\[
O_i=(Y_{i1},Y_{i2},A_i,X_i,N_i),
\]
and we assume $(O_1,\dots,O_M)$ are independent and identically distributed draws from $\Prob$, and that there exists a fixed $n_{\max}\in\mathbb{N}$ with $\Prob(N_i\le n_{\max})=1$. Asymptotics are driven by $M\to\infty$ with the cluster size bounded, following \citet{liu2019doubly,park2022efficient,lee2025efficient}. For ease of notation we write
\[
\nu_n=\Prob(N_i=n),\qquad \E_n[\cdot]=\E[\cdot\mid N_i=n],\qquad n\le n_{\max}.
\]
For unit $j$, define the observed first difference $\Delta Y_{ij}=Y_{ij2}-Y_{ij1}$ and the reduced data
\[
W_{ij}=(\Delta Y_{ij},A_{ij},A_{i(-j)},X_i,N_i);
\]
the analysis depends on the outcomes only through $\Delta Y_{ij}$.
This representation isolates the cluster-level observed
data on which the analysis is based.  We next embed these variables in a
potential-outcome model and specify the counterfactual policy governing peer
treatments.

\subsection{Potential outcomes and the CIPS policy}
\label{sec:cips}

Building on the observed-data structure above, we now
describe how outcomes may depend on the entire within-cluster treatment
vector and how that vector is shifted by the policy.
We use the potential-outcome framework for partial interference \citep{hudgens2008toward,tchetgen2012causal,liu2019doubly,park2022efficient,lee2025efficient}. Let $Y_{ijt}(a_i)=Y_{ijt}(a_{ij},a_{i(-j)})$ be the potential outcome of unit $j$ at time $t$ when cluster $i$ receives $a_i=(a_{ij},a_{i(-j)})$; a unit's potential outcome may depend on the treatments of others in the same cluster but not on those in other clusters.
Let
$q_{j}(x,n)=\Prob(A_{ij}=1\mid X_i=x,N_i=n)$
denote the individual propensity score of unit $j$. The incremental propensity score shifts the odds of each unit's treatment by a user-chosen factor $\delta\in(0,\infty)$, producing a supported family of counterfactual peer allocations indexed by $\delta$:  
\begin{equation}\label{eq:qdelta}
q_{j,\delta}(x,n)=\frac{\delta\,q_{j}(x,n)}{\delta\,q_{j}(x,n)+1-q_{j}(x,n)},
\qquad\text{equivalently}\qquad
\frac{q_{j,\delta}(x,n)}{1-q_{j,\delta}(x,n)}=\delta\,\frac{q_{j}(x,n)}{1-q_{j}(x,n)}.
\end{equation}
The choice $\delta=1$ recovers the factual propensity score, while $\delta\downarrow0$ shifts all units toward control and $\delta\uparrow\infty$ shifts them toward treatment. Moreover, the ranking of units according to their treatment probabilities is preserved. The incremental propensity score therefore transforms the factual propensity scores into a supported family of counterfactual peer treatment allocations indexed by $\delta$, following the framework of \citet{kennedy2019nonparametric}. Following \citet{lee2025efficient}, we assume that treatment assignments of distinct units within the same cluster are conditionally independent given $(X_i,N_i)$, which leads to the CIPS policy:
\begin{equation}\label{eq:assignCI}
\Prob(A_i=a_i\mid X_i=x,N_i=n)
=\prod_{k=1}^{n}q_k(x,n)^{a_k}\{1-q_k(x,n)\}^{1-a_k}.
\end{equation}
By shifting each individual propensity score in this product, we obtain the conditional CIPS policy distribution:
\begin{equation}\label{eq:pidelta}
\pi_\delta(a_{-j}\mid x,n)
=\prod_{j'\neq j}q_{j',\delta}(x,n)^{a_{j'}}\bigl(1-q_{j',\delta}(x,n)\bigr)^{1-a_{j'}},
\qquad a_{-j}\in\mathcal{A}(n-1).
\end{equation}
The CIPS policy
is constructed as the product of the individual incremental propensity scores, so it does not rely on any parametric models. Specifically, it relies on the assumption that treatment assignments between units within the same cluster are conditionally independent given covariates and cluster size, while allowing for marginal dependence induced by shared covariates. This conditional independence assumption is weaker than the type-B policy assumption (as discussed in Remark~\ref{rem:typeB}), which requires all units to share the same treatment probability $\alpha$ and to be marginally independent.

\subsection{The causal estimands}
\label{sec:estimands}

With the potential outcomes and the CIPS policy distribution defined above, we now introduce the causal estimands. Fix a focal unit $j$ in a cluster of size $n$. For a peer configuration $a_{-j}\in\mathcal{A}(n-1)$ and a covariate value $x$ in the support of $X_i\mid N_i=n$, we define the conditional direct and spillover average treatment effects on the treated as follows:
\begin{align}
\theta_{j}^{DATT}(a_{-j},x,n)
&=\E\!\left[Y_{ij2}(1,a_{-j})-Y_{ij2}(0,a_{-j})\mid A_{ij}=1,A_{i(-j)}=a_{-j},X_i=x,N_i=n\right],
\label{eq:thetaDATTx}\\
\theta_{j}^{SATT}(a_{-j},x,n)
&=\E\!\left[Y_{ij2}(0,a_{-j})-Y_{ij2}(0,\mathbf 0_{-j})\mid A_{ij}=1,A_{i(-j)}=a_{-j},X_i=x,N_i=n\right].
\label{eq:thetaSATTx}
\end{align}
Both estimands condition not only on the covariates $X_i=x$ but also on the realized peer configuration $A_{i(-j)}=a_{-j}$. Therefore, the subpopulation over which $\theta_{j}^{\bullet}(a_{-j},x,n)$ is defined varies with $a_{-j}$. The aggregation in \eqref{eq:tauj} consequently combines configuration-specific treated subpopulations, which is why the parallel-trends restrictions in Assumption~\ref{asm:1}(iii) are imposed separately for each peer configuration. The parameters of interest are the direct and spillover effects under the policy:
\begin{equation}\label{eq:tau}
\tau^{\bullet}(\delta)
=\E\!\left[N_i^{-1}\sum_{j=1}^{N_i}\tau_{j}^{\bullet}(\delta,N_i)\right]
=\sum_{n\le n_{\max}}\nu_n\,\frac1n\sum_{j=1}^{n}\tau_{j}^{\bullet}(\delta,n),
\  \bullet\in\{DATT,SATT\},
\end{equation}
where 
\begin{equation}\label{eq:tauj}
\tau_{j}^{\bullet}(\delta,n)
=\sum_{a_{-j}\in\mathcal{A}(n-1)}
\E\!\left[\pi_\delta(a_{-j}\mid X_i,n)\,\theta_{j}^{\bullet}(a_{-j},X_i,n)\;\middle|\;A_{ij}=1,N_i=n\right],
\  \bullet\in\{DATT,SATT\},
\end{equation}
denotes the unit-level effect under the policy.
In \eqref{eq:tauj}, the covariate-specific weight
$\pi_\delta(a_{-j}\mid X_i,n)$ and the conditional treatment effect
$\theta_{j}^{\bullet}(a_{-j},X_i,n)$ are multiplied pointwise in $X_i$, and
the product is averaged over the distribution of $X_i$ given
$A_{ij}=1,N_i=n$.  The estimand \eqref{eq:tau} therefore weights each
treated unit by its realized covariate profile, while the summation over $a_{-j}$ aggregates across configuration-specific treated subpopulations described above.

The conditional and unconditional estimands satisfy an exact decomposition of a total effect: writing
$\theta_{j}^{TATT}(a_{-j},x,n)=\theta_{j}^{DATT}(a_{-j},x,n)+\theta_{j}^{SATT}(a_{-j},x,n)$
and aggregating with the common weights in \eqref{eq:tau}--\eqref{eq:tauj} gives
$\tau^{TATT}(\delta)=\tau^{DATT}(\delta)+\tau^{SATT}(\delta)$ for every $\delta\in(0,\infty)$. In the no-interference special case $N_i\equiv1$, every cluster contains a single unit with no peers, $\mathcal{A}(0)=\{\emptyset\}$, $\pi_\delta(\emptyset\mid x,1)=1$, and $Y_{ij2}(a_{ij},a_{i(-j)})=Y_{ij2}(a_{ij})$; then $\tau^{DATT}(\delta)$ reduces, for every $\delta$, to the standard average treatment effect on the treated of \citet{hahn1998role, abadie2005semiparametric, sant2020doubly}, and $\tau^{SATT}(\delta)\equiv0$.
These policy-indexed estimands complete the set-up by
linking the configuration-specific causal contrasts to the observed
covariate distribution.

\subsection{Relation to the literature}
\label{sec:literature}

Much of the recent literature on DID under interference
uses exposure mappings to summarize how the treatments of others affect a focal unit and defines causal estimands accordingly. The estimands defined above depart from this
literature in that peer treatment assignments are not summarized through an
exposure mapping. Instead, the complete ordered peer vector $a_{-j}$ is retained in \eqref{eq:thetaDATTx}--\eqref{eq:thetaSATTx}, and the CIPS policy $\pi_\delta(a_{-j}\mid x,n)$ in \eqref{eq:pidelta} assigns a covariate-dependent distribution over peer configurations in \eqref{eq:tauj}, rather than fixing a single exposure level.
Table~\ref{tab:did-interference-comparison} compares the interference setting, exposure mapping, DATT definition, and parallel-trends assumption in the present paper with those in \citet{hettinger2025doubly}, \citet{lee2026policy}, and \citet{xu2023difference}.

\begin{table}[htbp]
\centering
\setstretch{1}
\caption{Comparison with difference-in-differences methods under interference}
\label{tab:did-interference-comparison}
\setlength{\abovecaptionskip}{0pt}
\setlength{\belowcaptionskip}{6pt}
\scriptsize
\setlength{\tabcolsep}{2pt}
\renewcommand{\arraystretch}{1.25}
\begin{tabularx}{\linewidth}{@{}>{\raggedright\arraybackslash}p{2.1cm}LMR@{}}
\toprule
\textbf{Study / interference} & \textbf{Exposure mapping} &
\textbf{DATT definition} & \textbf{Parallel trends} \\
\midrule

\citet{hettinger2025doubly}\newline
\textit{Geographically separable}
& $G_j=1$ if either (i) $a_j=0$ and unit $j$ is adjacent to a treated region,
  or (ii) $a_j=1$ and unit $j$ is not adjacent to an untreated region;
  otherwise $G_j=0$.
& {\fontsize{6.5}{7.5}\selectfont \(\begin{aligned}
  &\E[Y_{j2}(1,0)-Y_{j2}(0,0)\mid A_j=1,G_j=0]\\[1pt]
  ={}&\E[Y_{j2}(1,0)-Y_{j2}(0,0)\mid A_j=1]
  \end{aligned}\)}
& {\fontsize{6}{7}\selectfont \(\begin{aligned}
  &\E[Y_{j2}(0,0)\!-\!Y_{j1}(0,0)\!\mid\! A_j=1,G_j=0,X_j]\\[-1pt]
  ={}&\E[Y_{j2}(0,0)\!-\!Y_{j1}(0,0)\!\mid\! A_j=0,G_j=0,X_j]
  \end{aligned}\)}
\\
\addlinespace[5pt]

\citet{lee2026policy}\newline
\textit{Partial interference}
& $G_{ij}=1$ if either (i) $a_{ij}=0$ and unit $j$ has at least one treated
  neighbour in cluster $i$, or (ii) $a_{ij}=1$ and all of its neighbours in
  cluster $i$ are treated; otherwise $G_{ij}=0$.
& {\fontsize{6.5}{7.5}\selectfont \(\begin{aligned}
  &\E[Y_{ij2}(1,0)-Y_{ij2}(0,0)\mid A_{ij}=1,G_{ij}=0]\\[1pt]
  ={}&\E[Y_{ij2}(1,0)-Y_{ij2}(0,0)\mid A_{ij}=1]
  \end{aligned}\)}
& {\fontsize{6}{7}\selectfont \(\begin{aligned}
  &\E[Y_{ij2}(0,0)\!-\!Y_{ij1}(0,0)\!\mid\! A_{ij}=1,G_{ij}=0,X_{ij}]\\[-1pt]
  ={}&\E[Y_{ij2}(0,0)\!-\!Y_{ij1}(0,0)\!\mid\! A_{ij}=0,G_{ij}=0,X_{ij}]
  \end{aligned}\)}
\\
\addlinespace[5pt]

\citet{xu2023difference}\newline
\textit{General interference}
& $G_j=g_j(A_{-j})\in\mathcal G$.
& {\fontsize{6.5}{7.5}\selectfont \(\begin{aligned}
  &|J|^{-1}\!\sum_{j\in J}\sum_{g\in\mathcal G} \Pr(G_j=g\!\mid\! A_j=1,x) \times \\
  &\E[Y_{j2}(1,g)-Y_{j2}(0,g) \mid A_j=1,G_j=g,x]
  \end{aligned}\)}
& {\fontsize{6}{7}\selectfont \(\begin{aligned}
  &\E[Y_{j2}(0,g)\!-\!Y_{j1}(0,0)\!\mid\! A_j=1,G_j=g,x]\\[-1pt]
  ={}&\E[Y_{j2}(0,g)\!-\!Y_{j1}(0,0)\!\mid\! A_j=0,G_j=g,x]
  \end{aligned}\)}
\\
\addlinespace[5pt]

\textbf{This paper}\newline
\textit{Partial interference}
& \textbf{None.}  The complete ordered peer vector $a_{-j}\in\mathcal A(n-1)$
  is retained; the CIPS policy $\pi_\delta(a_{-j}\mid x,n)$ of
  \eqref{eq:pidelta} places a covariate-dependent distribution over
  configurations.
& The conditional direct effect $\theta_j^{DATT}(a_{-j},x,n)$ of
  \eqref{eq:thetaDATTx}, aggregated over peer configurations by
  \eqref{eq:tauj} and over cluster sizes by \eqref{eq:tau} to give
  $\tau^{DATT}(\delta)$.
& The conditional parallel-trends condition for the DATT in
  Assumption~\ref{asm:1}(iii) below, imposed separately for each peer
  configuration $a_{-j}$.
\\

\bottomrule
\end{tabularx}
\vspace{0.5em}

\begin{minipage}{\linewidth}
\fontsize{7}{8}\selectfont
\setlength{\parindent}{0pt}
\textit{Note:}
Notation is rewritten in the notation of this paper.
In the first and third rows $j$ indexes a unit in the whole population and no
cluster index is used; in the second and fourth rows $ij$ indexes unit $j$ of
cluster $i$.
$\mathcal G$ is a discrete set of possible exposure values and $|J|$ is the
finite population size.
The second equality in the first two rows is specific to the data used in
those studies: no unit is observed that is itself treated and has only treated
neighbours, so that $A_j=1$ implies $G_j=0$ and $A_{ij}=1$ implies
$G_{ij}=0$.
\end{minipage}
\end{table}

Section~\ref{sec:identification} next states the
conditions under which each component, and hence the aggregated effects, is
identified from the observed data.

\section{Identification}
\label{sec:identification}

To identify the DATT and SATT, we introduce the following outcome regression and treatment assignment nuisance functions. For $a\in\{0,1\}$,
\begin{align*}
m_{a,j}(a_{-j},x,n)
&=\E[\Delta Y_{ij}\mid A_{ij}=a,A_{i(-j)}=a_{-j},X_i=x,N_i=n],\\
\Delta m_{j}(a_{-j},x,n)
&=m_{1,j}(a_{-j},x,n)-m_{0,j}(a_{-j},x,n),
\end{align*}
\vspace{-2cm}
\begin{align*}
q_{j}(x,n)
&=\Prob(A_{ij}=1\mid X_i=x,N_i=n),\\
e_{a,j}(a_{-j}\mid x,n)
&=\Prob(A_{i(-j)}=a_{-j}\mid A_{ij}=a,X_i=x,N_i=n),
\end{align*}
where the conditional independence restriction in \eqref{eq:assignCI} implies that
$e_{0,j}(a_{-j}\mid x,n)=e_{1,j}(a_{-j}\mid x,n)
=\prod_{j'\ne j}\left[
q_{j'}(x,n)^{a_{j'}}\{1-q_{j'}(x,n)\}^{1-a_{j'}}
\right]$.
The marginal probability of unit $j$ is treated within clusters of size-$n$ is
$p_{j}(n)=\Prob(A_{ij}=1\mid N_i=n)=\E_n[q_{j}(X_i,n)]$.
Note that the covariate distribution among treated units can be recovered by Bayes' rule: for any integrable $f$,
\begin{equation}\label{eq:bayes}
\E[f(X_i)\mid A_{ij}=1,N_i=n]=\frac{\E_n[q_{j}(X_i,n)\,f(X_i)]}{p_{j}(n)}.
\end{equation}
 We next state the assumptions under which the conditional and unconditional DATT and SATT are identified from the observed data.
\begin{asm}[Identification assumptions]\label{asm:1}
For all $n\le n_{\max}$, $j\le n$, $a_{-j}\in\mathcal{A}(n-1)$, and $x$ in the support of $X_i\mid N_i=n$, the following hold.
\begin{enumerate}[label=(\roman*)]
\item (Consistency)
$Y_{ijt}=\sum_{a_i\in\mathcal{A}(N_i)}\one\{A_i=a_i\}\,Y_{ijt}(a_i)$ for all $i,j,t$.
\item (No anticipation) Treatment occurs only in period $2$, and
$Y_{ij1}(a_{ij},a_{i(-j)})=Y_{ij1}(0,\mathbf 0_{-j})$ for all $i,j$.
\item (Conditional parallel trends) For the identification of $\tau^{DATT}(\delta)$,
\begin{align*}
&\E[Y_{ij2}(0,a_{-j})-Y_{ij1}(0,\mathbf 0_{-j})\mid A_{ij}=1,A_{i(-j)}=a_{-j},X_i=x,N_i=n]\\
&\quad=\E[Y_{ij2}(0,a_{-j})-Y_{ij1}(0,\mathbf 0_{-j})\mid A_{ij}=0,A_{i(-j)}=a_{-j},X_i=x,N_i=n].
\end{align*}
For the identification of $\tau^{SATT}(\delta)$, the following two conditions hold:

\noindent (CPT-S1, standard parallel trends for $Y(0,a_{-j})$)
\begin{align*}
&\E[Y_{ij2}(0,a_{-j})-Y_{ij1}(0,a_{-j})\mid A_{ij}=1,A_{i(-j)}=a_{-j},X_i=x,N_i=n]\\
&\quad=\E[Y_{ij2}(0,a_{-j})-Y_{ij1}(0,a_{-j})\mid A_{ij}=0,A_{i(-j)}=a_{-j},X_i=x,N_i=n].
\end{align*}

\noindent (CPT-S2, cross-configuration trend for $Y(0,\mathbf 0_{-j})$)
\begin{align*}
&\E[Y_{ij2}(0,\mathbf 0_{-j})-Y_{ij1}(0,\mathbf 0_{-j})\mid A_{ij}=1,A_{i(-j)}=a_{-j},X_i=x,N_i=n]\\
&\quad=\E[Y_{ij2}(0,\mathbf 0_{-j})-Y_{ij1}(0,\mathbf 0_{-j})\mid A_{ij}=0,A_{i(-j)}=\mathbf 0_{-j},X_i=x,N_i=n].
\end{align*}
CPT-S1 is the usual conditional parallel-trends assumption within peer configuration $a_{-j}$; CPT-S2 is a cross-configuration restriction stating that treated units with peer configuration $a_{-j}$ share the same $Y(0,\mathbf 0_{-j})$ trend of control units that have no treated peers.
\item (Overlap) There exists a constant $c>0$, not depending on
$(j,n,x)$, such that
\[
c\le q_{j}(x,n)\le 1-c.
\]
Under \eqref{eq:assignCI} this single restriction implies
\[
e_{a,j}(a_{-j}\mid x,n)\ge c^{\,n_{\max}-1}>0
\quad(a\in\{0,1\}),\qquad
p_{j}(n)=\E_n[q_{j}(X_i,n)]\ge c>0,
\]
for every $a_{-j}\in\mathcal{A}(n-1)$, in particular all $2^{\,n-1}$
peer configurations have positive conditional probability, so no separate
support qualification is needed.
\item (Moments)
\[
\E_n[\Delta Y_{ij}^{2}]<\infty,\qquad
\E_n\!\left[\max_{a\in\{0,1\},\,a_{-j}\in\mathcal{A}(n-1)}|m_{a,j}(a_{-j},X_i,n)|^{2}\right]<\infty,
\]
and there is a constant $C_Y<\infty$ with
$\E[\Delta Y_{ij}^{2}\mid A_i,X_i,N_i]\le C_Y$ almost surely.
\end{enumerate}
\end{asm}

\begin{prop}[Identification]\label{prop:id}
Under Assumption~\ref{asm:1}, for every $\delta\in(0,\infty)$ the covariate-conditional direct and spillover average treatment effects on the treated are identified as
\begin{align}
\theta_{j}^{DATT}(a_{-j},x,n)
&=m_{1,j}(a_{-j},x,n)-m_{0,j}(a_{-j},x,n)=\Delta m_{j}(a_{-j},x,n),
\label{eq:idDATT}\\
\theta_{j}^{SATT}(a_{-j},x,n)
&=m_{0,j}(a_{-j},x,n)-m_{0,j}(\mathbf 0_{-j},x,n).
\label{eq:idSATT}
\end{align}
Consequently, using \eqref{eq:bayes}, the unit-level effects \eqref{eq:tauj} are identified as
\begin{align}
\tau_{j}^{DATT}(\delta,n)
&=\frac{1}{p_{j}(n)}\sum_{a_{-j}\in\mathcal{A}(n-1)}
\E_n\!\left[q_{j}(X_i,n)\,\pi_\delta(a_{-j}\mid X_i,n)\,\Delta m_{j}(a_{-j},X_i,n)\right],
\label{eq:idtauDATT}\\
\tau_{j}^{SATT}(\delta,n)
&=\frac{1}{p_{j}(n)}\sum_{a_{-j}\in\mathcal{A}(n-1)}
\E_n\!\left[q_{j}(X_i,n)\,\pi_\delta(a_{-j}\mid X_i,n)\,\bigl\{m_{0,j}(a_{-j},X_i,n)-m_{0,j}(\mathbf 0_{-j},X_i,n)\bigr\}\right],
\label{eq:idtauSATT}
\end{align}
and hence $\tau^{DATT}(\delta)$ and $\tau^{SATT}(\delta)$ in \eqref{eq:tau} are identified. A proof is provided in~\ref{app:id}.
\end{prop}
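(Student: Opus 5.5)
The proof proceeds cell by cell. Fix $(n,j,a_{-j},x)$ with $x$ in the support of $X_i\mid N_i=n$. By overlap (Assumption~\ref{asm:1}(iv)) every cell $\{A_{ij}=a,A_{i(-j)}=a_{-j},X_i=x,N_i=n\}$ has positive probability, so $m_{a,j}(a_{-j},x,n)$ is well defined, and the moment conditions (v) make all conditional expectations finite. The argument then has two parts: identify the conditional contrasts \eqref{eq:idDATT}--\eqref{eq:idSATT}, and aggregate them with the CIPS weights via \eqref{eq:bayes}.

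\textbf{Direct effect.} On the event $\{A_{ij}=a,A_{i(-j)}=a_{-j}\}$, consistency (i) gives $Y_{ij2}=Y_{ij2}(a,a_{-j})$. No anticipation (ii) gives $Y_{ij1}=Y_{ij1}(0,\mathbf 0_{-j})$. Hence
\[
m_{a,j}(a_{-j},x,n)=\E[Y_{ij2}(a,a_{-j})-Y_{ij1}(0,\mathbf 0_{-j})\mid A_{ij}=a,A_{i(-j)}=a_{-j},X_i=x,N_i=n].
\]
Write $\theta^{DATT}_j$ as
\[
\E[Y_{ij2}(1,a_{-j})-Y_{ij1}(0,\mathbf 0_{-j})\mid A_{ij}=1,\cdot]-\E[Y_{ij2}(0,a_{-j})-Y_{ij1}(0,\mathbf 0_{-j})\mid A_{ij}=1,\cdot].
\]
The first term is $m_{1,j}$. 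The DATT parallel-trends condition in (iii) replaces the conditioning event $A_{ij}=1$ in the second term by $A_{ij}=0$, which turns it into $m_{0,j}$. This yields \eqref{eq:idDATT}.

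\textbf{Spillover effect.} I add and subtract the period-1 outcome, using no anticipation to set $Y_{ij1}(0,a_{-j})=Y_{ij1}(0,\mathbf 0_{-j})$. This gives
\begin{align*}
\theta^{SATT}_j={}&\E[Y_{ij2}(0,a_{-j})-Y_{ij1}(0,a_{-j})\mid A_{ij}=1,A_{i(-j)}=a_{-j},\cdot]\\
&-\E[Y_{ij2}(0,\mathbf 0_{-j})-Y_{ij1}(0,\mathbf 0_{-j})\mid A_{ij}=1,A_{i(-j)}=a_{-j},\cdot].
\end{align*}
CPT-S1 moves the first term to the cell $A_{ij}=0,A_{i(-j)}=a_{-j}$. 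There, consistency and no anticipation turn it into $m_{0,j}(a_{-j},x,n)$. CPT-S2 moves the second term to the cell $A_{ij}=0,A_{i(-j)}=\mathbf 0_{-j}$, where the same reasoning gives $m_{0,j}(\mathbf 0_{-j},x,n)$. This is \eqref{eq:idSATT}.

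The main point needing care is the second term. CPT-S2 is exactly what allows the conditioning peer configuration to change from $a_{-j}$ to $\mathbf 0_{-j}$, and only in the target cell $A_{i(-j)}=\mathbf 0_{-j}$ is the observed outcome equal to $Y_{ij2}(0,\mathbf 0_{-j})$.

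\textbf{Aggregation.} Fix $a_{-j}$. The function $f(X_i)=\pi_\delta(a_{-j}\mid X_i,n)\,\theta^{\bullet}_j(a_{-j},X_i,n)$ depends on the data only through $X_i$, since $\pi_\delta$ is built from the $q_{j'}$. Applying \eqref{eq:bayes} to it gives
\[
\E[f(X_i)\mid A_{ij}=1,N_i=n]=\E_n[q_j f]/p_j(n).
\]
Two facts make this valid. First, $\pi_\delta\in[0,1]$. Second, the moment bound on $\max|m_{a,j}|$ makes $f$ integrable. Substituting the identified $\theta$'s and summing over the finite set $\mathcal A(n-1)$ yields \eqref{eq:idtauDATT}--\eqref{eq:idtauSATT}.

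Every ingredient is a functional of the observed-data law: $q_j$ and $p_j(n)\ge c>0$, hence $\pi_\delta$ via \eqref{eq:qdelta}--\eqref{eq:pidelta}, together with $m_{a,j}$ and $\nu_n$. Plugging into \eqref{eq:tau} therefore identifies $\tau^{DATT}(\delta)$ and $\tau^{SATT}(\delta)$ for every $\delta\in(0,\infty)$.
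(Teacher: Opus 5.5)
Your proposal is correct and follows essentially the same route as the paper's proof. For the direct effect you add and subtract the baseline $Y_{ij1}(0,\mathbf 0_{-j})$ and apply the DATT parallel-trends condition; for the spillover effect you apply CPT-S1 and CPT-S2 to the two trend terms; and you aggregate with the CIPS weights via Bayes' rule \eqref{eq:bayes}. Your explicit remarks on cell positivity from overlap and on the integrability of $\pi_\delta\,\theta_j^{\bullet}$ are small additions that the paper leaves implicit.
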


\begin{ex}[A linear DID model with peer effects]\label{ex:linear}
It is instructive to read the estimands \eqref{eq:tau}--\eqref{eq:tauj} through
a linear DID specification.  Throughout this example we maintain
Assumption~\ref{asm:1} and impose the following linear model as an additional,
illustrative parametric restriction.  For simplicity we omit covariate terms
from the outcome equation and write, for $t\in\{1,2\}$ and
$\mathrm{Post}_t=\one\{t=2\}$,
\begin{equation}\label{eq:exlevels}
Y_{ijt}=\alpha_{ij}+\lambda_t
+\beta^{D}\bigl(A_{ij}\times\mathrm{Post}_t\bigr)
+\sum_{k\ne j}\beta^{S}_{jk}\bigl(A_{ik}\times\mathrm{Post}_t\bigr)+u_{ijt},
\end{equation}
where $\alpha_{ij}$ is a unit effect and $\lambda_t$ a common time effect.
Equation \eqref{eq:exlevels} is read as structural: it continues to hold with
$A_i$ replaced by any $a_i\in\mathcal A(N_i)$, and consistency
(Assumption~\ref{asm:1}(i)) returns the displayed form at the realized
assignment.  The specification excludes the treatment interactions
$A_{ij}\times A_{ik}$ and $A_{ik}\times A_{ik'}$, and takes
$(\beta^{D},\beta^{S}_{jk})$ to be homogeneous across clusters.  It also keeps
the additive two-way fixed effect structure, in contrast to the interactive fixed
effects $\alpha_{ij}\times\lambda_t$ allowed by
\citet{callaway2023treatment}.

Because the treatment terms are multiplied by $\mathrm{Post}_t$, the period-$1$
equation carries no treatment, and first differencing eliminates $\alpha_{ij}$:
$\Delta Y_{ij}=\Delta\lambda+\beta^{D}A_{ij}+\sum_{k\ne j}\beta^{S}_{jk}A_{ik}
+\Delta u_{ij}$ with $\Delta\lambda=\lambda_2-\lambda_1$.  Assumption~\ref{asm:1}(iii)
then translates the parallel-trends restrictions into restrictions on the
conditional mean of $\Delta u_{ij}$.  For the direct contrast, the DATT
parallel-trends condition equates the conditional means of $\Delta u_{ij}$ in
the cells $(A_{ij},A_{i(-j)})=(1,a_{-j})$ and $(0,a_{-j})$.  For the spillover
contrast, CPT-S1 gives the same equality between these two cells, while CPT-S2
equates the first cell to the no-treated-peer control cell
$(0,\mathbf 0_{-j})$.  Thus the relevant error trend is common across the
cells used by each contrast, and it cancels from the contrasts identified in
Proposition~\ref{prop:id}, giving
\begin{equation}\label{eq:extheta}
\theta_{j}^{DATT}(a_{-j},n)=\beta^{D},
\qquad
\theta_{j}^{SATT}(a_{-j},n)=\sum_{k\ne j}\beta^{S}_{jk}a_{k}.
\end{equation}
The direct effect is the own-treatment coefficient, and the spillover effect
adds up the focal-unit-specific peer coefficients at the realized
configuration.
The policy enters only at the aggregation step.  For every
$\delta\in(0,\infty)$, substituting \eqref{eq:extheta} into
\eqref{eq:tauj} yields
\begin{align}
\tau_{j}^{DATT}(\delta,n)
&=
\sum_{a_{-j}\in\mathcal A(n-1)}
\E\!\left[
\pi_\delta(a_{-j}\mid X_i,n)\,\beta^{D}
\;\middle|\; A_{ij}=1,N_i=n
\right]
=\beta^{D},
\label{eq:exunitdatt}\\
\tau_{j}^{SATT}(\delta,n)
&=
\sum_{a_{-j}\in\mathcal A(n-1)}
\E\!\left[
\pi_\delta(a_{-j}\mid X_i,n)
\sum_{k\ne j}\beta^{S}_{jk}a_k
\;\middle|\; A_{ij}=1,N_i=n
\right]\nonumber\\
&=
\sum_{k\ne j}\beta^{S}_{jk}
\E\!\left[
q_{k,\delta}(X_i,n)
\;\middle|\; A_{ij}=1,N_i=n
\right].
\label{eq:exunitsatt}
\end{align}
The last equality follows because the marginal probability that peer $k$ is
treated under $\pi_\delta(\cdot\mid X_i,n)$ is
$q_{k,\delta}(X_i,n)$.  At $\delta=1$,
$q_{k,1}(X_i,n)=q_k(X_i,n)$, so the policy reproduces the factual
conditional peer-treatment allocation.
If the propensity is homogeneous,
$q_j(x,n)\equiv q$ for every $(j,x,n)$, so that
$q_{k,\delta}\equiv q_\delta=\delta q/\{\delta q+1-q\}$, and the spillover
coefficients are homogeneous across all focal-peer pairs,
$\beta^{S}_{jk}\equiv\beta^{S}$ for every $j\ne k$, then
\eqref{eq:exunitdatt}--\eqref{eq:exunitsatt} become
\[
\tau_{j}^{DATT}(\delta,n)=\beta^{D},
\qquad
\tau_{j}^{SATT}(\delta,n)=(n-1)\beta^{S}q_\delta.
\]
Consequently, the global effects in \eqref{eq:tau} reduce to
\begin{equation}\label{eq:extau}
\tau^{DATT}(\delta)=\beta^{D},
\qquad
\tau^{SATT}(\delta)=\beta^{S}\,\E[N_i-1]q_\delta.
\end{equation}
Thus the direct effect is the own-treatment coefficient and is invariant to
the policy strength.  The spillover effect equals the peer coefficient times
the policy-induced expected number of treated peers: it converges to zero as
$\delta\downarrow0$, equals $\beta^{S}\E[N_i-1]q$ at $\delta=1$, and converges
to $\beta^{S}\E[N_i-1]$ as $\delta\uparrow\infty$.  When $N_i\equiv1$, the peer
block is empty, $\tau^{DATT}(\delta)=\beta^{D}$ is the ATT, and
$\tau^{SATT}(\delta)\equiv0$, as expected because no
peers are present. The two homogeneity restrictions used for \eqref{eq:extau} are
imposed only to obtain a closed form, and the simulation design of
Section~\ref{sec:simulation} dispenses with both: there the propensity $q_j$
depends on the unit label, the cluster size, and the entire cluster covariate
vector, and the peer coefficient $\beta^{S}_{jk}$ varies with the peer's
position $k$.
\end{ex}

\section{Semiparametric efficiency}
\label{sec:efficiency}

Let $\mathcal{M}_{NP}$ denote the nonparametric model for the i.i.d.\ cluster
data of Section~\ref{sec:setup}, subject to the conditional independence
restriction \eqref{eq:assignCI}. Throughout, starred quantities denote
evaluation at the truth. We first record the generic policy derivative that
carries the CIPS weight's dependence on the assignment mechanism,
\begin{align}\label{eq:gprime}
g_{j,a_{-j},j'}^{\prime}(x,n)
=\frac{\partial \pi_\delta(a_{-j}\mid x,n)}{\partial q_{j'}(x,n)}
=\frac{a_{j'}-q_{j',\delta}(x,n)}{q_{j',\delta}(x,n)\{1-q_{j',\delta}(x,n)\}}
\frac{\pi_\delta(a_{-j}\mid x,n) \, \delta}{\{\delta q_{j'}(x,n)+1-q_{j'}(x,n)\}^{2}},
\end{align}
the last factor being $\pi_\delta(a_{-j}\mid x,n) \, \partial q_{j',\delta}/\partial q_{j'}$ obtained from
\eqref{eq:qdelta}.  Write
$g_{j,a_{-j},j'}^{\prime*}$ for \eqref{eq:gprime} evaluated at $q^*$.
This derivative is the object that generates the influence function
$\phi_Q$ of the policy weight in \citet{lee2025efficient}, see Section~B.2 of
their supplementary material for further details.
We use the superscript $\bullet$ to denote either $\{DATT,SATT\}$ for notational convenience. Define the policy averages of the
regression contrasts within a size-$n$ cluster as
\begin{align*}
H_j^{DATT*}(x,n)
&=\sum_{a_{-j}\in\mathcal A(n-1)}
\pi_\delta^*(a_{-j}\mid x,n)\Delta m_j^*(a_{-j},x,n),\\
H_j^{SATT*}(x,n)
&=\sum_{a_{-j}\in\mathcal A(n-1)}
\pi_\delta^*(a_{-j}\mid x,n)
\{m_{0,j}^*(a_{-j},x,n)-m_{0,j}^*(\mathbf 0_{-j},x,n)\},
\end{align*}
together with the corresponding policy-estimation terms:
{\small
\begin{align}
G_{j}^{DATT*}(O_i;\delta,n)
&=q_{j}^*(X_i,n)\sum_{a_{-j}\in\mathcal{A}(n-1)}\Delta m_{j}^*(a_{-j},X_i,n)
\sum_{j'\neq j}g_{j,a_{-j},j'}^{\prime*}(X_i,n)\bigl(A_{ij'}-q_{j'}^*(X_i,n)\bigr),
\label{eq:Gdatt}\\
G_{j}^{SATT*}(O_i;\delta,n)
&=q_{j}^*(X_i,n)\sum_{a_{-j}\in\mathcal{A}(n-1)}\bigl\{m_{0,j}^*(a_{-j},X_i,n)-m_{0,j}^*(\mathbf 0_{-j},X_i,n)\bigr\} \sum_{j'\neq j}g_{j,a_{-j},j'}^{\prime*}(X_i,n)\bigl(A_{ij'}-q_{j'}^*(X_i,n)\bigr).
\label{eq:Gsatt}
\end{align}
}

Exchanging the sum over $a_{-j}$ with the expectation in
\eqref{eq:idtauDATT}--\eqref{eq:idtauSATT} shows that $H_j^{\bullet*}$ is the
inner policy average appearing in the identification result, so that the
unit-level estimand can be written compactly as
$\tau_j^{\bullet*}(\delta,n)
=\E_n\bigl[q_j^*(X_i,n)H_j^{\bullet*}(X_i,n)\bigr]/p_j^*(n)$.
That is, $\tau_j^{\bullet*}(\delta,n)$ is the $q_j^*$-weighted mean of
$H_j^{\bullet*}$ within size-$n$ clusters, normalized by the treated marginal
$p_j^*(n)$.

\begin{thm}[Global efficiency under the CIPS policy]\label{thm:eif}
Under Assumption~\ref{asm:1}, for $\bullet\in\{DATT,SATT\}$ the efficient influence function of $\tau^{\bullet}(\delta)$ under $\mathcal{M}_{NP}$ is
\begin{equation}\label{eq:eif}
\varphi^{\bullet*}(O_i;\delta)
=\bar\varphi^{\bullet*}(O_i;\delta)+\Bigl\{\bar\tau^{\bullet*}(\delta,N_i)-\tau^{\bullet*}(\delta)\Bigr\},
\quad
\bar\tau^{\bullet*}(\delta,n)=n^{-1}\sum_{j=1}^{n}\tau_{j}^{\bullet*}(\delta,n),
\end{equation}
where $\bar\varphi^{\bullet*}(O_i;\delta)=N_i^{-1}\sum_{j=1}^{N_i}\varphi_{j}^{\bullet*}(O_i;\delta,N_i)$ and the unit-level EIF is
\begin{align}
\varphi_{j}^{DATT*}(O_i;\delta,n)
=\frac{1}{p_{j}^*(n)}\Bigl\{
&\;A_{ij}\bigl\{H_j^{DATT*}(X_i,n)-\tau_{j}^{DATT*}(\delta,n)\bigr\}\nonumber\\
&+A_{ij}\,\frac{\pi_\delta^*(A_{i(-j)}\mid X_i,n)}{e_{1,j}^*(A_{i(-j)}\mid X_i,n)}\bigl\{\Delta Y_{ij}-m_{1,j}^*(A_{i(-j)},X_i,n)\bigr\}\nonumber\\
&-(1-A_{ij})\,\frac{q_{j}^*(X_i,n)}{1-q_{j}^*(X_i,n)}
\frac{\pi_\delta^*(A_{i(-j)}\mid X_i,n)}
{e_{0,j}^*(A_{i(-j)}\mid X_i,n)}\nonumber\\
&\qquad{}\times\bigl\{\Delta Y_{ij}
-m_{0,j}^*(A_{i(-j)},X_i,n)\bigr\}
+G_{j}^{DATT*}(O_i;\delta,n)\Bigr\},
\label{eq:phiDATT}
\end{align}
\begin{align}
\varphi_{j}^{SATT*}(O_i;\delta,n)
=\frac{1}{p_{j}^*(n)}\Bigl\{
&\;A_{ij}\bigl\{H_j^{SATT*}(X_i,n)-\tau_{j}^{SATT*}(\delta,n)\bigr\}\nonumber\\
&+(1-A_{ij})\,\frac{q_{j}^*(X_i,n)}{1-q_{j}^*(X_i,n)}\,\frac{\pi_\delta^*(A_{i(-j)}\mid X_i,n)}{e_{0,j}^*(A_{i(-j)}\mid X_i,n)}\bigl\{\Delta Y_{ij}-m_{0,j}^*(A_{i(-j)},X_i,n)\bigr\}\nonumber\\
&-(1-A_{ij})\,\one\{A_{i(-j)}=\mathbf 0_{-j}\}\,
\frac{q_{j}^*(X_i,n)}{1-q_{j}^*(X_i,n)}
\frac{1}{e_{0,j}^*(\mathbf 0_{-j}\mid X_i,n)}\nonumber\\
&\qquad{}\times\bigl\{\Delta Y_{ij}
-m_{0,j}^*(\mathbf 0_{-j},X_i,n)\bigr\}
+G_{j}^{SATT*}(O_i;\delta,n)\Bigr\}.
\label{eq:phiSATT}
\end{align}
Each $\varphi_{j}^{\bullet*}$ has size-conditional mean zero. The semiparametric efficiency bound for $\tau^{\bullet}(\delta)$ under $\mathcal{M}_{NP}$ is $\E[\{\varphi^{\bullet*}(O_i;\delta)\}^{2}]$.
\end{thm}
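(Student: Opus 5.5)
We derive the efficient influence function by the pathwise-derivative (Gateaux/score) method. Take a regular one-dimensional parametric submodel $\{\Prob_\epsilon\}$ of $\mathcal M_{NP}$ through the truth at $\epsilon=0$, with score $S(O_i)$. We then show that $\partial_\epsilon\tau^\bullet(\delta;\Prob_\epsilon)|_{\epsilon=0}=\E[\varphi^{\bullet*}(O_i;\delta)S(O_i)]$.

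Because $\mathcal M_{NP}$ imposes the conditional independence \eqref{eq:assignCI}, the likelihood of a cluster factors as
\[
f(N_i)\,f(X_i\mid N_i)\,\prod_{k}q_k(X_i,N_i)^{A_{ik}}\{1-q_k(X_i,N_i)\}^{1-A_{ik}}\,f(Y_{i1},Y_{i2}\mid A_i,X_i,N_i).
\]
The score therefore decomposes orthogonally as $S=S_N+S_{X}+\sum_k S_{A,k}+S_{Y}$, where
\[
S_{A,k}=\frac{A_{ik}-q_k(X_i,N_i)}{q_k(X_i,N_i)\{1-q_k(X_i,N_i)\}}\,\partial_\epsilon q_k .
\]
The tangent space is the closed span of these components. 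Any mean-zero function that is a sum of elements of these four subspaces is automatically in the tangent space. Once we verify that $\varphi^{\bullet*}$ is such a gradient, it is therefore the efficient one, and the bound is $\E[\{\varphi^{\bullet*}\}^2]$.

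The computation proceeds in three layers using the identification result of Proposition~\ref{prop:id}. First, write
\[
\tau^\bullet=\sum_n\nu_n n^{-1}\sum_j\tau_j^\bullet(\delta,n),\qquad \tau_j^\bullet=\frac{\E_n[q_jH_j^\bullet]}{\E_n[q_j]}.
\]
Differentiating $\nu_n$ yields the term $\bar\tau^{\bullet*}(\delta,N_i)-\tau^{\bullet*}(\delta)$ in \eqref{eq:eif}. The remaining size-conditional pieces then have size-$n$-conditional mean zero, so averaging over $N_i$ gives $\bar\varphi^{\bullet*}$.

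Second, for fixed $(j,n)$ we apply the quotient rule to $\tau_j=\Psi_1/\Psi_2$, with $\Psi_1=\E_n[q_jH_j]$ and $\Psi_2=p_j(n)$. This gives $\dot\tau_j=(\dot\Psi_1-\tau_j\dot\Psi_2)/p_j$. The EIF of $\Psi_2$ is $A_{ij}-p_j$. For $\Psi_1$ we perturb each ingredient in turn.

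\begin{itemize}
\item Perturbing the $X$-distribution gives $q_jH_j-\Psi_1$.
\item Perturbing $q_j$ in the weight gives $(A_{ij}-q_j)H_j$. Combined with the previous term and the $-\tau_j(A_{ij}-p_j)$ correction, this produces $A_{ij}\{H_j-\tau_j\}$.
\item Perturbing the regressions $m_{a,j}$ gives the standard weighted residual terms. For $m_{1,j}$ we use the weight
\[
\frac{q_j\pi_\delta(A_{i(-j)})}{\Prob(A_{ij}=1,A_{i(-j)}\mid X)}=\frac{\pi_\delta}{e_{1,j}}\cdot\frac{A_{ij}}{1}.
\]
For $m_{0,j}$ the analogous weight is $\frac{q_j}{1-q_j}\cdot\frac{\pi_\delta}{e_{0,j}}(1-A_{ij})$. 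For the SATT, the term $m_{0,j}(\mathbf 0_{-j})$ enters with total weight $\sum_{a_{-j}}\pi_\delta=1$, which yields the indicator term with $1/e_{0,j}(\mathbf 0_{-j})$.
\item Perturbing $q_{j'}$ with $j'\ne j$ inside $\pi_\delta$ uses the chain rule through \eqref{eq:gprime}. The pathwise derivative of $q_{j'}(x,n)$ has influence function $(A_{ij'}-q_{j'})$ localized at $x$. This gives exactly $G_j^{\bullet*}$ in \eqref{eq:Gdatt}--\eqref{eq:Gsatt}.
\end{itemize}

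Collecting terms and dividing by $p_j^*(n)$ reproduces \eqref{eq:phiDATT} and \eqref{eq:phiSATT}. The size-conditional mean-zero property is checked directly. The residual terms vanish by iterated expectations given $(A_i,X_i,N_i)$. $G_j$ vanishes given $X_i$. The first term has mean $\E_n[q_jH_j]-p_j\tau_j=0$.

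The main obstacle is bookkeeping in the last step. We must make sure that the derivative with respect to $q_j$ of the focal unit, which appears both in the weight $q_j$ and in the denominator $p_j$, is not confused with the policy derivative with respect to the peers' $q_{j'}$. Under \eqref{eq:assignCI}, $\pi_\delta(a_{-j}\mid x,n)$ depends only on $q_{j'}$ for $j'\ne j$, so these two channels separate. We also need that $e_{a,j}$ factorizes under the restriction, so that the IPW weights take the stated form.

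To legitimately move the derivative inside the sums and expectations, we would invoke overlap (iv) and the moment bounds (v). Regularity of the submodels ensures that every quantity is square integrable and the interchanges are valid. Finally, we note that each piece lies in the corresponding tangent subspace: $(A_{ij'}-q_{j'})\times$(function of $X$) lies in the $A$-subspace, and the residual terms lie in the $Y$-subspace. This confirms efficiency.
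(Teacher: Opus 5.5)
Your proposal is correct and follows essentially the same route as the paper's proof. Both arguments use pathwise differentiation along regular submodels of the conditional-independence-restricted model, and both split the numerator derivative into four channels: the covariate law, the focal propensity $q_j$, the policy channel through $g'_{j,a_{-j},j'}$ for $j'\neq j$, and the outcome regressions via inverse-probability representers. Both then apply the quotient rule with denominator gradient $A_{ij}-p_j^*(n)$, check tangent-space membership using the restricted assignment block $\sum_k h_k(X_i)\{A_{ik}-q_k^*(X_i,n)\}$, and aggregate over the size law $\nu_n$ to produce $\bar\tau^{\bullet*}(\delta,N_i)-\tau^{\bullet*}(\delta)$.

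The differences are cosmetic. The paper routes its representer identities through the reduced score $\E_n[S_n(O_i)\mid W_{ij}]$ rather than the full score. Also, your displayed weight for $m_{1,j}$ should carry the indicator $A_{ij}\one\{A_{i(-j)}=a_{-j}\}$ in the numerator before summing over $a_{-j}$; your final form $A_{ij}\pi_\delta/e_{1,j}$ is right.
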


The efficient influence functions in Theorem~\ref{thm:eif} may be viewed as
extensions of the efficient influence functions for the direct and
spillover effects of \citet{lee2025efficient} to the DID design. In particular, the outcome models appearing in their efficient influence functions are replaced by outcome evolution models under the no-anticipation and conditional parallel trends assumptions. Meanwhile, the treated probability, denoted by $p_j^*(n)=\Prob(A_{ij}=1\mid N_i=n)$, and its corresponding propensity score $q_j^*(X_i,n)$ are modified to accommodate the ATT-type targets considered in this paper. There are two main differences between the ATT-type EIF studied in this paper and the ATE-type EIF derived in Corollary 1 of \citet{lee2025efficient}.

First, the leading term
$A_{ij}\bigl\{H_j^{\bullet*}(X_i,n)-\tau_j^{\bullet*}(\delta,n)\bigr\}$ and the
policy-estimation term $G_j^{\bullet*}(O_i;\delta,n)$ of
\eqref{eq:phiDATT}--\eqref{eq:phiSATT} are the counterparts of the first term
of Corollary~1 of \citet{lee2025efficient}, in which the policy weight and
its own influence function appear inside a single sum against a common 
outcome regression. However, such a decomposition is not available in our setting because the two components are weighted differently: the policy average $H_j^{\bullet*}$
enters weighted by the treatment indicator $A_{ij}$, whereas
$G_j^{\bullet*}$ enters weighted by the propensity score $q_j^*(X_i,n)$.
A second difference concerns the size-conditional average
$\bar\tau^{\bullet*}(\delta,N_i)$ in
\eqref{eq:eif}, which has no explicit counterpart in the efficient influence functions of
\citet{lee2025efficient}; again the reason is the treated normalization.
Aggregating \eqref{eq:phiDATT}--\eqref{eq:phiSATT} over units and adding the
size-conditional average in \eqref{eq:eif} merges the two occurrences of
$\tau_j^{\bullet*}(\delta,n)$ into
\[
\tau_j^{\bullet*}(\delta,n)\Bigl\{1-\frac{A_{ij}}{p_j^*(n)}\Bigr\}
=-\frac{\tau_j^{\bullet*}(\delta,n)}{p_j^*(n)}
\bigl\{A_{ij}-p_j^*(n)\bigr\},
\]
which is exactly the contribution of the denominator $p_j^*(n)$.  If the target is an ATE rather than an ATT, the ratio $A_{ij}/p_j^*(n)$ would be replaced by the constant one. In that case, this
contribution would disappear, and $\bar\tau^{\bullet*}(\delta,N_i)$ would cancel
the centering term already present in the unit-level influence functions, leaving
only the global constant $-\tau^{\bullet*}(\delta)$ as in Corollary~1 of
\citet{lee2025efficient}.

If $N_i\equiv1$, then $\mathcal{A}(0)=\{\emptyset\}$, $\pi_\delta^*(\emptyset\mid x,1)=e_{a,1}^*(\emptyset\mid x,1)=1$, the policy-estimation term $G_{j}^{\bullet*}$ has no peer effects and therefore vanishes, $p_{1}^*(1)=\Prob(A_{i1}=1)$, and $\Delta m_{1}^*(\emptyset,x,1)=m_{1,1}^*(x,1)-m_{0,1}^*(x,1)$. Consequently, Equation~\eqref{eq:phiDATT} reduces to
\[
\frac{1}{\Prob(A_{i1}=1)}\Bigl\{A_{i1}\bigl(m_{1,1}^*-m_{0,1}^*-\tau_{1}^{DATT*}\bigr)
+A_{i1}\bigl(\Delta Y_{i1}-m_{1,1}^*\bigr)
-(1-A_{i1})\tfrac{q_{1}^*}{1-q_{1}^*}\bigl(\Delta Y_{i1}-m_{0,1}^*\bigr)\Bigr\},
\]
which coincides with the efficient influence function for the ATT derived in \citet{sant2020doubly}, and $\varphi^{SATT*}\equiv0$. The proof of
Theorem~\ref{thm:eif} is provided in~\ref{app:eif}. In addition, the policy-estimation terms $G_{j}^{\bullet*}$ in
\eqref{eq:Gdatt}--\eqref{eq:Gsatt} vanish when the policy weight is fixed such as under a known
type-B policy.

\begin{rem}[Efficiency under the type-B policy]\label{rem:typeB}
Replace the covariate-specific CIPS weight
$\pi_\delta(a_{-j}\mid X_i,n)$ by the fixed type-B weight, for
$\alpha\in[0,1]$,
\[
\pi_B(a_{-j}\mid n;\alpha)
=\prod_{j'\neq j}\alpha^{a_{j'}}(1-\alpha)^{1-a_{j'}}.
\]
Similar to Theorem~\ref{thm:eif}, for $\bullet\in\{DATT,SATT\}$, define
\[
\begin{aligned}
H_{B,j}^{DATT*}(x,n)
&=\sum_{a_{-j}}\pi_B(a_{-j}\mid n;\alpha)
\Delta m_j^*(a_{-j},x,n),\\
H_{B,j}^{SATT*}(x,n)
&=\sum_{a_{-j}}\pi_B(a_{-j}\mid n;\alpha)
\{m_{0,j}^*(a_{-j},x,n)-m_{0,j}^*(\mathbf 0_{-j},x,n)\},\\
\tau_{B,j}^{\bullet*}(\alpha,n)
&=\frac{\E_n[q_j^*(X_i,n)H_{B,j}^{\bullet*}(X_i,n)]}
{p_j^*(n)},\\
\bar\tau_B^{\bullet*}(\alpha,n)
&=\frac1n\sum_{j=1}^n\tau_{B,j}^{\bullet*}(\alpha,n),\qquad
\tau_B^{\bullet*}(\alpha)
=\sum_{n\le n_{\max}}\nu_n\bar\tau_B^{\bullet*}(\alpha,n).
\end{aligned}
\]
Because $\pi_B$ does not depend on covariates, we have $\partial\pi_B/\partial q_{j'}=0$, implying that there is  no policy-estimation term.  The unit-level efficient influence functions consequently become
\begin{align}
\varphi_{B,j}^{DATT*}(O_i;\alpha,n)
={}&\frac{1}{p_j^*(n)}\Bigl\{
A_{ij}\{H_{B,j}^{DATT*}(X_i,n)
-\tau_{B,j}^{DATT*}(\alpha,n)\}\nonumber\\
&+A_{ij}\frac{\pi_B(A_{i(-j)}\mid n;\alpha)}
{e_{1,j}^*(A_{i(-j)}\mid X_i,n)}
\{\Delta Y_{ij}-m_{1,j}^*(A_{i(-j)},X_i,n)\}\nonumber\\
&-(1-A_{ij})\frac{q_j^*(X_i,n)}{1-q_j^*(X_i,n)}
\frac{\pi_B(A_{i(-j)}\mid n;\alpha)}
{e_{0,j}^*(A_{i(-j)}\mid X_i,n)}
\{\Delta Y_{ij}-m_{0,j}^*(A_{i(-j)},X_i,n)\}
\Bigr\},
\label{eq:phiBDATT}
\end{align}
and
\begin{align}
\varphi_{B,j}^{SATT*}(O_i;\alpha,n)
={}&\frac{1}{p_j^*(n)}\Bigl\{
A_{ij}\{H_{B,j}^{SATT*}(X_i,n)
-\tau_{B,j}^{SATT*}(\alpha,n)\}\nonumber\\
&+(1-A_{ij})\frac{q_j^*(X_i,n)}{1-q_j^*(X_i,n)}
\frac{\pi_B(A_{i(-j)}\mid n;\alpha)}
{e_{0,j}^*(A_{i(-j)}\mid X_i,n)}
\{\Delta Y_{ij}-m_{0,j}^*(A_{i(-j)},X_i,n)\}\nonumber\\
&-(1-A_{ij})\one\{A_{i(-j)}=\mathbf 0_{-j}\}
\frac{q_j^*(X_i,n)}{1-q_j^*(X_i,n)}
\frac{\Delta Y_{ij}-m_{0,j}^*(\mathbf 0_{-j},X_i,n)}
{e_{0,j}^*(\mathbf 0_{-j}\mid X_i,n)}
\Bigr\}.
\label{eq:phiBSATT}
\end{align}
The global efficient influence function is
\begin{equation}\label{eq:eifB}
\varphi_B^{\bullet*}(O_i;\alpha)
=\frac1{N_i}\sum_{j=1}^{N_i}
\varphi_{B,j}^{\bullet*}(O_i;\alpha,N_i)
+\bar\tau_B^{\bullet*}(\alpha,N_i)
-\tau_B^{\bullet*}(\alpha).
\end{equation}
Its variance is the semiparametric efficiency bound for the corresponding
type-B estimand under $\mathcal M_{NP}$.  The type-B efficient influence functions
\eqref{eq:phiBDATT}--\eqref{eq:eifB} are the ATT counterpart of
Theorem~1 of \citet{park2022efficient}.  The second term of their EIF is
exactly $\bar\tau_B^{\bullet*}(\alpha,N_i)-\tau_B^{\bullet*}(\alpha)$ in
\eqref{eq:eifB}. A proof is given in
\ref{app:typeB}.
\end{rem}

\subsection{Estimation and asymptotic properties}
\label{sec:estimation}

Having characterized the efficient influence functions
and their type-B counterparts, we now translate those representations into
feasible cross-fitted estimators and inference procedures.
Collect the nuisance functions as
$\eta=(q_j,e_{a,j},m_{a,j},p_j,\tau_j^\bullet)$.
Under \eqref{eq:assignCI}, $e_{a,j}$ is the derived product displayed above
and therefore need not be fitted separately; we retain the notation to show
the policy-to-observed peer-density ratios and to allow either plug-in
construction from $\widehat q$ or separate consistent estimation.
The shifted propensities $q_{j,\delta}$, policy probabilities $\pi_\delta$,
policy derivatives $g_{j,a_{-j},j'}^{\prime}$, and terms $G_j^\bullet$ are explicit
functionals of $(q_j,m_{a,j})$.  For a generic nuisance value $\eta$, let
$\varphi_j^\bullet(O_i;\delta,n,\eta)$ denote
\eqref{eq:phiDATT} or \eqref{eq:phiSATT} with every starred quantity replaced
by its component of $\eta$, and define the uncentered efficient influence
function
\begin{equation}\label{eq:momentfn}
\begin{aligned}
\Gamma^\bullet(O_i;\delta,\eta)
&=\varphi^\bullet(O_i;\delta,\eta)
+\tau^\bullet(\delta;\eta)\\
&=\frac1{N_i}\sum_{j=1}^{N_i}
\left\{\tau_j^\bullet(\delta,N_i;\eta)
+\varphi_j^\bullet(O_i;\delta,N_i,\eta)\right\}.
\end{aligned}
\end{equation}
At the truth,
$\E[\Gamma^\bullet(O_i;\delta,\eta^*)]
=\tau^\bullet(\delta)$ by Theorem~\ref{thm:eif}.
Randomly partition the $M$ clusters into a fixed number $K$ of folds
$\mathcal I_1,\ldots,\mathcal I_K$.  For each $k$, use the full training
complement $\mathcal I_k^c$ both to fit the regression nuisances
$(q_j,e_{a,j},m_{a,j})$ and to form the derived quantities below.  Thus all
components of $\widehat\eta^{(-k)}$ are functions of $\mathcal I_k^c$ and
are independent of observations in the fold $\mathcal I_k$.
Writing $\widehat\E_n^{(-k)}$ for the empirical mean over all size-$n$ clusters in
$\mathcal I_k^c$, define $\widehat p_j^{(-k)}(n)
=\widehat\E_n^{(-k)}[A_{ij}]$ and
\begin{equation}\label{eq:derivednuis}
\begin{aligned}
\widehat{\tau}_j^{DATT,(-k)}(\delta,n)
&=\frac{\widehat\E_n^{(-k)}\!\left[
\widehat q_j^{(-k)}(X_i,n)\sum_{a_{-j}}\widehat\pi_\delta^{(-k)}(a_{-j}\mid X_i,n)
\{\widehat m_{1,j}^{(-k)}(a_{-j},X_i,n)-\widehat m_{0,j}^{(-k)}(a_{-j},X_i,n)\}\right]}
{\widehat p_j^{(-k)}(n)},\\
\widehat{\tau}_j^{SATT,(-k)}(\delta,n)
&=\frac{\widehat\E_n^{(-k)}\!\left[
\widehat q_j^{(-k)}(X_i,n)\sum_{a_{-j}}\widehat\pi_\delta^{(-k)}(a_{-j}\mid X_i,n)
\{\widehat m_{0,j}^{(-k)}(a_{-j},X_i,n)-\widehat m_{0,j}^{(-k)}(\mathbf 0_{-j},X_i,n)\}\right]}
{\widehat p_j^{(-k)}(n)}.
\end{aligned}
\end{equation}

In implementation, the probability estimates may be truncated to the
intervals specified in Assumption~\ref{asm:2}(i).  The cross-fitted one-step
estimator is
\begin{equation}\label{eq:tauhat}
\widehat{\tau}^{\bullet}(\delta)
=\frac1M\sum_{k=1}^K\sum_{i\in\mathcal I_k}
\Gamma^\bullet\bigl(O_i;\delta,\widehat\eta^{(-k)}\bigr),
\qquad \bullet\in\{DATT,SATT\}.
\end{equation}

For a function of $X_i$ within size-$n$ clusters, write
$\|f\|_{2,n}=\{\E_n[f(X_i)^2]\}^{1/2}$ and set
\[
\varepsilon_q=\max_{k,j,n}\|\widehat q_j^{(-k)}-q_j^*\|_{2,n},\qquad
\varepsilon_e=\max_{k,a,j,a_{-j},n}
\|\widehat e_{a,j}^{(-k)}(a_{-j}\mid\cdot,n)-e_{a,j}^*(a_{-j}\mid\cdot,n)\|_{2,n},
\]
\[
\varepsilon_m=\max_{k,a,j,a_{-j},n}
\|\widehat m_{a,j}^{(-k)}(a_{-j},\cdot,n)-m_{a,j}^*(a_{-j},\cdot,n)\|_{2,n}.
\]
We next impose conditions on the nuisance estimators that are sufficient for
asymptotic normality and efficiency of the cross-fitted estimator. 

\begin{asm}[Nuisance estimation]\label{asm:2}
For every fold, the following conditions hold.
\begin{enumerate}[label=(\roman*)]
\item (Boundedness) There are constants $c'>0$ and $C<\infty$ such that
$c'\le\widehat q_j^{(-k)}\le1-c'$,
$\widehat e_{a,j}^{(-k)}\ge c'$,
$\widehat p_j^{(-k)}(n)\ge c'$, $|\widehat m_{a,j}^{(-k)}|\le C$, and
$|\widehat{\tau}_j^{\bullet,(-k)}(\delta,n)|\le C$ almost surely.  (The
corresponding population bound on
$\E[\Delta Y_{ij}^2\mid A_i,X_i,N_i]$ is part of
Assumption~\ref{asm:1}(v).)
\item (Consistency)
$\varepsilon_q+\varepsilon_e+\varepsilon_m=o_p(1)$.
\item (Nuisance convergence) The regression nuisances satisfy
$\varepsilon_q^2+(\varepsilon_q+\varepsilon_e)\varepsilon_m
=o_p(M^{-1/2})$,
and, uniformly over $k$, $\bullet\in\{DATT,SATT\}$, $j$, and $n$, the
difference between the empirical and the population average of the fitted
numerator integrand in \eqref{eq:derivednuis} is $o_p(1)$.
\end{enumerate}
\end{asm}
Assumption~\ref{asm:2} can be viewed as the ATT counterpart of
conditions (B1)--(B8) of \citet{lee2025efficient}. The first part of condition (iii) holds, for example, if all regression
nuisances converge faster than $M^{-1/4}$.  The isolated $\varepsilon_q^2$
term is generated by estimating the CIPS policy and cannot be traded against
outcome-regression accuracy; by contrast, the remaining products allow the
usual assignment--outcome rate trade-off.  The second part is a training-sample stability condition. Together with condition (ii), it implies the consistency of each $\widehat{\tau}_j^{\bullet,(-k)}$.

\begin{thm}[Asymptotic normality and local efficiency]\label{thm:asymp}
Suppose Assumptions~\ref{asm:1} and~\ref{asm:2} hold and $K$ is fixed.  Then,
for each fixed $\delta\in(0,\infty)$ and
$\bullet\in\{DATT,SATT\}$,
\[
\sqrt M\{\widehat{\tau}^{\bullet}(\delta)
-\tau^{\bullet}(\delta)\}
=\frac1{\sqrt M}\sum_{i=1}^M
\varphi^{\bullet*}(O_i;\delta)+o_p(1)
\xrightarrow{d}\mathcal N(0,\sigma_\bullet^2),
\]
where
$\sigma_\bullet^2
=\E[\{\varphi^{\bullet*}(O_i;\delta)\}^2]$, assumed positive.
Thus the estimator attains the efficiency bound in Theorem~\ref{thm:eif}.
Moreover,
\begin{equation}\label{eq:Vhat}
\widehat V^{\bullet}(\delta)
=\frac1M\sum_{k=1}^K\sum_{i\in\mathcal I_k}
\left\{\Gamma^\bullet(O_i;\delta,\widehat\eta^{(-k)})
-\widehat{\tau}^{\bullet}(\delta)\right\}^2
\end{equation}
satisfies
$\widehat V^{\bullet}(\delta)\to_p\sigma_\bullet^2$.
Consequently,
\[
\widehat{\tau}^{\bullet}(\delta)
\ \pm\ z_{1-\gamma/2}
\sqrt{\widehat V^{\bullet}(\delta)/M}
\]
is an asymptotically valid $(1-\gamma)$ confidence interval.
\end{thm}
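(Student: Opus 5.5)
We follow the standard cross-fitting argument for one-step estimators, adapted to the cluster-level data and to the ratio structure of the ATT normalization. For each fold $k$, write $\widehat\eta=\widehat\eta^{(-k)}$ and decompose
\[
\frac{1}{|\mathcal I_k|}\sum_{i\in\mathcal I_k}\Gamma^\bullet(O_i;\delta,\widehat\eta)-\tau^\bullet(\delta)
=(\mathbb P_k-\Prob)\Gamma^\bullet(\cdot;\eta^*)
+(\mathbb P_k-\Prob)\{\Gamma^\bullet(\cdot;\widehat\eta)-\Gamma^\bullet(\cdot;\eta^*)\}
+\bigl\{\Prob\Gamma^\bullet(\cdot;\widehat\eta)-\tau^\bullet(\delta)\bigr\},
\]
where $\mathbb P_k$ is the empirical measure on $\mathcal I_k$. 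By Theorem~\ref{thm:eif}, $\Gamma^\bullet(O_i;\eta^*)-\tau^\bullet(\delta)=\varphi^{\bullet*}(O_i;\delta)$. Averaging over folds with weights $|\mathcal I_k|/M\to1/K$, the first term therefore gives the stated linear representation. The CLT then applies to the i.i.d.\ clusters, since $\varphi^{\bullet*}$ has finite variance by Assumption~\ref{asm:1}(iv)--(v) and $N_i\le n_{\max}$.

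The empirical-process term is handled conditionally on $\mathcal I_k^c$. The summands are i.i.d.\ with conditional mean equal to the centering, so Chebyshev's inequality reduces this term to showing $\|\Gamma^\bullet(\cdot;\widehat\eta)-\Gamma^\bullet(\cdot;\eta^*)\|_{L_2(\Prob)}=o_p(1)$. This follows from Assumption~\ref{asm:2}(i)--(ii) and the bounded conditional variance $C_Y$. Every term of $\varphi_j^\bullet$ is a product of uniformly bounded factors (inverse $\widehat p_j$, $\widehat e$, $1-\widehat q$ bounded below) with $\Delta Y_{ij}$ or $\widehat m$. The maps $q\mapsto q_\delta,\pi_\delta,g'$ are Lipschitz on $[c',1-c']$. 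The sums over $a_{-j}$ and $j$ are finite. Hence the $L_2$ distance is bounded by $C(\varepsilon_q+\varepsilon_e+\varepsilon_m+\max|\widehat\tau_j-\tau_j^*|+\max|\widehat p_j-p_j^*|)$. The last two quantities are $o_p(1)$ by the second part of Assumption~\ref{asm:2}(iii), Assumption~\ref{asm:2}(ii), and the LLN.

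The main obstacle is the second-order remainder $R_k=\Prob\Gamma^\bullet(\cdot;\widehat\eta)-\tau^\bullet(\delta)$. We would compute it unit by unit, conditioning on $(A_i,X_i,N_i)$ and then on $(X_i,N_i)$, using the conditional independence \eqref{eq:assignCI}.

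\textbf{Residual terms.} The $A_{ij}$-weighted residual term produces $q_j^*\sum_{a_{-j}}\widehat\pi_\delta\,(e^*_{1,j}/\widehat e_{1,j})(m^*_{1,j}-\widehat m_{1,j})$. The control residual term produces $-\widehat q_j/(1-\widehat q_j)\,(1-q_j^*)\sum\widehat\pi_\delta\,(e^*_{0,j}/\widehat e_{0,j})(m^*_{0,j}-\widehat m_{0,j})$.

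\textbf{Leading term.} This contributes $q_j^*\widehat H_j-p_j^*\widehat\tau_j$.

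\textbf{Policy term.} The policy term $\widehat G_j$ has conditional mean $\widehat q_j\sum_{a_{-j}}\widehat{\Delta m}\sum_{j'}\widehat g'_{j'}(q^*_{j'}-\widehat q_{j'})$. By a second-order Taylor expansion of $\pi_\delta$ in $q$, this is $-\widehat q_j\sum\widehat{\Delta m}(\widehat\pi_\delta-\pi^*_\delta)+O(|\widehat q-q^*|^2)$.

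\textbf{Collecting terms.} Adding the units and the $\bar\tau$ correction, the first-order pieces cancel against $\E_n[q_j^*H_j^*]=p_j^*\tau_j^*$. What is left is a sum of products: $(\widehat q-q^*)^2$ terms with bounded coefficients, $(\widehat q-q^*)(\widehat m-m^*)$ and $(\widehat e-e^*)(\widehat m-m^*)$ cross terms, and a term $(\widehat p_j-p_j^*)(\widehat\tau_j-\tau_j^*)/\widehat p_j$ coming from the ratio normalization. The mismatch between the weights $\widehat q_j$ and $q_j^*$ multiplying $\widehat{\Delta m}$ versus $\Delta m^*$ is itself a $(\widehat q-q^*)(\widehat m-m^*)$ product after adding and subtracting $q_j^*\widehat\pi\Delta m^*$. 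We expect this bookkeeping to be the delicate step.

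\textbf{Bounding the remainder.} By Cauchy--Schwarz, $|R_k|\lesssim\varepsilon_q^2+(\varepsilon_q+\varepsilon_e)\varepsilon_m+|\widehat p_j-p_j^*|\,|\widehat\tau_j-\tau_j^*|$, where the last product arises only if the normalization bookkeeping leaves it. The first part is $o_p(M^{-1/2})$ by Assumption~\ref{asm:2}(iii). If the last product does not cancel exactly, we would instead show that the $\widehat\tau_j,\widehat p_j$ contributions enter $\Gamma^\bullet$ linearly through $\widehat\tau_j(1-A_{ij}/\widehat p_j)$, whose mean is $\widehat\tau_j(\widehat p_j-p_j^*)/\widehat p_j$. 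The paired term from $q_j^*\widehat H_j/\widehat p_j$ then reproduces exactly $\tau$-estimation error times $p$-error, which is $O_p(M^{-1/2})\cdot o_p(1)$.

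\textbf{Variance estimator.} Consistency of $\widehat V^\bullet$ follows by expanding the square. We use the $L_2$ convergence of $\Gamma^\bullet(\cdot;\widehat\eta)$ established above, $\widehat\tau^\bullet\to_p\tau^\bullet$, and the LLN applied to $\{\varphi^{\bullet*}\}^2$. Slutsky's lemma then gives validity of the confidence interval.
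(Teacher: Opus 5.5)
Your proposal is correct and follows essentially the same route as the paper. The paper uses the same three-term cross-fitting decomposition, handles the middle term by conditional Chebyshev with a Lipschitz $L_2$ bound on $\Gamma^\bullet(\cdot;\widehat\eta)-\Gamma^\bullet(\cdot;\eta^*)$ and consistency of $\widehat p_j,\widehat\tau_j$, and handles the conditional bias by the exact per-unit identity combining a normalization product, outcome-residual products and a Taylor-expanded policy remainder. Your fallback is in fact what happens: the product $(\widehat p_j-p_j^*)(\widehat\tau_j-\tau_j^*)/\widehat p_j$ does not cancel, and the paper disposes of it as $O_p(M^{-1/2})\,o_p(1)$ exactly as you propose.
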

Theorem~\ref{thm:asymp} establishes that the
cross-fitted CIPS estimator is asymptotically linear, attains the
semiparametric efficiency bound, and admits consistent variance
estimation.  A proof is provided in~\ref{app:asymp}. We next contrast these properties with the robustness
available under the fixed type-B policy.

\begin{rem}[Double robustness under the type-B policy]\label{rem:typeBdr}
Because the type-B policy weight $\pi_B$ is fixed rather than a functional
of the observed treatment law,
$\partial\pi_B/\partial q_{j'}=0$ and the additional policy term
$G_j^\bullet$ is absent.  Consequently, its unit-level efficient influence
function in Remark~\ref{rem:typeB} has the augmented
inverse-probability-weighted (AIPW) form
\[
\varphi_{B,j}^{\bullet*}(O_i;\alpha,n)
=\frac{1}{p_j^*(n)}
\left[
A_{ij}\{H_{B,j}^{\bullet*}(X_i,n)
-\tau_{B,j}^{\bullet*}(\alpha,n)\}
+R_{B,Y,j}^{\bullet}(O_i)
\right],
\]
where $R_{B,Y,j}^{\bullet}$ is an outcome residual correction constructed using inverse probability weighting, given explicitly in
\eqref{eq:phiBDATT}--\eqref{eq:phiBSATT}.  The corresponding conditional
bias, apart from the asymptotically negligible empirical normalization
term, therefore contains only products of assignment errors and
outcome regression errors.  If the outcome regressions are correctly specified, the weighted residual corrections have conditional mean zero; if the assignment nuisance functions $(q_j,e_{a,j})$ are correctly specified, inverse-probability weighting cancels errors in the outcome-regression component. Hence, the type-B estimator is consistent if either nuisance family is correctly specified and is therefore doubly robust. The type-B DID estimators for the DATT and SATT, together with their asymptotic properties, are provided in~\ref{app:typeB}.
\end{rem}

\section{Simulation}
\label{sec:simulation}

We conduct a Monte Carlo study to assess the finite-sample performance of
the CIPS estimators in \eqref{eq:tauhat} and the type-B estimators presented in the Appendix.  The design reports all four combinations of policy and
causal contrast: CIPS DATT, CIPS SATT, type-B DATT, and type-B SATT. We generate $M\in\{1000,2000\}$ independent clusters.  Cluster size
$N_i\in\{1,2,3\}$ is drawn from
$\nu=(\nu_1,\nu_2,\nu_3)=(0.30,0.40,0.30)$, and the member covariates are
independent, with $X_{ij}\sim\mathcal N(0,1)$.  Let
$Z_i=\sum_{k=1}^{N_i}X_{ik}$.  Conditional on $(X_i,N_i)$, treatments are
independent within a cluster and
\begin{equation*}
 A_{ij}\mid X_i,N_i=n
 \sim\operatorname{Bernoulli}\{q_j(Z_i,n)\},\qquad
 q_j(z,n)=\operatorname{expit}
 \{a_0+a_N(n-1)+a_J(j-1)+a_Xz\},
\end{equation*}
with $(a_0,a_N,a_J,a_X)=(-0.20,0.10,0.15,0.35)$.  This assignment
mechanism satisfies the conditional independence restriction
\eqref{eq:assignCI}, while the common dependence on $Z_i$ allows the
members' treatments to remain marginally dependent.
Define the position-specific spillover coefficient
$b_{s,k}=b_{s0}+b_{s1}(k-1)$.  The observed outcome evolution is generated as
\begin{equation*}
\Delta Y_{ij}
=b_0+b_AA_{ij}
+\sum_{k\ne j}b_{s,k}A_{ik}
+gZ_i+b_J(j-1)+b_N(N_i-1)+\varepsilon_{ij},
\qquad
\varepsilon_{ij}\sim\mathcal N(0,\sigma^2),
\end{equation*}
where
$(b_0,b_A,b_{s0},b_{s1},g,b_J,b_N,\sigma)
=(0,0.30,0.10,0.04,0.30,0.05,0.05,1)$.
 The conditional DATT and SATT are
$\theta_j^{DATT}(a_{-j},x,n)=b_A, \ 
\theta_j^{SATT}(a_{-j},x,n)
=\sum_{k\ne j}b_{s,k}a_k$.
Let $f_n$ denote the density of $\mathcal N(0,n)$,
$q_{k,\delta}(z,n)=\delta q_k(z,n)/
\{\delta q_k(z,n)+1-q_k(z,n)\}$, and
$p_j(n)=\int q_j(z,n)f_n(z)\,dz$,  
$r_{jk,\delta}(n)
=\int q_j(z,n)q_{k,\delta}(z,n)f_n(z)\,dz / p_j(n)$.
The factor $r_{jk,\delta}(n)$ is the CIPS treatment probability of peer
$k$, averaged over the covariate distribution of treated focal units
$j$.  In particular, it retains the pointwise product
$q_j(X_i,n)q_{k,\delta}(X_i,n)$ required by
\eqref{eq:idtauSATT}.  The four super-population truths are
\begin{align*}
\tau_{\mathrm{CIPS}}^{DATT}(\delta)
&=b_A,\
\tau_{\mathrm{CIPS}}^{SATT}(\delta)
=\sum_n\nu_n\frac1n\sum_{j=1}^n
  \sum_{k\ne j}b_{s,k}r_{jk,\delta}(n),\label{eq:sim-truth-cips}\\
\tau_B^{DATT}(\alpha)
&=b_A,\
\tau_B^{SATT}(\alpha)
=\sum_n\nu_n\frac1n\sum_{j=1}^n
  \sum_{k\ne j}b_{s,k}\alpha.
\end{align*}
We use $\delta\in\{0.5,1,2\}$ for CIPS and
$\alpha\in\{0.25,0.5,0.75\}$ for type-B.
The regression nuisances are estimated by correctly specified parametric
working models.  The propensity model is logistic in the ordered unit
label, cluster size, and complete ordered cluster covariate vector; the
outcome model is linear in the focal treatment, complete ordered peer
treatment vector, ordered unit label, cluster size, and complete ordered
cluster covariate vector.  Both models are fitted within $K=5$
cluster-level cross-fitting folds.
Estimation uses base \textsf{R} only, with \texttt{glm} and \texttt{lm} and no
data-adaptive components.  The estimated probabilities are truncated at $c'=0.05$ following Assumption~\ref{asm:2}(i). Specifically, $\widehat q_j^{(-k)}$ is clipped to the interval $[0.05,0.95]$, while the peer densities $\widehat e_{a,j}^{(-k)}$ and the treated marginal probabilities $\widehat p_j^{(-k)}(n)$ are bounded below by $0.05$. No clusters or units are removed, so the estimand remains unchanged.
Across the $D=2000$
replications, we calculate the bias, root mean squared error (RMSE),
empirical standard deviation (SD), average analytic standard error
($\mathrm{SE}$), and coverage of the nominal $95\%$ Wald confidence
interval.

\begin{table}[ht]
\centering
\small
\setlength{\tabcolsep}{4.2pt}
\caption{Monte Carlo results for the CIPS and type-B DATT and SATT estimators.}
\label{tab:sim-new}
\begin{tabular}{cc ccccc ccccc}
\toprule
 & & \multicolumn{5}{c}{$M=1000$} & \multicolumn{5}{c}{$M=2000$} \\
\cmidrule(lr){3-7}\cmidrule(lr){8-12}
Policy parameter & Truth & Bias & RMSE & SD & $\mathrm{SE}$ & Cov.
& Bias & RMSE & SD & $\mathrm{SE}$ & Cov. \\
\midrule
\multicolumn{12}{l}{\emph{Panel A. CIPS direct effect}}\\
\addlinespace
$0.50$ & $0.300$ & $\phantom{-}0.000$ & $0.055$ & $0.055$ & $0.055$ & $0.950$
& $\phantom{-}0.000$ & $0.039$ & $0.039$ & $0.038$ & $0.948$ \\
$1.00$ & $0.300$ & $\phantom{-}0.000$ & $0.053$ & $0.053$ & $0.052$ & $0.949$
& $\phantom{-}0.000$ & $0.037$ & $0.037$ & $0.037$ & $0.948$ \\
$2.00$ & $0.300$ & $-0.001$ & $0.055$ & $0.055$ & $0.054$ & $0.948$
& $-0.001$ & $0.038$ & $0.038$ & $0.038$ & $0.949$ \\
\addlinespace
\multicolumn{12}{l}{\emph{Panel B. CIPS spillover effect}}\\
\addlinespace
$0.50$ & $0.053$ & $-0.001$ & $0.055$ & $0.055$ & $0.047$ & $0.956$
& $\phantom{-}0.001$ & $0.037$ & $0.037$ & $0.034$ & $0.953$ \\
$1.00$ & $0.074$ & $-0.001$ & $0.063$ & $0.063$ & $0.057$ & $0.949$
& $\phantom{-}0.001$ & $0.043$ & $0.043$ & $0.040$ & $0.954$ \\
$2.00$ & $0.093$ & $\phantom{-}0.000$ & $0.070$ & $0.070$ & $0.064$ & $0.949$
& $\phantom{-}0.001$ & $0.048$ & $0.048$ & $0.045$ & $0.952$ \\
\addlinespace
\multicolumn{12}{l}{\emph{Panel C. Type-B direct effect}}\\
\addlinespace
$0.25$ & $0.300$ & $\phantom{-}0.000$ & $0.066$ & $0.066$ & $0.066$ & $0.948$
& $\phantom{-}0.000$ & $0.047$ & $0.048$ & $0.046$ & $0.943$ \\
$0.50$ & $0.300$ & $\phantom{-}0.000$ & $0.056$ & $0.056$ & $0.055$ & $0.952$
& $\phantom{-}0.000$ & $0.039$ & $0.039$ & $0.039$ & $0.950$ \\
$0.75$ & $0.300$ & $-0.001$ & $0.057$ & $0.057$ & $0.057$ & $0.949$
& $-0.001$ & $0.040$ & $0.040$ & $0.040$ & $0.947$ \\
\addlinespace
\multicolumn{12}{l}{\emph{Panel D. Type-B spillover effect}}\\
\addlinespace
$0.25$ & $0.033$ & $-0.001$ & $0.031$ & $0.031$ & $0.028$ & $0.952$
& $\phantom{-}0.000$ & $0.021$ & $0.021$ & $0.020$ & $0.960$ \\
$0.50$ & $0.066$ & $-0.001$ & $0.054$ & $0.054$ & $0.049$ & $0.949$
& $\phantom{-}0.001$ & $0.036$ & $0.036$ & $0.034$ & $0.954$ \\
$0.75$ & $0.099$ & $\phantom{-}0.000$ & $0.069$ & $0.069$ & $0.064$ & $0.946$
& $\phantom{-}0.001$ & $0.047$ & $0.047$ & $0.045$ & $0.953$ \\
\bottomrule
\end{tabular}

\vspace{2pt}
\begin{minipage}{0.96\textwidth}
\footnotesize
\emph{Notes:} Bias is the Monte Carlo mean of
$\widehat\tau-\tau$; RMSE is its root mean squared error; SD is the
empirical standard deviation; $\mathrm{SE}$ is the average analytic
standard error; and Cov.\ denotes the empirical coverage of nominal $95\%$ Wald intervals. Results use
$2000$ replications and $K=5$ cluster-level cross-fitting folds.
\end{minipage}
\end{table}

Three features of Table~\ref{tab:sim-new} are noteworthy.  First, all four
estimators have small finite-sample bias: the absolute bias is at
most $0.0012$ over all policy parameters and sample sizes.  Second, RMSE and SD
decrease substantially when $M$ rises from $1000$ to $2000$, consistent
with the root-$M$ behavior in Theorem~\ref{thm:asymp}.  For the
direct effects, SD and SE agree closely at
both sample sizes.  For the spillover
effects, the average standard error is somewhat smaller than SD at $M=1000$,
with ratios between $0.87$ and $0.92$; the gap narrows to between $0.91$ and
$0.95$ at $M=2000$, as the rare complete peer configurations become better
represented.  Third, the empirical coverage of all four estimators ranges from $0.943$ to
$0.960$ and is therefore close to the nominal $95\%$.
Together, these results support the asymptotic linearity and variance
conclusions for both the estimated CIPS policy, whose EIF contains
$G_j^\bullet$, and the fixed type-B policy, whose EIF omits that term, as presented
in Remark~\ref{rem:typeBdr} and~\ref{app:typeB}.

\section{Application: social pensions and intra-household spillovers}
\label{sec:application}

We illustrate the estimator in the setting of \citet{huang2021power}, who study
China's New Rural Pension Scheme (NRPS), a large noncontributory social pension
that was rolled out county by county from September $2009$ and was nearly universal by the end
of $2012$.  Once a county was covered, every rural resident aged $16$ and over
could enroll voluntarily, and enrollees aged $60$ and over received a basic
pension of $55$ yuan per month regardless of previous earnings or income. 
\citet{huang2021power} regress the outcome on the county coverage indicator
$NRPS_{ct}$ with county and year fixed effects, and report, among the
age-eligible rural elderly, a reduction in farmwork participation of $3.6$
percentage points.

This setting is a natural application of the theory in
Sections~\ref{sec:setup}--\ref{sec:efficiency} because
\citet{huang2021power} mention the possibility of spillover effects, while their
main analysis reports an intention-to-treat parameter defined by county
coverage.  We take that broader spillover concern as motivation for a
complementary household-level decomposition, without requiring the county-level
estimand to identify either component separately.  Treating the household as
the cluster, $\tau^{DATT}(\delta)$ is the effect of a member's own pension
receipt on that member, and $\tau^{SATT}(\delta)$ is the effect on a member of a
co-resident's receipt with the member's own receipt held at non-receipt.

Following \citet{huang2021power}, we use the China Health and Retirement
Longitudinal Study (CHARLS), waves $2011$ (pre) and $2013$ (post), so that
treatment is realized only in the second period.  To align the application with the no-anticipation condition, we drop $40$ of the $150$ CHARLS counties whose rollout year is $2010$ or earlier, since they are covered in both waves; the $110$ retained counties are untreated at the $2011$ baseline.  Each household is a cluster ($i=1,\dots,M$) and
each co-resident rural-\emph{hukou} member aged $60$ and over is a unit
($j=1,\dots,N_i$), with households drawn i.i.d.\ and within-household
interference unrestricted.
The age restriction is imposed because the basic pension is available only to individuals aged $60$ and above; therefore, members below this age are not eligible to receive the treatment. This restriction is also adopted in \citet{huang2021power}. The
treatment $A_{ij}=1$ indicates that member $j$ is receiving an NRPS pension by
$2013$, and the outcome is participation in farmwork, the dependent variable in
column~$2$ of Table~$2$ in \citet{huang2021power}. The analysis sample has
$3{,}217$ members in $2{,}335$ households, of whom $69.5\%$ are receiving a
pension by $2013$; $882$ households contribute two members and the remainder one,
so the cluster size $N_i$ is a bounded random variable entering \eqref{eq:tauhat}
directly and the size law $\nu_n$ is its empirical distribution.

The covariate vector $X_i$ stacks the baseline ($2011$) covariates of every
household member: gender, age, age$^2$, and dummies for primary school and for
junior high or above, together with an indicator for a low-income county that controls for systematic county-level differences relevant to pension take-up and farmwork.
The nuisance functions $q_j$ and $m_{a,j}$ are estimated via a super learner \citep{vanderlaan2007super}, which constructs an ensemble estimator from a library of parametric and data-adaptive algorithms.  The library consists of \texttt{SL.glm},
\texttt{SL.glmnet}, \texttt{SL.earth}, \texttt{SL.gam}, \texttt{SL.xgboost},
\texttt{SL.ranger} and \texttt{SL.nnet}, together with the marginal-mean learner
\texttt{SL.mean}, and is implemented with the \texttt{SuperLearner} package for
\textsf{R} \citep{polley2024superlearner}.  Ensemble weights are chosen by five-fold internal cross-validation,
and both the internal folds and the $K=5$ outer cross-fitting folds are
assigned at the household level, so that no household is split across folds.
Estimated probabilities are truncated at $c'=0.05$ as in
Section~\ref{sec:simulation}.
Figure~\ref{fig:farmwork} reports $\widehat\tau^{DATT}(\delta)$ and
$\widehat\tau^{SATT}(\delta)$ on a grid of $41$ values of $\delta\in[0.5,2]$,
log-spaced about the factual policy $\delta=1$, with $95\%$ confidence intervals
from \eqref{eq:Vhat}.  At $\delta=1$ the direct effect is
$\widehat\tau^{DATT}(1)=-0.061$ ($\mathrm{SE}=0.021$, $95\%$ CI
$[-0.101,\,-0.020]$), and the interval excludes zero at every point of the grid.
A member who receives the pension is thus about $6.1$ percentage points less
likely to do farmwork, against a baseline participation rate of $0.571$ in
$2011$.  This is close to the $-0.036$ ($0.018$) that
\citet{huang2021power} report, and like their estimate, it is significant: receiving the
pension lowers the probability of doing farmwork.  The two coefficients are not
the same estimand.  Theirs is an intention-to-treat effect of county coverage,
which averages the response over the covered but non-enrolled, whereas ours is
the effect of individual receipt on those who receive. 

\begin{figure}[ht]
\centering
\includegraphics[width=\textwidth]{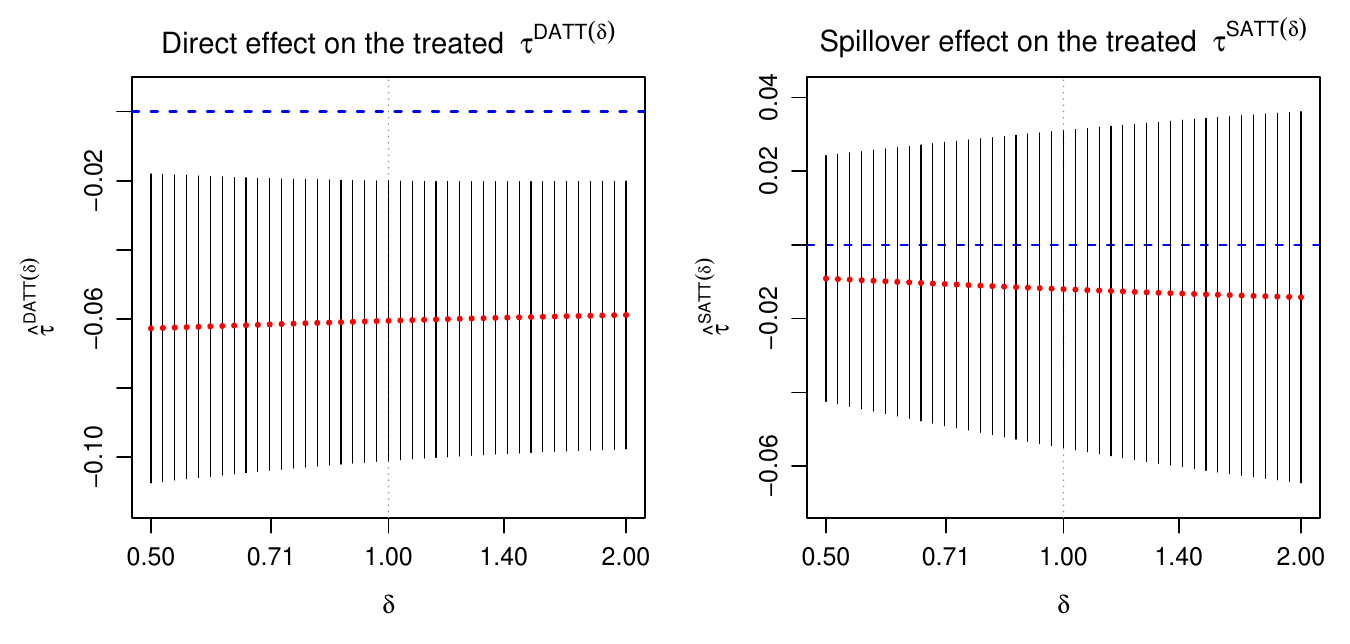}
\caption{Cross-fitted CIPS estimates of the direct
($\tau^{DATT}(\delta)$, left) and spillover ($\tau^{SATT}(\delta)$, right)
effects on \emph{participation in farmwork}, over the CIPS odds multiplier
$\delta$.  Red dots are point estimates and vertical bars are $95\%$ confidence
intervals computed from \eqref{eq:Vhat}; the dashed line marks the null.  The
direct effect is negative and significant at every $\delta$; the spillover effect
is not distinguishable from zero anywhere on the grid.  Note the difference in
vertical scale between the two panels.}
\label{fig:farmwork}
\end{figure}

The spillover effect points the same way but cannot be signed.  At $\delta=1$,
$\widehat\tau^{SATT}(1)=-0.012$ ($\mathrm{SE}=0.022$): among pension recipients,
a co-resident's receipt would lower the probability of doing farmwork by $1.2$
percentage points even if the member's own receipt were withheld.  The interval
contains zero at every point of the grid, so we report the sign as suggestive only.
One possible reason for this imprecision is that estimation of the spillover contrast relies on households whose members have different treatment statuses, and only $88$ of the $882$ two-member households exhibit such assignments.

The shape of the two curves in $\delta$ is informative in a way the levels are
not, and here it resembles the additive specification in
Example~\ref{ex:linear}.  That example shows that when the treatment response is
additive in own and peer treatment,
$\tau^{DATT}(\delta)=\beta^{D}$ does not depend on $\delta$, whereas
$\tau^{SATT}(\delta)=\beta^{S}\,\E[N_i-1]\,q_\delta$ moves with the policy.
In the application, the direct effect moves only from $-0.063$ at
$\delta=0.5$ to $-0.059$ at $\delta=2$.  This near-flatness may reflect a
weak own--peer interaction, as in Example~\ref{ex:linear}.
In sum, the direct effect estimate agrees in sign and is
comparable in magnitude to the county-level intention-to-treat estimate in
\citet{huang2021power}.  We therefore view
the household-level direct and spillover effects as a complementary
decomposition, rather than as a replication of their result or as evidence on
the sign of the spillover.

\section{Discussion}
\label{sec:discussion}
This paper develops identification, efficiency, and inference for
policy-indexed direct and spillover average treatment effects on the treated
in two-period difference-in-differences designs with partial interference.
Under no anticipation and configuration-specific conditional parallel trends,
the CIPS policy shifts the observed covariate-dependent treatment odds and
traces causal effects across counterfactual allocations that preserve the
observed conditional support. We derive the efficient influence functions,
showing how ATT changes the EIF relative to existing cross-sectional ATE-type
representations and how the dependence of the CIPS policy on the observed
propensity scores generates an additional policy-estimation term that
disappears under a fixed type-B policy. The resulting cluster-level
cross-fitted estimators are asymptotically linear, locally efficient, and
equipped with consistent variance estimators under high-level nuisance-rate
conditions. The comparison with the fixed type-B policy also reveals a
substantive robustness distinction: the type-B estimator is doubly robust for
consistency, whereas the CIPS estimator requires consistent estimation of the
assignment mechanism because that mechanism defines the target policy.
Monte Carlo results show small finite-sample bias and coverage close to the
nominal level. In the pension application, the estimated direct effect at
$\delta=1$ is a $6.1$-percentage-point reduction in farmwork, while the
within-household spillover estimate is imprecise and not statistically
distinguishable from zero.

Several limitations point to future work. The partial interference assumption restricts spillovers to within-household pairs, and extending the framework to more general network structures would broaden its applicability. The two-period DID design also abstracts from staggered adoption and time-varying effects, so adapting the CIPS estimand to multi-period settings is a natural next step. Further, the CIPS estimator's one-sided robustness permits outcome-model misspecification when the assignment nuisance functions are consistently estimated, but, unlike double robustness, a correctly specified outcome model cannot compensate for misspecification of the propensity model; whether a doubly robust policy-estimation term exists without sacrificing efficiency remains open. 
Finally, the policy parameter---how far to shift the propensity score from the factual allocation---is treated here as fixed rather than chosen; casting it instead as a decision variable to be optimized against a stated welfare or budget criterion would connect the framework to the policy learning literature and yield a data-driven recommendation rather than an estimate at an analyst-specified shift.


\clearpage
\bibliographystyle{elsarticle-harv}
\bibliography{references}
\clearpage
\appendix

\section{Notation}
\label{app:notation}

To help guide the proofs, we collect the notation used throughout the paper
in the following table. The symbols are listed roughly in their order of
appearance in the main text.

\begingroup
\renewcommand{\arraystretch}{1.3}
\footnotesize
\begin{longtable}{p{0.28\textwidth} p{0.66\textwidth}}
\toprule
\textbf{Symbol} & \textbf{Definition / expression}\\
\midrule
\endfirsthead
\toprule
\textbf{Symbol} & \textbf{Definition / expression}\\
\midrule
\endhead
\bottomrule
\endlastfoot

$M$ & Number of i.i.d.\ clusters; asymptotics are $M\to\infty$ with bounded cluster size\\
$\Prob$ & Super-population distribution from which clusters are drawn\\
$i$ & Cluster index, $i=1,\dots,M$\\
$j$ & Unit index within cluster $i$, $j=1,\dots,N_i$\\
$j'$ & Peer-unit index, with $j'\ne j$ when unit $j$ is the focal unit\\
$k$ & Local index: a unit index in the assignment-score calculations of the proof of Theorem~\ref{thm:eif}, and a fold index in the cross-fitting context\\
$N_i$ & Random size of cluster $i$, with $\Prob(N_i\le n_{\max})=1$\\
$n_{\max}$ & Fixed finite upper bound on cluster size\\
$n$ & Generic cluster size, $n\le n_{\max}$\\
$\mathcal A(s)$ & Set of all length-$s$ binary vectors\\
$t\in\{1,2\}$ & Time period in the main text; treatment is realized only in period $2$. The symbol $t$ is reused only in the proof of Theorem~\ref{thm:eif} as a parametric-submodel index\\
$Y_{ijt}$ & Outcome of unit $j$ in cluster $i$ at time $t$\\
$X_{ij}\in\mathbb R^p$ & Vector of pre-treatment covariates of unit $j$\\
$A_{ij}\equiv A_{ij2}$ & Period-$2$ binary treatment indicator of unit $j$\\
$Y_{it}$ & Stacked outcome vector, $(Y_{i1t},\dots,Y_{iN_it})^\top$\\
$A_i$ & Stacked treatment vector, $(A_{i1},\dots,A_{iN_i})^\top$\\
$X_i$ & Stacked cluster covariate vector, $(X_{i1}^\top,\dots,X_{iN_i}^\top)^\top$\\
$A_{i(-j)}$ & Treatment vector of all units in cluster $i$ except unit $j$\\
$\mathbf 0_{-j}$ & All-zero peer-treatment vector in $\mathcal A(N_i-1)$\\
$O_i$ & Observed cluster data, $(Y_{i1},Y_{i2},A_i,X_i,N_i)$\\
$\Delta Y_{ij}$ & First-differenced outcome, $Y_{ij2}-Y_{ij1}$\\
$W_{ij}$ & Reduced unit-level data, $(\Delta Y_{ij},A_{ij},A_{i(-j)},X_i,N_i)$\\
$\nu_n$ & Cluster-size law, $\Prob(N_i=n)$\\
$\E_n[\cdot]$ & Size-conditional expectation, $\E[\cdot\mid N_i=n]$\\
\midrule

$a_i=(a_{ij},a_{i(-j)})$ & Generic cluster treatment vector, separated into the focal and peer components\\
$a\in\{0,1\}$ & Generic treatment value of the focal unit\\
$a_{-j}$ & Generic peer-treatment configuration in $\mathcal A(n-1)$\\
$Y_{ijt}(a_i)$ & Potential outcome under cluster treatment $a_i$\\
$q_j(x,n)$ & Individual propensity score, $\Prob(A_{ij}=1\mid X_i=x,N_i=n)$\\
$\delta\in(0,\infty)$ & CIPS odds multiplier\\
$q_{j,\delta}(x,n)$ & Shifted propensity, $\dfrac{\delta q_j(x,n)}{\delta q_j(x,n)+1-q_j(x,n)}$\\
$\pi_\delta(a_{-j}\mid x,n)$ & Conditional CIPS peer law,
$\prod_{j'\ne j}q_{j',\delta}(x,n)^{a_{j'}}
\{1-q_{j',\delta}(x,n)\}^{1-a_{j'}}$\\
$\alpha\in[0,1]$ & Common peer-treatment probability under a type-B policy\\
$\pi_B(a_{-j}\mid n;\alpha)$ & Type-B peer law,
$\prod_{j'\ne j}\alpha^{a_{j'}}(1-\alpha)^{1-a_{j'}}$\\
\midrule

$m_{a,j}(a_{-j},x,n)$ & Outcome regression,
$\E[\Delta Y_{ij}\mid A_{ij}=a,A_{i(-j)}=a_{-j},X_i=x,N_i=n]$\\
$\Delta m_j(a_{-j},x,n)$ & Regression contrast,
$m_{1,j}(a_{-j},x,n)-m_{0,j}(a_{-j},x,n)$\\
$e_{a,j}(a_{-j}\mid x,n)$ & Peer-configuration probability,
$\Prob(A_{i(-j)}=a_{-j}\mid A_{ij}=a,X_i=x,N_i=n)$.
Under \eqref{eq:assignCI},
$e_{0,j}(a_{-j}\mid x,n)=e_{1,j}(a_{-j}\mid x,n)
=\prod_{j'\ne j}q_{j'}(x,n)^{a_{j'}}
\{1-q_{j'}(x,n)\}^{1-a_{j'}}$\\
$p_j(n)$ & Treated marginal probability,
$\Prob(A_{ij}=1\mid N_i=n)=\E_n[q_j(X_i,n)]$\\
$c$ & Population overlap constant in Assumption~\ref{asm:1}(iv)\\
$C_Y$ & Almost-sure bound on the conditional second moment
$\E[\Delta Y_{ij}^{2}\mid A_i,X_i,N_i]$ in Assumption~\ref{asm:1}(v)\\
\midrule

$\bullet$ & Effect-type placeholder, $\bullet\in\{DATT,SATT\}$; a symbol
written with $\bullet$ is read with either label in its place\\
$\theta_j^{DATT}(a_{-j},x,n)$ & Covariate-conditional direct effect on the treated in \eqref{eq:thetaDATTx}\\
$\theta_j^{SATT}(a_{-j},x,n)$ & Covariate-conditional spillover effect on the treated in \eqref{eq:thetaSATTx}\\
$\theta_j^{TATT}(a_{-j},x,n)$ & Total effect
$\theta_j^{DATT}(a_{-j},x,n)+\theta_j^{SATT}(a_{-j},x,n)$\\
$\tau_j^\bullet(\delta,n)$ & Unit-level policy effect in \eqref{eq:tauj}\\
$\bar\tau^{\bullet}(\delta,n)$ & Size-conditional average,
$n^{-1}\sum_{j=1}^n\tau_j^\bullet(\delta,n)$\\
$\tau^\bullet(\delta)$ & Aggregate policy effect,
$\sum_{n\le n_{\max}}\nu_n\bar\tau^{\bullet}(\delta,n)$\\
$\tau^{TATT}(\delta)$ & Aggregate total policy effect,
$\tau^{DATT}(\delta)+\tau^{SATT}(\delta)$\\
$\tau_{B,j}^\bullet(\alpha,n),\ \bar\tau_B^\bullet(\alpha,n),\ \tau_B^\bullet(\alpha)$ & Unit-level, size-conditional average, and aggregate
type-B policy effects defined in Remark~\ref{rem:typeB}\\
\midrule

$\mathcal M_{NP}$ & Nonparametric i.i.d.\ cluster model subject to the conditional-independence restriction \eqref{eq:assignCI}\\
$^*$ & Superscript denoting evaluation at the true observed-data distribution\\
$g_{j,a_{-j},j'}^\prime(x,n)$ & Policy derivative
$\partial\pi_\delta(a_{-j}\mid x,n)/\partial q_{j'}(x,n)$ in \eqref{eq:gprime}\\
$H_j^{DATT*}(x,n)$ & Policy average of
$\Delta m_j^*(a_{-j},x,n)$ under
$\pi_\delta^*(a_{-j}\mid x,n)$\\
$H_j^{SATT*}(x,n)$ & Policy average of
$m_{0,j}^*(a_{-j},x,n)-m_{0,j}^*(\mathbf0_{-j},x,n)$ under
$\pi_\delta^*(a_{-j}\mid x,n)$\\
$H_{B,j}^{DATT*}(x,n),\ H_{B,j}^{SATT*}(x,n)$ & Corresponding direct and spillover
regression contrasts averaged under the fixed type-B law $\pi_B$\\
$G_j^{DATT*}(O_i;\delta,n)$ & Policy-estimation term in \eqref{eq:Gdatt}\\
$G_j^{SATT*}(O_i;\delta,n)$ & Policy-estimation term in \eqref{eq:Gsatt}\\
$\varphi_j^{DATT*}(O_i;\delta,n)$ & Size-$n$ unit-level DATT EIF in \eqref{eq:phiDATT}\\
$\varphi_j^{SATT*}(O_i;\delta,n)$ & Size-$n$ unit-level SATT EIF in \eqref{eq:phiSATT}\\
$\bar\varphi^{\bullet*}(O_i;\delta)$ & Within-cluster EIF component,
$N_i^{-1}\sum_{j=1}^{N_i}\varphi_j^{\bullet*}(O_i;\delta,N_i)$\\
$\varphi^{\bullet*}(O_i;\delta)$ & Global EIF in \eqref{eq:eif}; its variance is the semiparametric efficiency bound\\
$\varphi_{B,j}^{\bullet*}(O_i;\alpha,n)$ & Type-B unit-level EIF in
\eqref{eq:phiBDATT}--\eqref{eq:phiBSATT}\\
$\varphi_B^{\bullet*}(O_i;\alpha)$ & Global type-B EIF in \eqref{eq:eifB}\\
\midrule

$\eta$ & Collection of nuisance functions and derived quantities
$(q_j,e_{a,j},m_{a,j},p_j,\tau_j^\bullet)$ over all admissible indices\\
$\Gamma^\bullet(O_i;\delta,\eta)$ & Uncentered EIF, or moment function, in \eqref{eq:momentfn}\\
$\Gamma_j^\bullet(O_i;\eta)$ & Unit-level uncentered EIF at fixed $(j,n)$,
with $(\delta,n)$ suppressed:
$\tau_j^\bullet(\delta,n;\eta)+\varphi_j^\bullet(O_i;\delta,n,\eta)$\\
$\eta_B$ & Type-B nuisance collection
$(q_j,e_{a,j},m_{a,j},p_j,\tau_{B,j}^\bullet)$\\
$\Gamma_B^\bullet,\ \Gamma_{B,j}^\bullet$ & Global and unit-level
uncentered type-B EIFs in \eqref{eq:momentfnB}\\
$K$ & Fixed number of cross-fitting folds\\
$\mathcal I_k$ & Set of cluster indices in fold $k$\\
$\mathcal I_k^c$ & Training complement of fold $k$\\
$\widehat\E_n^{(-k)}$ &
Empirical mean over all size-$n$ clusters in the full training complement $\mathcal I_k^c$\\
$\widehat\eta^{(-k)}$ & Nuisance estimates trained outside fold $k$\\
$\widehat q_j,\widehat e_{a,j},\widehat m_{a,j}$ & Estimated regression nuisances; a superscript $(-k)$ indicates fitting outside fold $k$\\
$\widehat\pi_\delta^{(-k)}$ & Estimated CIPS peer law, the functional \eqref{eq:pidelta} evaluated at $\widehat q^{(-k)}$\\
$\widehat p_j^{(-k)}(n)$ & Fold-specific estimate of $p_j(n)$ in \eqref{eq:derivednuis}\\
$\widehat{\tau}_j^{\bullet,(-k)}(\delta,n)$ & Fold-specific derived unit-level policy effect in \eqref{eq:derivednuis}\\
$\widehat\tau_{B,j}^{\bullet,(-k)}(\alpha,n)$ & Fold-specific derived
unit-level type-B effect in \eqref{eq:derivednuisB}\\
$\widehat{\tau}^{\bullet}(\delta)$ & Cross-fitted one-step estimator in \eqref{eq:tauhat}\\
$\widehat\tau_B^\bullet(\alpha)$ & Cross-fitted type-B one-step estimator
in \eqref{eq:tauhatB}\\
$\|f\|_{2,n}$ & Size-conditional $L_2$ norm,
$\{\E_n[f(X_i)^2]\}^{1/2}$\\
$\|Z\|_2$ & Unconditional $L_2$ norm,
$\{\E[Z^2]\}^{1/2}$; for a cross-fitted quantity, the fitted training-sample nuisances are held fixed\\
$\varepsilon_q$ & Maximum size-conditional $L_2$ error of $\widehat q_j^{(-k)}$, over folds and $(j,n)$\\
$\varepsilon_e$ & Maximum size-conditional $L_2$ error of $\widehat e_{a,j}^{(-k)}$, over folds and $(a,j,a_{-j},n)$\\
$\varepsilon_m$ & Maximum size-conditional $L_2$ error of $\widehat m_{a,j}^{(-k)}$, over folds and $(a,j,a_{-j},n)$\\
$c'$ & Lower-bound constant for estimated probability nuisances\\
$C$ & Generic finite constant whose value may change from line to line\\
$\sigma_\bullet^2$ & Asymptotic variance,
$\E[\{\varphi^{\bullet*}(O_i;\delta)\}^2]$\\
$\widehat V^{\bullet}(\delta)$ & Cross-fitted variance estimator in \eqref{eq:Vhat}\\
$\sigma_{B,\bullet}^2,\ \widehat V_B^{\bullet}(\alpha)$ & Type-B asymptotic
variance $\E[\{\varphi_B^{\bullet*}(O_i;\alpha)\}^2]$ and its cross-fitted
estimator in \eqref{eq:VhatB}\\
$\gamma\in(0,1)$ & Nominal error probability of the confidence interval\\
$z_{1-\gamma/2}$ & Standard-normal $(1-\gamma/2)$ quantile\\
\midrule

$p(\cdot)$ & Density or probability mass function\\
$t$ & One-dimensional regular parametric-submodel index used only in~\ref{app:eif}; distinct from the period index in the main text\\
$\partial_t f;\ \dot f$ & Derivative of a submodel functional at $t=0$\\
$o=(y_1,y_2,a_i,x,n)$ & Generic realization of the full cluster data $O_i$\\
$w=(\Delta y,a,a_{-j},x,n)$ & Generic realization of the reduced data $W_{ij}$ at fixed $(j,n)$\\
$p(o;t)$ & Observed-data density along the parametric submodel\\
$S(O_i)$ & Full submodel score,
$\partial_t\log p(O_i;t)|_{t=0}$\\
$S_n(O_i)$ & Conditional-law score given $N_i=n$\\
$S_N(n)$ & Cluster-size score,
$\partial_t\log\nu_n(t)|_{t=0}$\\
$s_{j,n}(W_{ij})$ & Reduced score,
$\E_n[S_n(O_i)\mid W_{ij}]$\\
$p_{W,j,n}(w;t)$ & Reduced density of $W_{ij}$ given $N_i=n$ along the submodel\\
$p_{a,j,n}(\Delta y\mid a_{-j},x;t)$ & Conditional density of $\Delta Y_{ij}$ along
the submodel, given $(A_{ij},A_{i(-j)},X_i,N_i)=(a,a_{-j},x,n)$\\
$p^*(a,a_{-j}\mid x,n)$ & True joint treatment probability:
$q_j^*(x,n)e_{1,j}^*(a_{-j}\mid x,n)$ if $a=1$, and
$\{1-q_j^*(x,n)\}e_{0,j}^*(a_{-j}\mid x,n)$ if $a=0$\\
$p_{X,n}(x;t)$ & Conditional density of $X_i$ given $N_i=n$ along the submodel\\
$s_{a,j,n}$ & Conditional-outcome score for treatment value $a$\\
$s_{X,n}(X_i)$ & Score of the covariate law given $N_i=n$\\
$\mathcal T_{j,n}$ & Size-$n$ reduced tangent space in \eqref{eq:tangent}\\
$\mathcal T_N$ & Cluster-size tangent block,
$\{L_N(N_i):\E[L_N]=0\}$\\
$L_Y,L_A,L_X,L_N$ & Outcome, assignment, covariate, and cluster-size tangent components\\
\midrule

$d_j^{DATT}(a_{-j},x)$ & Direct regression contrast used in the EIF proof,
$\Delta m_j^*(a_{-j},x,n)$\\
$d_j^{SATT}(a_{-j},x)$ & Spillover regression contrast used in the EIF proof,
$m_{0,j}^*(a_{-j},x,n)-m_{0,j}^*(\mathbf0_{-j},x,n)$\\
$\mathcal N_j^{DATT},\mathcal N_j^{SATT}$ & DATT and SATT numerator functionals at fixed $(j,n)$\\
$\mathcal D_j(t),\ \mathcal D_j$ & Unit-level denominator along a submodel
and at the truth, with $\mathcal D_j=p_j^*(n)$\\
$R_{Y,j}^{DATT},R_{Y,j}^{SATT}$ & Outcome-score representers in
\eqref{eq:RYD} and \eqref{eq:RYS}\\
$\mathcal N_{B,j}^\bullet(t)$ & Type-B numerator functional along the
parametric submodel in~\ref{app:typeB}\\
$R_{B,Y,j}^\bullet$ & Type-B outcome-score representer obtained from
$R_{Y,j}^{DATT}$ or $R_{Y,j}^{SATT}$ by replacing $\pi_\delta^*$ with $\pi_B$\\
\midrule

$\Gamma(\cdot;\eta),\ \tau^*,\ \varphi^*$ & Abbreviations for
$\Gamma^\bullet(\cdot;\delta,\eta)$,
$\tau^\bullet(\delta)$, and
$\varphi^{\bullet*}(\cdot;\delta)$, respectively\\
$\E^{(k)}$ & Expectation over a new cluster with
$\widehat\eta^{(-k)}$ held fixed\\
$d_j^\bullet(a_{-j},x,n;m)$ & Generic regression contrast constructed from $m$:
$m_{1,j}-m_{0,j}$ for DATT and
$m_{0,j}(a_{-j})-m_{0,j}(\mathbf 0_{-j})$ for SATT\\
$H_j^\bullet(x;q,m)$ & Generic policy average of
$d_j^\bullet(a_{-j},x,n;m)$ under
$\pi_\delta(\cdot\mid x,n;q)$, used in Lemma~\ref{lem:cc_derived}\\
$c$ & Generic peer configuration in $\mathcal A(n-1)$ within the second-order calculation\\
$\widehat\pi_c(x),\ \pi_c^*(x)$ & Estimated and true policy probabilities
$\pi_\delta(c\mid x,n;\widehat q)$ and
$\pi_\delta(c\mid x,n;q^*)$\\
$\widehat d_c^\bullet,d_c^{\bullet*}$ & Estimated and true DATT or SATT regression contrasts\\
$R_{2,c}(x)$ & Second-order Taylor remainder for the policy map in \eqref{eq:cc_taylor}\\
$\mathcal R_Y^\bullet$ & Outcome-regression contribution to the conditional bias\\
$\mathcal R_\pi^\bullet$ & Policy-estimation contribution to the conditional bias\\
$\mathcal R_{B,Y}^\bullet$ & Type-B outcome-regression contribution to the
conditional bias in \eqref{eq:cc_exact_biasB}\\
$\Delta\Gamma_k(O)$ & Fold-$k$ moment error,
$\Gamma(O;\widehat\eta^{(-k)})-\Gamma(O;\eta^*)$\\
$R_k$ & Fold-$k$ conditional bias,
$\E^{(k)}[\Gamma(O;\widehat\eta^{(-k)})]-\tau^*$\\
$T_1,T_2,T_3$ & Leading true-EIF average, centered cross-fitting empirical-process remainder, and conditional-bias remainder, respectively, in \eqref{eq:cc_decomp}\\
\end{longtable}
\endgroup

\clearpage
\section{Proofs}
\label{app:proofs}

Throughout this appendix, $C$ denotes a generic finite positive constant
that may change from line to line.

\subsection{Proof of Proposition~\ref{prop:id} (Identification)}
\label{app:id}

Fix $N_i=n$ throughout.

\medskip
\noindent\textbf{DATT.} For each $a_{-j}$ and each $x$,
\begin{align*}
\theta_{j}^{DATT}(a_{-j},x,n)
&=\E[Y_{ij2}(1,a_{-j})-Y_{ij2}(0,a_{-j})\mid A_{ij}=1,A_{i(-j)}=a_{-j},X_i=x,N_i=n]\\
&=\E[Y_{ij2}(1,a_{-j})-Y_{ij1}(0,\mathbf 0_{-j})\mid A_{ij}=1,A_{i(-j)}=a_{-j},X_i=x,N_i=n]\\
&\quad-\E[Y_{ij2}(0,a_{-j})-Y_{ij1}(0,\mathbf 0_{-j})\mid A_{ij}=0,A_{i(-j)}=a_{-j},X_i=x,N_i=n]\\
&=m_{1,j}(a_{-j},x,n)-m_{0,j}(a_{-j},x,n),
\end{align*}
where the second equality adds and subtracts $\E[Y_{ij1}(0,\mathbf 0_{-j})\mid A_{ij}=1,A_{i(-j)}=a_{-j},X_i=x,N_i=n]$ and applies Assumption~\ref{asm:1}(iii) to switch the conditioning event from $A_{ij}=1$ to $A_{ij}=0$ in the control term, and the third equality follows from Assumption~\ref{asm:1}(i)--(ii) together with $\Delta Y_{ij}=Y_{ij2}-Y_{ij1}$. This is \eqref{eq:idDATT}.

\medskip
\noindent\textbf{SATT.} For each $a_{-j}$ and each $x$,
\begin{align*}
\theta_{j}^{SATT}(a_{-j},x,n)
&=\E[Y_{ij2}(0,a_{-j})-Y_{ij2}(0,\mathbf 0_{-j})\mid A_{ij}=1,A_{i(-j)}=a_{-j},X_i=x,N_i=n]\\
&=\E[Y_{ij2}(0,a_{-j})-Y_{ij1}(0,a_{-j})\mid A_{ij}=1,A_{i(-j)}=a_{-j},X_i=x,N_i=n]\\
&\quad-\E[Y_{ij2}(0,\mathbf 0_{-j})-Y_{ij1}(0,\mathbf 0_{-j})\mid A_{ij}=1,A_{i(-j)}=a_{-j},X_i=x,N_i=n]\\
&=\E[Y_{ij2}(0,a_{-j})-Y_{ij1}(0,a_{-j})\mid A_{ij}=0,A_{i(-j)}=a_{-j},X_i=x,N_i=n]\\
&\quad-\E[Y_{ij2}(0,\mathbf 0_{-j})-Y_{ij1}(0,\mathbf 0_{-j})\mid A_{ij}=0,A_{i(-j)}=\mathbf 0_{-j},X_i=x,N_i=n]\\
&=m_{0,j}(a_{-j},x,n)-m_{0,j}(\mathbf 0_{-j},x,n),
\end{align*}
where the second equality adds and subtracts $\E[Y_{ij1}(0,a_{-j})\mid\cdot]$ and $\E[Y_{ij1}(0,\mathbf 0_{-j})\mid\cdot]$ and uses no anticipation (Assumption~\ref{asm:1}(ii)), the third applies (CPT-S1) to the $Y(0,a_{-j})$ trend and (CPT-S2) to the $Y(0,\mathbf 0_{-j})$ trend, and the last uses Assumption~\ref{asm:1}(i)--(ii). This is \eqref{eq:idSATT}.

\medskip
\noindent\textbf{Aggregation.} Under overlap (Assumption~\ref{asm:1}(iv)) the propensity $q_{j}(x,n)$ is identified from the observed assignment mechanism, hence so are the shifted propensity $q_{j,\delta}$ of \eqref{eq:qdelta} and the peer law $\pi_\delta(a_{-j}\mid x,n)$ of \eqref{eq:pidelta}. Substituting \eqref{eq:idDATT}--\eqref{eq:idSATT} into \eqref{eq:tauj} and rewriting the treated conditional expectation with Bayes' rule \eqref{eq:bayes} gives \eqref{eq:idtauDATT}--\eqref{eq:idtauSATT}; all objects on their right-hand sides are functionals of the observed-data distribution. The size law $\nu_n$ is identified from the observed $N_i$. Substituting into \eqref{eq:tau} identifies $\tau^{DATT}(\delta)$ and $\tau^{SATT}(\delta)$. \hfill$\square$

\subsection{Proof of Theorem~\ref{thm:eif} (Global efficiency)}
\label{app:eif}
The proof follows the strategy in Sections A.2 and B.2 of the supplementary
material to \citet{lee2025efficient}, adapted to the covariate-conditional
treated estimands considered here.
The proof has two layers.  Steps 1 and 2 derive the unit-level
efficient influence functions for
$\tau_j^{DATT}(\delta,n)$ and $\tau_j^{SATT}(\delta,n)$, respectively, whereas
Step 3 aggregates these unit-level components over units and the random
cluster-size distribution to obtain the efficient influence function of
$\tau^\bullet(\delta)$.  At each layer, the argument verifies the three
defining properties of the canonical gradient: (i) for every regular
parametric submodel, the pathwise derivative equals the inner product of the
candidate influence function with the corresponding score,
$\dot\Psi=\E[\varphi S]$ (with the size-conditional version used in Steps 1
and 2); (ii) the candidate influence function has mean zero; and (iii) it
belongs to the relevant tangent space.

We prove the result along an arbitrary one-dimensional regular parametric
submodel of $\mathcal{M}_{NP}$; the argument applies componentwise to a
finite-dimensional submodel.  Let
\[
p(o;t)=p(y_1,y_2\mid a,x,n;t)
\prod_{k=1}^{n}q_k(x,n;t)^{a_k}\{1-q_k(x,n;t)\}^{1-a_k}
p(x\mid n;t)\,\nu_n(t)
\]
be a smooth submodel through the truth at $t=0$, and write
$S(O_i)=\partial_t\log p(O_i;t)|_{t=0}$ for its score.  Conditional on
$N_i=n$, let
\[
S_n(O_i)=\partial_t\log p(O_i\mid N_i=n;t)|_{t=0},
\qquad
S_N(n)=\partial_t\log\nu_n(t)|_{t=0}.
\]
Thus $S(O_i)=S_N(N_i)+S_{N_i}(O_i)$, with
$\E[S_N(N_i)]=0$ and $\E_n[S_n(O_i)]=0$. Fix $n\le n_{\max}$ and $j\le n$.  The score induced by the submodel on
$W_{ij}=(\Delta Y_{ij},A_{ij},A_{i(-j)},X_i,n)$ is
\[
s_{j,n}(W_{ij})=\E_n[S_n(O_i)\mid W_{ij}].
\]
Let $p_{a,j,n}(\Delta y\mid a_{-j},x;t)$ denote the conditional density of
$\Delta Y_{ij}$ under the conditioning event
\[
(A_{ij},A_{i(-j)},X_i,N_i)=(a,a_{-j},x,n),
\]
and set $a_i=(a,a_{-j})$. The reduced density then factorizes as
\[
p_{W,j,n}(w;t)
=p_{X,n}(x;t)p_{a,j,n}(\Delta y\mid a_{-j},x;t)
\prod_{k=1}^{n}q_k(x,n;t)^{a_k}\{1-q_k(x,n;t)\}^{1-a_k}.
\]
Consequently its score has the decomposition
\begin{equation}\label{eq:sn}
\begin{aligned}
s_{j,n}(W_{ij})
&=\left.\partial_t\log p_{W,j,n}(W_{ij};t)\right|_{t=0}\\
&=A_{ij}s_{1,j,n}+(1-A_{ij})s_{0,j,n}
+\sum_{k=1}^{n}\frac{A_{ik}-q_k^*(X_i,n)}
{q_k^*(X_i,n)\{1-q_k^*(X_i,n)\}}\,\dot q_k(X_i,n)
+s_{X,n}(X_i),
\end{aligned}
\end{equation}
where a dot denotes $\partial_t|_{t=0}$, the arguments of the component
scores are suppressed in \eqref{eq:sn}, and
\[
\begin{aligned}
s_{a,j,n}(\Delta y,a_{-j},x)
&=\left.\partial_t\log
p_{a,j,n}(\Delta y\mid a_{-j},x;t)\right|_{t=0},
\qquad a\in\{0,1\},\\
s_{X,n}(x)
&=\left.\partial_t\log p_{X,n}(x;t)\right|_{t=0}.
\end{aligned}
\]
The component scores satisfy
\begin{equation}\label{eq:score_conditions}
\begin{gathered}
\E_n[s_{a,j,n}\mid A_{ij}=a,A_{i(-j)},X_i]=0, \quad
\E_n[s_{X,n}(X_i)]=0, \quad a\in\{0,1\}.
\end{gathered}
\end{equation}

Under $\mathcal{M}_{NP}$, the size-$n$ reduced tangent space is
\begin{equation}\label{eq:tangent}
\begin{aligned}
\mathcal T_{j,n}
=\biggl\{L_Y+\sum_{k=1}^{n}h_k(X_i)
\{A_{ik}-q_k^*(X_i,n)\}+L_X:\;&
\E_n[L_Y\mid A_{ij},A_{i(-j)},X_i]=0, \E_n[L_X]=0\biggr\},
\end{aligned}
\end{equation}
where $L_Y,L_X$, and every $h_k$ are square-integrable functions of their
displayed arguments.  The middle block is precisely the tangent space of the
conditionally independent Bernoulli assignment law
\eqref{eq:assignCI}.  The full tangent space additionally contains the
mutually orthogonal size block
$\mathcal T_N=\{L_N(N_i):\E[L_N]=0\}$.

We will repeatedly use two score-representer identities.  First, for every
$k\le n$ and every square-integrable $h$,
\begin{equation}\label{eq:representer}
\begin{aligned}
\int h(x)\dot q_k(x,n)p_{X,n}^*(x)\,dx
&=\E_n\!\left[h(X_i)\{A_{ik}-q_k^*(X_i,n)\}S_n(O_i)\right]\\
&=\E_n\!\left[h(X_i)\{A_{ik}-q_k^*(X_i,n)\}s_{j,n}(W_{ij})\right].
\end{aligned}
\end{equation}
Indeed, differentiating
$q_k(x,n;t)=\E_t[A_{ik}\mid X_i=x,N_i=n]$ by the quotient rule gives
\[
\begin{aligned}
\dot q_k(x,n)
&=\E_n[A_{ik}S_n(O_i)\mid X_i=x]
-q_k^*(x,n)\E_n[S_n(O_i)\mid X_i=x]\\
&=\E_n[\{A_{ik}-q_k^*(x,n)\}S_n(O_i)\mid X_i=x].
\end{aligned}
\]
Multiplying this identity by $h(x)$ and integrating with respect to
$p_{X,n}^*(x)$ gives the first equality in \eqref{eq:representer}.
The second equality follows from the defining projection identity for the
reduced score, because
$h(X_i)\{A_{ik}-q_k^*(X_i,n)\}$ is measurable with respect to $W_{ij}$:
$X_i$ and the full treatment vector
$A_i=(A_{ij},A_{i(-j)})$ are both contained in $W_{ij}$.  Second, for
$a\in\{0,1\}$, $a_{-j}\in\mathcal A(n-1)$, and square-integrable
$c(\cdot)$,
\begin{equation}\label{eq:mrep}
\begin{aligned}
&\int c(x)\dot m_{a,j}(a_{-j},x,n)p_{X,n}^*(x)\,dx\\
&=\E_n\!\left[
c(X_i)\frac{\one\{A_{ij}=a,A_{i(-j)}=a_{-j}\}}
{p^*(a,a_{-j}\mid X_i,n)}
\{\Delta Y_{ij}-m_{a,j}^*(a_{-j},X_i,n)\}S_n(O_i)\right]\\
&=\E_n\!\left[
c(X_i)\frac{\one\{A_{ij}=a,A_{i(-j)}=a_{-j}\}}
{p^*(a,a_{-j}\mid X_i,n)}
\{\Delta Y_{ij}-m_{a,j}^*(a_{-j},X_i,n)\}s_{j,n}(W_{ij})\right],
\end{aligned}
\end{equation}
where
$p^*(1,a_{-j}\mid x,n)=q_j^*(x,n)e_{1,j}^*(a_{-j}\mid x,n)$ and
$p^*(0,a_{-j}\mid x,n)=\{1-q_j^*(x,n)\}e_{0,j}^*(a_{-j}\mid x,n)$.
To verify \eqref{eq:mrep}, differentiating the relevant conditional outcome
mean by the quotient rule gives
\[
\begin{aligned}
\dot m_{a,j}(a_{-j},x,n)
=\E_n\!\left[
\{\Delta Y_{ij}-m_{a,j}^*(a_{-j},x,n)\}S_n(O_i)
\;\middle|\;
A_{ij}=a,A_{i(-j)}=a_{-j},X_i=x
\right].
\end{aligned}
\]
Multiplying by $c(x)p_{X,n}^*(x)$, integrating over $x$, and rewriting the
conditional expectation using the indicator of
$\{A_{ij}=a,A_{i(-j)}=a_{-j}\}$ and its conditional probability
$p^*(a,a_{-j}\mid X_i,n)$ gives the first equality in \eqref{eq:mrep}.
The random variable multiplying $S_n(O_i)$ there is measurable with respect
to $W_{ij}$, so the defining projection identity
$s_{j,n}(W_{ij})=\E_n[S_n(O_i)\mid W_{ij}]$ gives the second equality.

\medskip
\noindent\textbf{Step 1: The EIF for unit-level
DATT.}
For concision, throughout Steps 1 and 2 all functions are evaluated at size
$n$.  Along the submodel, define
\[
\begin{aligned}
d_j^{DATT}(a_{-j},x;t)
&=m_{1,j}(a_{-j},x,n;t)-m_{0,j}(a_{-j},x,n;t),\\
H_j^{DATT}(x;t)
&=\sum_{a_{-j}}\pi_\delta(a_{-j}\mid x,n;t)
d_j^{DATT}(a_{-j},x;t),\\
\mathcal N_j^{DATT}(t)
&=\int q_j(x,n;t)H_j^{DATT}(x;t)p_{X,n}(x;t)\,dx.
\end{aligned}
\]
At the truth, suppress the argument $t=0$ and write
\[
\begin{gathered}
d_j^{DATT}(a_{-j},x)=\Delta m_j^*(a_{-j},x,n),\qquad
H_j^{DATT}(x)=H_j^{DATT*}(x,n),\\
\mathcal N_j^{DATT}
=\E_n[q_j^*(X_i,n)H_j^{DATT}(X_i)],\qquad
\mathcal D_j=p_j^*(n)=\E_n[q_j^*(X_i,n)],\qquad
\tau_j^{DATT*}=\mathcal N_j^{DATT}/\mathcal D_j.
\end{gathered}
\]
The relevant derivatives satisfy
\[
\dot p_{X,n}=p_{X,n}^*s_{X,n},\qquad
\dot d_j^{DATT}=\dot m_{1,j}-\dot m_{0,j},\qquad
\dot\pi_\delta(a_{-j}\mid x,n)
=\sum_{j'\ne j}g_{j,a_{-j},j'}^{\prime*}(x,n)\dot q_{j'}(x,n).
\]
Applying the product rule first to $\mathcal N_j^{DATT}(t)$ and then to
$H_j^{DATT}(x;t)$ gives
\begin{align*}
\dot{\mathcal N}_j^{DATT}
={}&\int \dot q_j(x,n)H_j^{DATT}(x)p_{X,n}^*(x)\,dx\\
&+\int q_j^*(x,n)\dot H_j^{DATT}(x)p_{X,n}^*(x)\,dx
+\int q_j^*(x,n)H_j^{DATT}(x)\dot p_{X,n}(x)\,dx,\\
\dot H_j^{DATT}(x)
={}&\sum_{a_{-j}}\left[
\dot\pi_\delta(a_{-j}\mid x,n)d_j^{DATT}(a_{-j},x)
+\pi_\delta^*(a_{-j}\mid x,n)\dot d_j^{DATT}(a_{-j},x)
\right].
\end{align*}
Hence, substituting the derivative identities above produces the following
four channels:
\begin{align}
\dot{\mathcal N}_j^{DATT}
={}&\;\underbrace{\int q_j^*H_j^{DATT}s_{X,n}p_{X,n}^*\,dx}_{\mathrm{(I)}\ X}
+\underbrace{\int H_j^{DATT}\dot q_jp_{X,n}^*\,dx}_{\mathrm{(II)}\ q_j}\nonumber\\
&+\underbrace{\int q_j^*\sum_{a_{-j}}d_j^{DATT}(a_{-j},x)
\sum_{j'\ne j}g_{j,a_{-j},j'}^{\prime*}(x,n)\dot q_{j'}(x,n)
p_{X,n}^*(x)\,dx}_{\mathrm{(III)}\ \text{policy}}\nonumber\\
&+\underbrace{\int q_j^*\sum_{a_{-j}}\pi_\delta^*(a_{-j}\mid x,n)
\{\dot m_{1,j}(a_{-j},x,n)-\dot m_{0,j}(a_{-j},x,n)\}
p_{X,n}^*(x)\,dx}_{\mathrm{(IV)}\ \text{outcome}}.
\label{eq:Dchannels}
\end{align}
Here and below unlabeled integrands are evaluated at $(x,n)$.  Channels (I)
and (II) are represented by
\[
q_j^*(X_i,n)H_j^{DATT}(X_i)-\mathcal N_j^{DATT}
\qquad\text{and}\qquad
\{A_{ij}-q_j^*(X_i,n)\}H_j^{DATT}(X_i),
\]
respectively, the second by \eqref{eq:representer}; the two sum to
$A_{ij}H_j^{DATT}(X_i)-\mathcal N_j^{DATT}$.  Applying
\eqref{eq:representer} separately to every $j'\ne j$ in channel (III)
gives precisely $G_j^{DATT*}(O_i;\delta,n)$ in \eqref{eq:Gdatt}.
For channel (IV), apply \eqref{eq:mrep} to every configuration and then sum
over $a_{-j}$.  The configuration indicators collapse the sums to the
observed $A_{i(-j)}$, giving
\begin{equation}\label{eq:RYD}
\begin{aligned}
R_{Y,j}^{DATT}(O_i)
=&\;A_{ij}\frac{\pi_\delta^*(A_{i(-j)}\mid X_i,n)}
{e_{1,j}^*(A_{i(-j)}\mid X_i,n)}
\{\Delta Y_{ij}-m_{1,j}^*(A_{i(-j)},X_i,n)\}\\
&-(1-A_{ij})\frac{q_j^*(X_i,n)}{1-q_j^*(X_i,n)}
\frac{\pi_\delta^*(A_{i(-j)}\mid X_i,n)}
{e_{0,j}^*(A_{i(-j)}\mid X_i,n)}
\{\Delta Y_{ij}-m_{0,j}^*(A_{i(-j)},X_i,n)\}.
\end{aligned}
\end{equation}
Combining the four channels therefore yields
\begin{equation}\label{eq:NDgrad}
\dot{\mathcal N}_j^{DATT}
=\E_n\!\left[
\{A_{ij}H_j^{DATT}(X_i)-\mathcal N_j^{DATT}
+R_{Y,j}^{DATT}(O_i)+G_j^{DATT*}(O_i;\delta,n)\}S_n(O_i)\right].
\end{equation}
The expression inside braces has mean zero: the outcome term has conditional
mean zero given $(A_i,X_i,N_i=n)$, the policy term has conditional mean zero
given $(X_i,N_i=n)$, and
$\E_n[A_{ij}H_j^{DATT}(X_i)]=\mathcal N_j^{DATT}$. Because $\mathcal D_j=\E_n[A_{ij}]$, its size-$n$ gradient is
$A_{ij}-p_j^*(n)$.  By the ratio rule and the identity 
$\mathcal N_j^{DATT}=p_j^*(n)\tau_j^{DATT*}$, the conjectured EIF of $\tau_{j}^{DATT*}$ is obtained as follows:
\[
\begin{aligned}
\frac{1}{p_j^*(n)}
&\bigl[A_{ij}H_j^{DATT}(X_i)-\mathcal N_j^{DATT}+R_{Y,j}^{DATT}+G_j^{DATT*}
-\tau_j^{DATT*}\{A_{ij}-p_j^*(n)\}\bigr]\\
&=\frac{1}{p_j^*(n)}
\bigl[A_{ij}\{H_j^{DATT}(X_i)-\tau_j^{DATT*}\}
+R_{Y,j}^{DATT}(O_i)+G_j^{DATT*}(O_i;\delta,n)\bigr],
\end{aligned}
\]
which is exactly $\varphi_j^{DATT*}$ in \eqref{eq:phiDATT}.
Indeed, \eqref{eq:NDgrad} represents the numerator derivative, while the
denominator derivative satisfies
$\dot{\mathcal D}_j
=\E_n[\{A_{ij}-p_j^*(n)\}S_n(O_i)]$.  Therefore, for every regular
submodel, the quotient rule gives
\[
\begin{aligned}
\left.\partial_t\tau_j^{DATT}(\delta,n;t)\right|_{t=0}
&=\frac{1}{p_j^*(n)}
\{\dot{\mathcal N}_j^{DATT}-\tau_j^{DATT*}\dot{\mathcal D}_j\}\\
&=\frac{1}{p_j^*(n)}\E_n\!\left[
\begin{aligned}
\bigl\{&
A_{ij}H_j^{DATT}(X_i)-\mathcal N_j^{DATT}+R_{Y,j}^{DATT}(O_i)\\
&+G_j^{DATT*}(O_i;\delta,n)
-\tau_j^{DATT*}\{A_{ij}-p_j^*(n)\}\bigr\}S_n(O_i)
\end{aligned}
\right]\\
&=\E_n[\varphi_j^{DATT*}(O_i;\delta,n)S_n(O_i)].
\end{aligned}
\]
Next we verify that $\varphi_j^{DATT*}$ lies in the tangent space and has
mean zero.
Multiplying $\varphi_j^{DATT*}$ by
$p_j^*(n)$ and decomposing the treated-centering term gives
\begin{equation}\label{eq:Dtangentdecomp}
\begin{gathered}
p_j^*(n)\varphi_j^{DATT*}=L_Y+L_A+L_X,\\
L_Y=R_{Y,j}^{DATT},\qquad
L_A=\{A_{ij}-q_j^*(X_i,n)\}
\{H_j^{DATT}(X_i)-\tau_j^{DATT*}\}+G_j^{DATT*},\\
L_X=q_j^*(X_i,n)\{H_j^{DATT}(X_i)-\tau_j^{DATT*}\}.
\end{gathered}
\end{equation}
The first block has conditional mean zero given $(A_i,X_i,N_i=n)$.
By the definition of $G_j^{DATT*}$, the second is a sum of
$X_i$-measurable multiples of $A_{ik}-q_k^*(X_i,n)$ and hence lies in the
restricted assignment tangent block of \eqref{eq:tangent}; it also has
conditional mean zero given $(X_i,N_i=n)$.  The last has size-conditional
mean zero because
$\E_n[q_j^*H_j^{DATT}]=p_j^*(n)\tau_j^{DATT*}$.
Thus \eqref{eq:Dtangentdecomp} lies in the tangent space
\eqref{eq:tangent}.  Since it represents the derivative for every regular
submodel and belongs to the tangent space, it is the canonical gradient. Finally, Assumption~\ref{asm:1}(iv), bounded $n$, and fixed
$\delta\in(0,\infty)$ bound every inverse probability and every derivative
$g_{j,a_{-j},j'}^{\prime*}$.  Assumption~\ref{asm:1}(v) then implies that all three
blocks in \eqref{eq:Dtangentdecomp} are square-integrable.  This also proves
$\E_n[\varphi_j^{DATT*}]=0$ and
$\E_n[(\varphi_j^{DATT*})^2]<\infty$.

\medskip
\noindent\textbf{Step 2: complete unit-level SATT calculation.}
Along the same submodel, define
\[
\begin{aligned}
d_j^{SATT}(a_{-j},x;t)
&=m_{0,j}(a_{-j},x,n;t)-m_{0,j}(\mathbf 0_{-j},x,n;t),\\
H_j^{SATT}(x;t)
&=\sum_{a_{-j}}\pi_\delta(a_{-j}\mid x,n;t)
d_j^{SATT}(a_{-j},x;t),\\
\mathcal N_j^{SATT}(t)
&=\int q_j(x,n;t)H_j^{SATT}(x;t)p_{X,n}(x;t)\,dx.
\end{aligned}
\]
At the truth, suppress the argument $t=0$ and write
\[
\begin{gathered}
d_j^{SATT}(a_{-j},x)
=m_{0,j}^*(a_{-j},x,n)-m_{0,j}^*(\mathbf 0_{-j},x,n),\qquad
H_j^{SATT}(x)=H_j^{SATT*}(x,n),\\
\mathcal N_j^{SATT}=\E_n[q_j^*(X_i,n)H_j^{SATT}(X_i)],
\qquad
\tau_j^{SATT*}=\mathcal N_j^{SATT}/p_j^*(n).
\end{gathered}
\]
The derivatives of the spillover contrast and the policy average are
\[
\begin{aligned}
\dot d_j^{SATT}(a_{-j},x)
&=\dot m_{0,j}(a_{-j},x,n)
-\dot m_{0,j}(\mathbf 0_{-j},x,n),\\
\dot H_j^{SATT}(x)
&=\sum_{a_{-j}}\left[
\dot\pi_\delta(a_{-j}\mid x,n)d_j^{SATT}(a_{-j},x)
+\pi_\delta^*(a_{-j}\mid x,n)\dot d_j^{SATT}(a_{-j},x)
\right].
\end{aligned}
\]
Applying the product rule to $\mathcal N_j^{SATT}(t)$ and substituting these
identities gives all four derivative channels explicitly:
\begin{equation}\label{eq:Schannels}
\begin{aligned}
\dot{\mathcal N}_j^{SATT}
={}&\;\underbrace{\int q_j^*H_j^{SATT}s_{X,n}p_{X,n}^*\,dx}_{\mathrm{(I)}\ X}
+\underbrace{\int H_j^{SATT}\dot q_jp_{X,n}^*\,dx}_{\mathrm{(II)}\ q_j}\\
&+\underbrace{\int q_j^*\sum_{a_{-j}}d_j^{SATT}(a_{-j},x)
\sum_{j'\ne j}g_{j,a_{-j},j'}^{\prime*}(x,n)\dot q_{j'}(x,n)
p_{X,n}^*(x)\,dx}_{\mathrm{(III)}\ \text{policy}}\\
&+\underbrace{\int q_j^*\sum_{a_{-j}}\pi_\delta^*(a_{-j}\mid x,n)
\{\dot m_{0,j}(a_{-j},x,n)
-\dot m_{0,j}(\mathbf 0_{-j},x,n)\}
p_{X,n}^*(x)\,dx}_{\mathrm{(IV)}\ \text{outcome}}.
\end{aligned}
\end{equation}
Channels (I) and (II) are represented by
$q_j^*(X_i,n)H_j^{SATT}(X_i)-\mathcal N_j^{SATT}$ and
$\{A_{ij}-q_j^*(X_i,n)\}H_j^{SATT}(X_i)$, respectively, and hence sum to
$A_{ij}H_j^{SATT}(X_i)-\mathcal N_j^{SATT}$.  Applying
\eqref{eq:representer} configuration by configuration to channel (III)
gives $G_j^{SATT*}(O_i;\delta,n)$.  Finally, using
$\sum_{a_{-j}}\pi_\delta^*(a_{-j}\mid x,n)=1$, channel (IV) can be written as
\begin{equation}\label{eq:Soutcomechannel}
\begin{aligned}
\dot{\mathcal N}_{j,\mathrm{out}}^{SATT}
=&\;\sum_{a_{-j}}\int q_j^*(x,n)\pi_\delta^*(a_{-j}\mid x,n)
\dot m_{0,j}(a_{-j},x,n)p_{X,n}^*(x)\,dx\\
&-\int q_j^*(x,n)\dot m_{0,j}(\mathbf 0_{-j},x,n)p_{X,n}^*(x)\,dx.
\end{aligned}
\end{equation}
Applying \eqref{eq:mrep} to the first line configuration by configuration
and to the second line at the zero configuration gives the expression
\begin{equation}\label{eq:RYS}
\begin{aligned}
R_{Y,j}^{SATT}(O_i)
=&(1-A_{ij})\frac{q_j^*(X_i,n)}{1-q_j^*(X_i,n)}
\frac{\pi_\delta^*(A_{i(-j)}\mid X_i,n)}
{e_{0,j}^*(A_{i(-j)}\mid X_i,n)}
\{\Delta Y_{ij}-m_{0,j}^*(A_{i(-j)},X_i,n)\}\\
&-(1-A_{ij})\one\{A_{i(-j)}=\mathbf 0_{-j}\}
\frac{q_j^*(X_i,n)}{1-q_j^*(X_i,n)}
\frac{1}{e_{0,j}^*(\mathbf 0_{-j}\mid X_i,n)}
\{\Delta Y_{ij}-m_{0,j}^*(\mathbf 0_{-j},X_i,n)\}.
\end{aligned}
\end{equation}
Thus the four derivative channels give
\[
\dot{\mathcal N}_j^{SATT}
=\E_n\!\left[
\{A_{ij}H_j^{SATT}(X_i)-\mathcal N_j^{SATT}
+R_{Y,j}^{SATT}(O_i)+G_j^{SATT*}(O_i;\delta,n)\}S_n(O_i)\right].
\]
Combining this numerator gradient with the same denominator gradient
$A_{ij}-p_j^*(n)$ yields
\[
\varphi_j^{SATT*}
=\frac{1}{p_j^*(n)}
\bigl[A_{ij}\{H_j^{SATT}(X_i)-\tau_j^{SATT*}\}
+R_{Y,j}^{SATT}(O_i)+G_j^{SATT*}(O_i;\delta,n)\bigr],
\]
which is exactly \eqref{eq:phiSATT}.  Its tangent-space decomposition is
\[
\begin{gathered}
p_j^*(n)\varphi_j^{SATT*}=L_Y+L_A+L_X,\\
L_Y=R_{Y,j}^{SATT},\qquad
L_A=\{A_{ij}-q_j^*(X_i,n)\}
\{H_j^{SATT}(X_i)-\tau_j^{SATT*}\}+G_j^{SATT*},\\
L_X=q_j^*(X_i,n)\{H_j^{SATT}(X_i)-\tau_j^{SATT*}\}.
\end{gathered}
\]
The three conditional mean-zero properties and the square-integrability
argument are exactly as verified after \eqref{eq:Dtangentdecomp}.  Hence
$\varphi_j^{SATT*}$ is the size-$n$ unit-level canonical gradient,
has conditional mean zero, and has finite second moment.

\medskip
\noindent\textbf{Step 3: aggregation over units and random cluster size.}
For either $\bullet\in\{DATT,SATT\}$, set
\[
\bar\tau^{\bullet*}(\delta,n)
=\frac1n\sum_{j=1}^n\tau_j^{\bullet*}(\delta,n),
\qquad
\bar\varphi^{\bullet*}(O_i;\delta)
=\frac1{N_i}\sum_{j=1}^{N_i}
\varphi_j^{\bullet*}(O_i;\delta,N_i).
\]
The target along the submodel is the mixture
$\tau^{\bullet}(\delta;t)
=\sum_n\nu_n(t)\bar\tau^{\bullet}(\delta,n;t)$, so
\begin{equation}\label{eq:globalderivative}
\dot{\tau}^{\bullet}
=\sum_n\dot\nu_n\,\bar\tau^{\bullet*}(\delta,n)
+\sum_n\nu_n\,\dot{\bar\tau}^{\bullet}(\delta,n).
\end{equation}
Let
\[
\varphi^{\bullet*}
=\bar\varphi^{\bullet*}
+\bar\tau^{\bullet*}(\delta,N_i)
-\tau^{\bullet*}(\delta),
\]
denote the conjectured EIF of $\tau^{\bullet}$,
which is precisely the expression given in \eqref{eq:eif}.

We first check the identity $\E[\varphi^{\bullet*} S(O_i)]=\dot{\tau}^{\bullet}$. 
For the size-score channel, conditional mean zero of every unit-level
gradient, $\dot\nu_n=\nu_nS_N(n)$, and $\sum_n\dot\nu_n=0$ imply
\[
\begin{aligned}
\E[\varphi^{\bullet*}S_N(N_i)]
&=\sum_n\nu_n
\{\bar\tau^{\bullet*}(\delta,n)-\tau^{\bullet*}(\delta)\}
S_N(n)\\
&=\sum_n\dot\nu_n\,\bar\tau^{\bullet*}(\delta,n).
\end{aligned}
\]
For the conditional-law channel, the size-specific centering term drops
because $\E_n[S_n]=0$.  Since each
$\varphi_j^{\bullet*}$ is measurable with respect to $W_{ij}$, the
projection identity defining $s_{j,n}$ and Steps 1 and 2 yield
\[
\begin{aligned}
\E[\varphi^{\bullet*}S_{N_i}(O_i)]
&=\sum_n\nu_n\frac1n\sum_{j=1}^n
\E_n[\varphi_j^{\bullet*}(O_i;\delta,n)S_n(O_i)]\\
&=\sum_n\nu_n\frac1n\sum_{j=1}^n
\dot{\tau}_j^{\bullet}(\delta,n)
=\sum_n\nu_n\dot{\bar\tau}^{\bullet}(\delta,n).
\end{aligned}
\]
Adding the last two displays matches both terms of
\eqref{eq:globalderivative} for every regular submodel.
Moreover,
\[
\E[\varphi^{\bullet*}]
=\sum_n\nu_n\E_n[\bar\varphi^{\bullet*}]
+\sum_n\nu_n
\{\bar\tau^{\bullet*}(\delta,n)-\tau^{\bullet*}(\delta)\}
=0.
\]
The averages of the unit-level $L_Y,L_A,L_X$ blocks remain, respectively,
in the full outcome, assignment, and covariate tangent blocks, while
$\bar\tau^{\bullet*}(\delta,N_i)-\tau^{\bullet*}(\delta)$
belongs to $\mathcal T_N$.  Hence the candidate belongs to the full tangent
space.  Bounded cluster size and the unit-level second-moment bounds imply
$\E[(\varphi^{\bullet*})^2]<\infty$.  It is therefore the canonical
gradient, and its variance
$\E[\{\varphi^{\bullet*}(O_i;\delta)\}^2]$ is the semiparametric
efficiency bound. \hfill$\square$

\subsection{Proof of Theorem~\ref{thm:asymp}}
\label{app:asymp}
The proof follows the cross-fitting and remainder arguments in Sections
A.3--A.5 of the supplementary material to \citet{lee2025efficient}, adapted
to the nuisance structure and treated normalization of the present
estimands. Fix $\bullet\in\{DATT,SATT\}$ and $\delta$, and abbreviate
\[
\Gamma(\cdot;\eta)
=\Gamma^\bullet(\cdot;\delta,\eta),\qquad
\tau^*=\tau^\bullet(\delta),\qquad
\varphi^*=\varphi^{\bullet*}(\cdot;\delta).
\]
For fold $k$, let $\E^{(k)}$ denote expectation over a new cluster with
$\widehat\eta^{(-k)}$ held fixed.  All bounds below are uniform over the
finitely many $(j,a_{-j},n)$ and may depend only on the bounds in
Assumptions~\ref{asm:1}--\ref{asm:2}, $n_{\max}$, and $\delta$.
By Theorem~\ref{thm:eif},
$\E[\varphi^*]=0$ and
$\sigma_\bullet^2=\E[(\varphi^*)^2]<\infty$. Throughout the proofs in this subsection, the fold superscript
$(-k)$ is suppressed on fitted nuisances whenever the fold plays no role; it
is retained in the statements of the lemmas and of the theorem.
The proof uses the following three lemmas.

\begin{lem}[Bound of the uncentered efficient function]\label{lem:cc_lip}
Under Assumptions~\ref{asm:1} and~\ref{asm:2},
\begin{equation}\label{eq:cc_lip}
\begin{aligned}
\|\Gamma(\cdot;\widehat\eta^{(-k)})
-\Gamma(\cdot;\eta^*)\|_2
\le C\biggl(\varepsilon_q+\varepsilon_e+\varepsilon_m+\max_{j,n}\{
|\widehat p_j^{(-k)}(n)-p_j^*(n)|
+|\widehat{\tau}_j^{\bullet,(-k)}(\delta,n)
-\tau_j^{\bullet*}(\delta,n)|\}\biggr).
\end{aligned}
\end{equation}
\end{lem}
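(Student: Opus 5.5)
The plan is to exploit that $\Gamma$ is a finite sum of products of bounded factors, and to telescope the difference $\Gamma(\cdot;\widehat\eta)-\Gamma(\cdot;\eta^*)$ one nuisance at a time. Write $\Gamma=N_i^{-1}\sum_j\Gamma_j$ with $\Gamma_j=\tau_j+\varphi_j$. Since $N_i\le n_{\max}$, the triangle inequality gives $\|\Gamma(\cdot;\widehat\eta)-\Gamma(\cdot;\eta^*)\|_2\le\sum_{n\le n_{\max}}\sum_{j\le n}\|\one\{N_i=n\}(\Gamma_j(\cdot;\widehat\eta)-\Gamma_j(\cdot;\eta^*))\|_2$. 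It therefore suffices to bound, for fixed $(j,n)$, the $L_2(\Prob_n)$ norm of the unit-level difference. Each unit-level term in \eqref{eq:phiDATT}--\eqref{eq:phiSATT} is a product of the factors $1/p_j$, $A_{ij}$, $H_j$, $\tau_j$, $\pi_\delta/e_{a,j}$, $q_j/(1-q_j)$, $(\Delta Y_{ij}-m_{a,j})$, and $G_j$, with only finitely many ($\le 2^{n_{\max}}$) configurations involved.

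First, establish Lipschitz bounds for the derived maps on the bounded region guaranteed by Assumption~\ref{asm:2}(i) and Assumption~\ref{asm:1}(iv). The map $q\mapsto q_\delta$ has derivative $\delta/\{\delta q+1-q\}^2$, which is bounded for $q\in[c',1-c']$ and fixed $\delta$. Hence $|\widehat\pi_\delta-\pi^*_\delta|\le C\sum_{j'}|\widehat q_{j'}-q^*_{j'}|$, and the same holds for $g'$ in \eqref{eq:gprime}, whose denominators $q_\delta(1-q_\delta)$ are bounded away from zero. Likewise $1/\widehat p_j$, $1/\widehat e$ and $\widehat q/(1-\widehat q)$ are Lipschitz with bounded values. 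The remaining sources of error are $\widehat H_j-H^*_j$, controlled by $C(\sum_{j'}|\widehat q_{j'}-q^*_{j'}|+\max_{a_{-j}}|\widehat m-m^*|)$, and $\widehat G_j-G^*_j$, controlled similarly since $|A_{ij'}-q|\le1$.

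Second, telescope each product $\prod_\ell \widehat f_\ell-\prod_\ell f^*_\ell=\sum_\ell(\prod_{r<\ell}\widehat f_r)(\widehat f_\ell-f^*_\ell)(\prod_{r>\ell}f^*_r)$. Every factor except the residual is bounded almost surely: by Assumption~\ref{asm:2}(i) for the hatted ones, and by overlap, boundedness of $m^*$ on the relevant region, and $|\tau^*_j|<\infty$ for the starred ones. The one delicate factor is the residual $\Delta Y_{ij}-m^*_{a,j}$, which need not be bounded. I would place it last in the ordering and handle it by conditioning on $(A_i,X_i,N_i)$. Because $\E[\Delta Y_{ij}^2\mid A_i,X_i,N_i]\le C_Y$, every term of the form $(\text{$X_i$-measurable error})\times(\Delta Y_{ij}-m)$ has $L_2$ norm at most $C\|\text{error}\|_{2,n}$. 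The difference of residuals is $m^*-\widehat m$, which is covered by $\varepsilon_m$. One caveat: this step uses the uniform bound on $m^*$, which follows from the conditional second-moment bound, $|m^*|\le C_Y^{1/2}$.

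Finally, the pointwise factor errors are scalars for $\widehat p_j$ and $\widehat\tau_j$, which gives the max terms in \eqref{eq:cc_lip}. For functions of $X_i$, the errors turn into $\varepsilon_q,\varepsilon_e,\varepsilon_m$ after taking $\|\cdot\|_{2,n}$ and maximizing over the finitely many indices. I expect the main obstacle to be bookkeeping rather than any single deep step. The care needed is to make sure the unbounded outcome factor always multiplies either a bounded factor or an $X_i$-measurable error, so that the conditional moment bound is what gets used rather than a product of two $L_2$ norms, and to track the $G_j$ term, where $\widehat m$ and $\widehat q$ enter jointly.
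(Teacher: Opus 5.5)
Your proposal is correct and follows essentially the same route as the paper. You reduce to the finitely many unit-level terms, write each as a product of uniformly bounded, Lipschitz factors times at most one outcome residual, telescope, and control the residual through the conditional second-moment bound $C_Y$; the Lipschitz interval should be $[c\wedge c',1-c\wedge c']$ rather than $[c',1-c']$, since $q^*$ is only known to lie in $[c,1-c]$. The only other difference is cosmetic: you pair the coefficient error with the true residual $\Delta Y_{ij}-m^*_{a,j}$ and the residual error $m^*_{a,j}-\widehat m_{a,j}$ with hatted factors, whereas the paper pairs $c(\widehat\eta)-c(\eta^*)$ with $R(\widehat\eta)$ and $R(\widehat\eta)-R(\eta^*)$ with $c(\eta^*)$.
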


\begin{proof}
Because $n\le n_{\max}$, it suffices to bound one of the finitely many
unit-level terms in \eqref{eq:momentfn}.  Write such a term as
$T(\eta)=c(\eta)R(\eta)$, where $c(\eta)$ is a product of uniformly bounded
factors drawn from $1/p_j$, $q_j/(1-q_j)$, $1/e_{a,j}$,
$\pi_\delta$, $g_{j,a_{-j},j'}^{\prime}$, $m_{a,j}$,
$\tau_j^\bullet$, and treatment indicators, and $R(\eta)$ is at most one
outcome residual $\Delta Y_{ij}-m_{a,j}(\eta)$.  Decompose
\[
T(\widehat\eta)-T(\eta^*)
=\{c(\widehat\eta)-c(\eta^*)\}R(\widehat\eta)
+c(\eta^*)\{R(\widehat\eta)-R(\eta^*)\}.
\]
If the term contains an outcome residual, then
$R(\widehat\eta)-R(\eta^*)=-(\widehat m_{a,j}-m_{a,j}^*)$.
Because $|c(\eta^*)|\le C$, the second term in the decomposition has
$L_2$ norm at most $C\varepsilon_m$.  If the term has no outcome residual,
take $R\equiv1$ and absorb all its nuisance dependence, including that
through $m_{a,j}$ and $\tau_j^\bullet$, into $c$; the second term then
vanishes.

It remains to control the coefficient difference.  On the interval
$[c\wedge c',1-c\wedge c']$, the maps $q\mapsto q_\delta$,
$q\mapsto\pi_\delta$, and
$q\mapsto g_{j,a_{-j},j'}^{\prime}$ are continuously differentiable with
bounded derivatives.  The maps $(q,e,p)\mapsto q/(1-q)$, $1/e$, and $1/p$
are likewise Lipschitz on the trimmed sets.  Boundedness of all factors and
the telescoping inequality
$\left|\prod_{\ell=1}^L u_\ell-\prod_{\ell=1}^L v_\ell\right|
\le C\sum_{\ell=1}^L|u_\ell-v_\ell|$,
therefore give the pointwise bound
\[
\begin{aligned}
|c(\widehat\eta)-c(\eta^*)|
\le C\bigl(|\widehat q-q^*|+|\widehat e-e^*|
+|\widehat m-m^*|+|\widehat p_j-p_j^*| +|\widehat\tau_j^\bullet-\tau_j^{\bullet*}|\bigr).
\end{aligned}
\]
By Assumptions~\ref{asm:1}(v) and~\ref{asm:2}(i),
$R(\widehat\eta)$ has a uniformly bounded conditional second moment.
Multiplying the last display by $R(\widehat\eta)$ and taking $L_2$ norms
thus yields $\varepsilon_q$, $\varepsilon_e$, $\varepsilon_m$, and the two
derived-nuisance errors on the right-hand side of \eqref{eq:cc_lip}.
Combining the two terms in the decomposition and summing over the finitely
many $(j,a_{-j},n)$ terms proves \eqref{eq:cc_lip}.
\end{proof}

\begin{lem}[Derived nuisances]\label{lem:cc_derived}
Under Assumptions~\ref{asm:1} and~\ref{asm:2}, the estimates in
\eqref{eq:derivednuis} satisfy, uniformly over folds,
\begin{equation}\label{eq:cc_derived}
\max_{j,n}|\widehat p_j^{(-k)}(n)-p_j^*(n)|=O_p(M^{-1/2}),
\qquad
\max_{j,n}
|\widehat{\tau}_j^{\bullet,(-k)}(\delta,n)-\tau_j^{\bullet*}(\delta,n)|
=o_p(1).
\end{equation}
\end{lem}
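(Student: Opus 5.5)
The proof handles the two derived nuisances separately, fold by fold; since $K$ and $n_{\max}$ are fixed, uniformity over folds and over the finitely many $(j,n)$ follows from a union bound.

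\emph{The treated marginal.} For $\widehat p_j^{(-k)}(n)=\widehat\E_n^{(-k)}[A_{ij}]$, write it as a ratio of two averages over the training complement $\mathcal I_k^c$:
\[
M_k^{-1}\sum_{i\in\mathcal I_k^c}A_{ij}\one\{N_i=n\}
\quad\text{and}\quad
M_k^{-1}\sum_{i\in\mathcal I_k^c}\one\{N_i=n\},
\]
where $M_k=|\mathcal I_k^c|\asymp M$. Both are i.i.d.\ averages of bounded variables, so by Chebyshev each lies within $O_p(M^{-1/2})$ of its mean, namely $\nu_n p_j^*(n)$ and $\nu_n$ respectively. For every $n$ with $\nu_n>0$ the denominator is bounded away from zero with probability tending to one. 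The delta method then gives $|\widehat p_j^{(-k)}(n)-p_j^*(n)|=O_p(M^{-1/2})$. Sizes with $\nu_n=0$ never occur and can be ignored.

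\emph{The numerator of $\widehat\tau_j^{\bullet,(-k)}$.} Write the numerator as $\widehat\E_n^{(-k)}[\widehat f(X_i)]$, where
\[
\widehat f(x)=\widehat q_j(x,n)\,H_j^\bullet(x;\widehat q,\widehat m)
\]
is the fitted integrand. The population counterpart is $f^*(x)=q_j^*(x,n)H_j^{\bullet*}(x,n)$, and $\E_n[f^*]=p_j^*(n)\tau_j^{\bullet*}$. Decompose
\[
\widehat\E_n^{(-k)}[\widehat f]-\E_n[f^*]
=\bigl\{\widehat\E_n^{(-k)}[\widehat f]-\E_n[\widehat f]\bigr\}
+\E_n[\widehat f-f^*].
\]
In the second term, $\E_n[\widehat f]$ is taken with $\widehat\eta$ held fixed.

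The first term is $o_p(1)$ directly by the second part of Assumption~\ref{asm:2}(iii). This stability condition is needed because $\widehat f$ is trained on the same sample it is averaged over, so an ordinary law of large numbers does not apply. I expect this to be the main obstacle. If the condition were not taken as given, one would need a Glivenko--Cantelli or Donsker argument for the fitted class, and the proof would not attempt that.

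For the second term, the same Lipschitz reasoning as in the proof of Lemma~\ref{lem:cc_lip} applies: on the trimmed range, $q\mapsto q_\delta$ and $q\mapsto\pi_\delta$ are Lipschitz, and all factors are bounded by Assumptions~\ref{asm:1}(iv)--(v) and~\ref{asm:2}(i). This gives the pointwise bound
\[
|\widehat f-f^*|\le C\Bigl(|\widehat q-q^*|+\sum_{a,a_{-j}}|\widehat m_{a,j}-m_{a,j}^*|\Bigr).
\]
For the outcome part, bound $|\widehat m-m^*|$ by combining $|\widehat m|\le C$ with the square integrability of $m^*$. Cauchy--Schwarz then gives $\E_n|\widehat f-f^*|\le C(\varepsilon_q+\varepsilon_m)=o_p(1)$ by Assumption~\ref{asm:2}(ii).

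\emph{Combining numerator and denominator.} The numerator estimate therefore converges in probability to $p_j^*(n)\tau_j^{\bullet*}$. The denominator satisfies $\widehat p_j^{(-k)}(n)\ge c'$ and converges to $p_j^*(n)\ge c$. Since the ratio map is continuous where the denominator is bounded away from zero, the continuous mapping theorem yields $\widehat\tau_j^{\bullet,(-k)}\to_p\tau_j^{\bullet*}$. Taking maxima over the finitely many $(k,j,n)$ gives \eqref{eq:cc_derived}.
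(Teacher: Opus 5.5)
Your proposal is correct and follows essentially the same route as the paper. It uses the same split of the numerator error into a training-sample stability term, handled by Assumption~\ref{asm:2}(iii), and a nuisance bias bounded by $C(\varepsilon_q+\varepsilon_m)$; an $O_p(M^{-1/2})$ bound for the empirical frequency $\widehat p_j^{(-k)}(n)$; and a ratio step with the denominator bounded away from zero. Your ratio-of-averages treatment of $\widehat p_j^{(-k)}(n)$, which accounts for the random number of size-$n$ training clusters, is a slightly more careful version of the paper's direct appeal to the central limit theorem, and your continuous-mapping step stands in for the paper's explicit ratio identity without changing anything substantive.
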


\begin{proof}
Write $\widehat\E_n^{(-k)}$ for the empirical mean over all size-$n$
clusters in the full training complement $\mathcal I_k^c$, and suppress
the fold superscript in the remaining notation.  For
$\widehat{\tau}_j^\bullet$, define
\[
H_j^\bullet(x;q,m)
=\sum_{a_{-j}}\pi_\delta(a_{-j}\mid x,n;q)
d_j^\bullet(a_{-j},x,n;m),
\]
where
$d_j^{DATT}=m_{1,j}-m_{0,j}$ and
$d_j^{SATT}=m_{0,j}(a_{-j})-m_{0,j}(\mathbf 0_{-j})$.
Abbreviate
\[
\widehat h_j^\bullet(x)
=\widehat q_j(x,n)H_j^\bullet(x;\widehat q,\widehat m),
\qquad
h_j^{\bullet*}(x)
=q_j^*(x,n)H_j^\bullet(x;q^*,m^*),
\]
and let
$\widehat\mu_j^\bullet=\widehat\E_n^{(-k)}[
\widehat h_j^\bullet(X_i)]$ and
$\mu_j^{\bullet*}=\E_n[h_j^{\bullet*}(X_i)]$.
Then the numerator error splits into a full-training-sample empirical term
and a nuisance bias:
\[
\widehat\mu_j^\bullet-\mu_j^{\bullet*}
=\underbrace{(\widehat\E_n^{(-k)}-\E_n)
[\widehat h_j^\bullet(X_i)]}_{o_p(1)}
+\underbrace{\E_n[
\widehat h_j^\bullet(X_i)-h_j^{\bullet*}(X_i)]}
_{\text{nuisance bias}}.
\]
The first term is $o_p(1)$ by Assumption~\ref{asm:2}(iii). 
For the nuisance bias, smoothness of $\pi_\delta$, boundedness, and the
triangle inequality give
\[
\|H_j^\bullet(\cdot;\widehat q,\widehat m)
-H_j^\bullet(\cdot;q^*,m^*)\|_{2,n}
\le C(\varepsilon_q+\varepsilon_m).
\]
Consequently, the Cauchy--Schwarz inequality and the boundedness of
$H_j^\bullet(\cdot;\widehat q,\widehat m)$ and $q_j^*$ yield
\[
\begin{aligned}
\left|\E_n[\widehat h_j^\bullet-h_j^{\bullet*}]\right|
&\le
\left|\E_n[(\widehat q_j-q_j^*)
H_j^\bullet(\cdot;\widehat q,\widehat m)]\right| +
\left|\E_n[q_j^*\{
H_j^\bullet(\cdot;\widehat q,\widehat m)
-H_j^\bullet(\cdot;q^*,m^*)\}]\right|\\
&\le C(\varepsilon_q+\varepsilon_m).
\end{aligned}
\]
Thus
$|\widehat\mu_j^\bullet-\mu_j^{\bullet*}|
\le C(\varepsilon_q+\varepsilon_m)+o_p(1)=o_p(1)$.
Finally, $\widehat p_j(n)=\widehat\E_n^{(-k)}[A_{ij}]$ is the empirical
frequency of an observed indicator over the size-$n$ clusters of
$\mathcal I_k^c$, with population value exactly $p_j^*(n)$ and no dependence on
any fitted nuisance, so $\widehat p_j(n)-p_j^*(n)=O_p(M^{-1/2})=o_p(1)$
by the central limit theorem.  The estimated denominator
$\widehat p_j(n)$ is bounded away from zero by
Assumption~\ref{asm:2}(i), whereas the population denominator $p_j^*(n)$
is bounded away from zero by Assumption~\ref{asm:1}(iv).  Applying the ratio identity
$\widehat{\tau}_j^\bullet-\tau_j^{\bullet*}
=(\widehat\mu_j^\bullet-\mu_j^{\bullet*})/\widehat p_j
+\mu_j^{\bullet*}(p_j^*-\widehat p_j)
/ \widehat p_jp_j^*$
proves \eqref{eq:cc_derived}.  Uniformity follows because the number of
folds and the number of $(j,n)$ pairs are fixed and finite.
\end{proof}

\begin{lem}[Second-order conditional bias]\label{lem:cc_remainder}
Let
$\Gamma_j^\bullet(O_i;\eta)
=\tau_j^\bullet(\delta,n;\eta)
+\varphi_j^\bullet(O_i;\delta,n,\eta)$.
Under Assumptions~\ref{asm:1} and~\ref{asm:2}, uniformly over $(j,n)$ and
folds,
\begin{equation}\label{eq:cc_remainder_bound}
\left|\E_n[\Gamma_j^\bullet(O_i;\widehat\eta^{(-k)})]
-\tau_j^{\bullet*}(\delta,n)\right|
\le C\{\varepsilon_q^2
+(\varepsilon_q+\varepsilon_e)\varepsilon_m\}
+o_p(M^{-1/2}).
\end{equation}
\end{lem}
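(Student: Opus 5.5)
The plan is to compute the conditional bias $\E_n[\Gamma_j^\bullet(O_i;\widehat\eta)]-\tau_j^{\bullet*}$ exactly, holding the training-sample nuisances fixed, and to show that every first-order term cancels.

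Write $\widehat\Gamma_j=\widehat\tau_j+\widehat\varphi_j$. Multiplying through by $\widehat p_j$, we get
\[
\widehat p_j\widehat\Gamma_j
=\widehat p_j\widehat\tau_j+A_{ij}\{\widehat H_j-\widehat\tau_j\}+\widehat R_{Y,j}+\widehat G_j .
\]
Taking $\E_n$ and using $\E_n[A_{ij}]=p_j^*$ gives
\[
\widehat p_j\,\E_n[\widehat\Gamma_j]
=(\widehat p_j-p_j^*)\widehat\tau_j+\E_n[q_j^*\widehat H_j]+\E_n[\widehat R_{Y,j}]+\E_n[\widehat G_j].
\]
Subtracting $\widehat p_j\tau_j^*$ and using $p_j^*\tau_j^*=\E_n[q_j^*H_j^*]$, the normalization contributes $(\widehat p_j-p_j^*)(\widehat\tau_j-\tau_j^*)$. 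By Lemma~\ref{lem:cc_derived} this is $O_p(M^{-1/2})\,o_p(1)=o_p(M^{-1/2})$. Since $\widehat p_j\ge c'$, it then suffices to bound
\[
\E_n\bigl[q_j^*(\widehat H_j-H_j^*)\bigr]+\E_n[\widehat R_{Y,j}]+\E_n[\widehat G_j].
\]

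Next, condition on $(A_i,X_i)$ in the outcome term. Under \eqref{eq:assignCI}, $\Prob(A_{i(-j)}=c\mid A_{ij}=a,X_i)=e^*_{a,j}(c\mid X_i)$, so
\[
\E_n[\widehat R^{DATT}_{Y,j}]
=\sum_c\E_n\Bigl[\widehat\pi_c\Bigl\{\tfrac{q_j^* e_{1,j,c}^*}{\widehat e_{1,j,c}}(m_{1,c}^*-\widehat m_{1,c})
-\tfrac{\widehat q_j(1-q_j^*)e^*_{0,j,c}}{(1-\widehat q_j)\widehat e_{0,j,c}}(m^*_{0,c}-\widehat m_{0,c})\Bigr\}\Bigr].
\]
Split $q_j^*(\widehat H_j-H_j^*)=q_j^*\sum_c\widehat\pi_c(\widehat d_c-d_c^*)+q_j^*\sum_c(\widehat\pi_c-\pi_c^*)d_c^*$. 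The first piece combines with the display above into
\[
\sum_c\E_n\bigl[\widehat\pi_c\{\text{weight error}\}(\widehat m-m^*)\bigr].
\]
The weight errors are $q_j^*(1-e^*/\widehat e)$ and $q_j^*-\widehat q_j(1-q_j^*)e^*/\{(1-\widehat q_j)\widehat e\}$, which are Lipschitz in $(\widehat q-q^*,\widehat e-e^*)$ by the trimming in Assumption~\ref{asm:2}(i). Cauchy--Schwarz then bounds this contribution by $C(\varepsilon_q+\varepsilon_e)\varepsilon_m$. The SATT case is identical, with the extra zero-configuration residual handled in the same way.

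The main obstacle is the policy part: $\E_n[q_j^*\sum_c(\widehat\pi_c-\pi_c^*)d_c^*]+\E_n[\widehat G_j]$. Conditioning on $X_i$,
\[
\E_n[\widehat G_j]=\E_n\Bigl[\widehat q_j\sum_c\widehat d_c\sum_{j'\ne j}\widehat g'_{c,j'}(q^*_{j'}-\widehat q_{j'})\Bigr].
\]
Take a second-order Taylor expansion of $q\mapsto\pi_\delta(c\mid x;q)$ around $\widehat q$:
\[
\pi^*_c-\widehat\pi_c=\sum_{j'}\widehat g'_{c,j'}(q^*_{j'}-\widehat q_{j'})+R_{2,c}(x).
\]
This is valid because $\pi_\delta$ is a polynomial in the $q_{j,\delta}$, and hence smooth with bounded second derivatives on the trimmed interval, giving $|R_{2,c}|\le C\sum_{j'}(\widehat q_{j'}-q^*_{j'})^2$. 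The policy part becomes $\E_n[\sum_c\{\widehat q_j\widehat d_c-q_j^*d_c^*\}\sum_{j'}\widehat g'_{c,j'}(q^*_{j'}-\widehat q_{j'})]$ plus $-\E_n[q_j^*\sum_c d_c^*R_{2,c}]$. The difference $\widehat q_j\widehat d_c-q_j^*d_c^*$ is bounded by $|\widehat q_j-q_j^*|+|\widehat m-m^*|$ and $\widehat g'$ is bounded, so Cauchy--Schwarz gives $C(\varepsilon_q^2+\varepsilon_q\varepsilon_m)$, and the remainder gives $C\varepsilon_q^2$.

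Collecting the three pieces yields \eqref{eq:cc_remainder_bound}. Uniformity over folds and $(j,n)$ holds because there are finitely many indices and the constants depend only on $c,c',C,n_{\max},\delta$.
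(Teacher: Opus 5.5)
Your proposal is correct and follows the paper's proof step for step. It uses the same exact bias identity isolating $(\widehat p_j-p_j^*)(\widehat\tau_j-\tau_j^{\bullet*})$. It pairs the outcome-augmentation terms with $\E_n[q_j^*\sum_c\widehat\pi_c(\widehat d_c-d_c^{\bullet*})]$ to get assignment-error $\times$ outcome-error products, as the paper does. It also uses the same second-order Taylor expansion of $\pi_\delta$ around $\widehat q$, whose first-order term cancels against $\E_n[\widehat G_j]$.

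Your version spells out the grouping that the paper only describes as direct substitution. Two steps should be stated explicitly: the SATT zero-configuration cancellation uses $\sum_c\widehat\pi_c=1$, and bounding $\E_n[q_j^*\sum_c d_c^{\bullet*}R_{2,c}]$ by $C\varepsilon_q^2$ uses boundedness of $m^*$, which follows from $\E[\Delta Y_{ij}^2\mid A_i,X_i,N_i]\le C_Y$.
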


\begin{proof}
Fix $(j,n)$ and suppress these indices where no confusion can arise.  Write
$\widehat q=\widehat q_j(\cdot,n)$, $q^*=q_j^*(\cdot,n)$,
$\widehat p=\widehat p_j(n)$, $p^*=p_j^*(n)$, and
$\widehat\tau_j=\widehat\tau_j^\bullet(\delta,n)$.  Direct substitution in
$\Gamma_j^\bullet=\widehat\tau_j+
\varphi_j^\bullet(\widehat\eta)$, followed by grouping the terms according
to their source, gives
\begin{equation}\label{eq:cc_exact_bias}
\E_n[\Gamma_j^\bullet(O_i;\widehat\eta)]
-\tau_j^{\bullet*}(\delta,n)
=\frac{1}{\widehat p}
\left[
(\widehat p-p^*)\{\widehat{\tau}_j
-\tau_j^{\bullet*}(\delta,n)\}
+\mathcal R_Y^\bullet+\mathcal R_\pi^\bullet
\right].
\end{equation}
Here $\mathcal R_Y^\bullet$ collects the terms containing
outcome regression errors, whereas $\mathcal R_\pi^\bullet$ collects the
terms caused by replacing the true CIPS policy probabilities with their
estimated versions.  Since $\widehat p$ is bounded away from zero,
\begin{equation}\label{eq:cc_bias_decomp}
\begin{aligned}
\left|\E_n[\Gamma_j^\bullet(O_i;\widehat\eta)]
-\tau_j^{\bullet*}(\delta,n)\right|
\le C\left[
|\widehat p-p^*|\,
|\widehat\tau_j-\tau_j^{\bullet*}(\delta,n)|
+|\mathcal R_Y^\bullet|+|\mathcal R_\pi^\bullet|
\right].
\end{aligned}
\end{equation}
We now bound the three terms on the right-hand side in order.

\medskip
\noindent\textit{1. Empirical-normalization term.}

By Lemma~\ref{lem:cc_derived},
\[
|\widehat p-p^*|\,
|\widehat{\tau}_j-\tau_j^{\bullet*}(\delta,n)|
=O_p(M^{-1/2})\,o_p(1)
=o_p(M^{-1/2}).
\]

\medskip
\noindent\textit{2. Outcome-regression remainder.}
For a peer configuration $c\in\mathcal A(n-1)$, write
$\widehat\pi_c(x)=\pi_\delta(c\mid x,n;\widehat q)$.
Iterated expectations applied to the outcome-augmentation terms give, for
DATT,
\begin{equation}\label{eq:cc_RYD}
\begin{aligned}
\mathcal R_Y^{DATT}
=&\;\E_n\sum_c\widehat\pi_c q^*
\left(1-\frac{e_{1,j}^*(c\mid X_i,n)}
{\widehat e_{1,j}(c\mid X_i,n)}\right)
\times
\{\widehat m_{1,j}(c,X_i,n)-m_{1,j}^*(c,X_i,n)\}\\
&+\E_n\sum_c\widehat\pi_c
\left\{
\frac{(1-q^*)e_{0,j}^*(c\mid X_i,n)\widehat q}
{(1-\widehat q)\widehat e_{0,j}(c\mid X_i,n)}
-q^*\right\}
\times
\{\widehat m_{0,j}(c,X_i,n)-m_{0,j}^*(c,X_i,n)\}.
\end{aligned}
\end{equation}
For SATT, the same calculation gives
\begin{equation}\label{eq:cc_RYS}
\begin{aligned}
\mathcal R_Y^{SATT}
=&\;\E_n\sum_c\widehat\pi_c
\left\{q^*-
\frac{(1-q^*)e_{0,j}^*(c\mid X_i,n)\widehat q}
{(1-\widehat q)\widehat e_{0,j}(c\mid X_i,n)}
\right\}\times
\{\widehat m_{0,j}(c,X_i,n)-m_{0,j}^*(c,X_i,n)\}\\
&+\E_n\left\{
\frac{(1-q^*)e_{0,j}^*(\mathbf 0_{-j}\mid X_i,n)\widehat q}
{(1-\widehat q)
\widehat e_{0,j}(\mathbf 0_{-j}\mid X_i,n)}
-q^*\right\}\times
\{\widehat m_{0,j}(\mathbf 0_{-j},X_i,n)
-m_{0,j}^*(\mathbf 0_{-j},X_i,n)\}.
\end{aligned}
\end{equation}
Thus every summand has the form
\[
\text{assignment-model error coefficient}
\times
\{\widehat m_{a,j}(c,X_i,n)-m_{a,j}^*(c,X_i,n)\}.
\]
By the same argument as in Lemma~\ref{lem:cc_lip}, we have
\begin{equation}\label{eq:cc_RY_bound}
|\mathcal R_Y^\bullet|
\le C(\varepsilon_q+\varepsilon_e)\varepsilon_m.
\end{equation}

\medskip
\noindent\textit{3. Policy-estimation remainder.}
Let $\widehat d_c^\bullet$ and $d_c^{\bullet*}$ denote the estimated and
true DATT or SATT regression contrasts, and let
$\pi_c^*(x)=\pi_\delta(c\mid x,n;q^*)$.  The policy contribution initially
has the exact expansion
\begin{equation}\label{eq:cc_Rpi_initial}
\begin{aligned}
\mathcal R_\pi^\bullet
=\E_n\left[
q^*\sum_c\{\widehat\pi_c-\pi_c^*\}d_c^{\bullet*}
\right] +\E_n\left[
\sum_c\widehat q\,\widehat d_c^\bullet
\sum_{j'\ne j}\widehat g_{j,c,j'}^{\prime}
\{q_{j'}^*-\widehat q_{j'}\}
\right].
\end{aligned}
\end{equation}
The first term is the error from replacing the true policy probability
$\pi_c^*$ by $\widehat\pi_c$.  The second term is
$\E_n[\widehat G_j^\bullet]$, because conditionally on $X_i$,
$\E_n[A_{ij'}-\widehat q_{j'}\mid X_i]
=q_{j'}^*-\widehat q_{j'}$.
To combine these two lines, take a second-order Taylor expansion of the
policy probability around $\widehat q$:
\begin{equation}\label{eq:cc_taylor}
\pi_c^*(x)-\widehat\pi_c(x)
=\sum_{j'\ne j}\widehat g_{j,c,j'}^{\prime}(x,n)
\{q_{j'}^*(x,n)-\widehat q_{j'}(x,n)\}+R_{2,c}(x),
\end{equation}
where
\begin{equation}\label{eq:cc_taylor_bound}
|R_{2,c}(x)|
\le C\sum_{j'\ne j}
\{q_{j'}^*(x,n)-\widehat q_{j'}(x,n)\}^2.
\end{equation}
Substituting \eqref{eq:cc_taylor} into the first term of
\eqref{eq:cc_Rpi_initial}, its first-order term combines with the second
term of \eqref{eq:cc_Rpi_initial}.  This leaves
\begin{equation}\label{eq:cc_Rpi}
\begin{aligned}
\mathcal R_\pi^\bullet
=&\;\E_n\sum_c
\{\widehat q\,\widehat d_c^\bullet-q^*d_c^{\bullet*}\}
\sum_{j'\ne j}\widehat g_{j,c,j'}^{\prime}
\{q_{j'}^*-\widehat q_{j'}\}\\
&-\E_n\left[q^*\sum_c d_c^{\bullet*}R_{2,c}\right].
\end{aligned}
\end{equation}
To bound the first line, note that
$\|\widehat q\,\widehat d_c^\bullet-q^*d_c^{\bullet*}\|_{2,n}
\le C(\varepsilon_q+\varepsilon_m)$.
The other factor has $L_2$ norm at most $C\varepsilon_q$.
The Cauchy--Schwarz inequality therefore gives
\[
\left|\E_n\sum_c
\{\widehat q\,\widehat d_c^\bullet-q^*d_c^{\bullet*}\}
\sum_{j'\ne j}\widehat g_{j,c,j'}^{\prime}
\{q_{j'}^*-\widehat q_{j'}\}\right|
\le C(\varepsilon_q^2+\varepsilon_q\varepsilon_m).
\]
For the second line of \eqref{eq:cc_Rpi},
\eqref{eq:cc_taylor_bound} gives
\[
\left|\E_n\left[q^*\sum_c d_c^{\bullet*}R_{2,c}\right]\right|
\le C\varepsilon_q^2.
\]
Hence
\begin{equation}\label{eq:cc_Rpi_bound}
|\mathcal R_\pi^\bullet|
\le C(\varepsilon_q^2+\varepsilon_q\varepsilon_m).
\end{equation}

Substituting the three bounds into \eqref{eq:cc_bias_decomp} gives
\[
\left|\E_n[\Gamma_j^\bullet(O_i;\widehat\eta)]
-\tau_j^{\bullet*}(\delta,n)\right|
\le C\{\varepsilon_q^2
+(\varepsilon_q+\varepsilon_e)\varepsilon_m\}
+o_p(M^{-1/2}),
\]
which proves \eqref{eq:cc_remainder_bound}.
\end{proof}

\noindent We now begin the formal proof of Theorem~\ref{thm:asymp}, which
proceeds in five steps.

\medskip
\noindent\textbf{Step 1: stochastic expansion.}
Let
\[
\Delta\Gamma_k(O)
=\Gamma(O;\widehat\eta^{(-k)})
-\Gamma(O;\eta^*),\qquad
R_k=\E^{(k)}[\Gamma(O;\widehat\eta^{(-k)})]-\tau^*.
\]
Because
$\Gamma(O;\eta^*)-\tau^*=\varphi^*(O)$,
\begin{equation}\label{eq:cc_decomp}
\begin{aligned}
\widehat{\tau}^\bullet(\delta)-\tau^*
=\underbrace{\frac1M\sum_{i=1}^M\varphi^*(O_i)}_{T_1}
+\underbrace{\frac1M\sum_{k=1}^K\sum_{i\in\mathcal I_k}
\left[\Delta\Gamma_k(O_i)
-\E^{(k)}\{\Delta\Gamma_k(O)\}\right]}_{T_2}
+\underbrace{\sum_{k=1}^K\frac{|\mathcal I_k|}{M}R_k}_{T_3}.
\end{aligned}
\end{equation}

\medskip
\noindent\textbf{Step 2: leading empirical-process term.}
The clusters are i.i.d., $\E[\varphi^*]=0$, and
$\E[(\varphi^*)^2]=\sigma_\bullet^2<\infty$.  Therefore the
Lindeberg--L\'evy central limit theorem gives
\[
\sqrt M\,T_1
=\frac1{\sqrt M}\sum_{i=1}^M\varphi^*(O_i)
\xrightarrow{d}\mathcal N(0,\sigma_\bullet^2).
\]

\medskip
\noindent\textbf{Step 3: cross-fitting empirical-process remainder.}
Conditionally on the training data for fold $k$, the summands indexed by
$i\in\mathcal I_k$ are i.i.d.\ and mean zero.  Hence
\[
\E\!\left[
\left.
\left\{\frac1{\sqrt M}\sum_{i\in\mathcal I_k}
\left[\Delta\Gamma_k(O_i)
-\E^{(k)}\Delta\Gamma_k(O)\right]\right\}^2
\right|\text{training data}\right]
\le \frac{|\mathcal I_k|}{M}
\|\Delta\Gamma_k\|_2^2.
\]

Lemmas~\ref{lem:cc_lip}--\ref{lem:cc_derived} and
Assumption~\ref{asm:2}(ii)--(iii) imply
$\|\Delta\Gamma_k\|_2=o_p(1)$.  Conditional Chebyshev's inequality
and fixed $K$ therefore give $\sqrt M\,T_2=o_p(1)$.

\medskip
\noindent\textbf{Step 4: conditional-bias remainder.}
By \eqref{eq:momentfn} and iterated expectations,
\[
R_k
=\sum_{n\le n_{\max}}\nu_n\,\frac1n\sum_{j=1}^n
\left\{
\E_n[\Gamma_j^\bullet(O_i;\widehat\eta^{(-k)})]
-\tau_j^{\bullet*}(\delta,n)
\right\}.
\]

Lemma~\ref{lem:cc_remainder}, the finite number of $(j,n)$ terms, and
Assumption~\ref{asm:2}(iii) yield
\[
|R_k|
\le C\{\varepsilon_q^2
+(\varepsilon_q+\varepsilon_e)\varepsilon_m\}
+o_p(M^{-1/2})
=o_p(M^{-1/2}).
\]
Since $K$ is fixed, $\sqrt M\,T_3=o_p(1)$.  Combining Steps 1--4 proves the
asymptotic linear representation and the normal limit.

\medskip
\noindent\textbf{Step 5: variance consistency.}
For $i\in\mathcal I_k$,
\[
\Gamma(O_i;\widehat\eta^{(-k)})
-\widehat{\tau}^{\bullet}(\delta)
=\varphi^*(O_i)+\Delta\Gamma_k(O_i)
-\{\widehat{\tau}^{\bullet}(\delta)-\tau^*\}.
\]
The law of large numbers gives
$M^{-1}\sum_i\varphi^*(O_i)^2\to_p\sigma_\bullet^2$.
The conditional second-moment calculation in Step 3 and
$\|\Delta\Gamma_k\|_2=o_p(1)$ give
\[
\frac1M\sum_{k=1}^K\sum_{i\in\mathcal I_k}
\Delta\Gamma_k(O_i)^2=o_p(1).
\]
Steps 1--4 imply
$\widehat{\tau}^{\bullet}(\delta)-\tau^*=o_p(1)$.
Expanding the square in \eqref{eq:Vhat} and applying Cauchy--Schwarz to all
cross terms proves
$\widehat V^{\bullet}(\delta)\to_p\sigma_\bullet^2$.
The confidence interval follows from Slutsky's theorem. \hfill$\square$

\subsection{Type-B policy: efficient influence functions, estimation, and
asymptotic properties}
\label{app:typeB}

\paragraph{Cross-fitted estimation and asymptotic properties}
The efficient influence function in Remark~\ref{rem:typeB} directly yields
a cross-fitted estimator of the type-B target.  Collect the relevant nuisances as
$\eta_B=(q_j,e_{a,j},m_{a,j},p_j,\tau_{B,j}^\bullet)$, let
$\varphi_{B,j}^\bullet(O_i;\alpha,n,\eta_B)$ denote
\eqref{eq:phiBDATT} or \eqref{eq:phiBSATT} with the starred quantities
replaced by their components of $\eta_B$, and define
\begin{equation}\label{eq:momentfnB}
\begin{aligned}
\Gamma_B^\bullet(O_i;\alpha,\eta_B)
&=\frac1{N_i}\sum_{j=1}^{N_i}
\Gamma_{B,j}^\bullet(O_i;\alpha,N_i,\eta_B),\\
\Gamma_{B,j}^\bullet(O_i;\alpha,n,\eta_B)
&=\tau_{B,j}^\bullet(\alpha,n;\eta_B)
+\varphi_{B,j}^\bullet(O_i;\alpha,n,\eta_B).
\end{aligned}
\end{equation}
At the truth,
$\E[\Gamma_B^\bullet(O_i;\alpha,\eta_B^*)]
=\tau_B^\bullet(\alpha)$.

For fold $k$, use the full training complement $\mathcal I_k^c$ to estimate
$(q_j,e_{a,j},m_{a,j})$ and to construct
$\widehat p_j^{(-k)}(n)$ as in \eqref{eq:derivednuis}.  Define

\begin{equation}\label{eq:derivednuisB}
\widehat\tau_{B,j}^{\bullet,(-k)}(\alpha,n)
=\frac{\widehat\E_n^{(-k)}\!\left[
\widehat q_j^{(-k)}(X_i,n)
\sum_{a_{-j}}\pi_B(a_{-j}\mid n;\alpha)
\widehat d_j^{\bullet,(-k)}(a_{-j},X_i,n)\right]}
{\widehat p_j^{(-k)}(n)},
\end{equation}
where
$\widehat d_j^{DATT,(-k)}
=\widehat m_{1,j}^{(-k)}-\widehat m_{0,j}^{(-k)}$ and
$\widehat d_j^{SATT,(-k)}
=\widehat m_{0,j}^{(-k)}(a_{-j})-\widehat m_{0,j}^{(-k)}(\mathbf 0_{-j})$.
The type-B one-step estimator is
\begin{equation}\label{eq:tauhatB}
\widehat\tau_B^\bullet(\alpha)
=\frac1M\sum_{k=1}^K\sum_{i\in\mathcal I_k}
\Gamma_B^\bullet(O_i;\alpha,\widehat\eta_B^{(-k)}).
\end{equation}
Thus \eqref{eq:tauhatB} is obtained from the CIPS estimator by replacing
the estimated covariate-dependent policy with the fixed weight $\pi_B$
and deleting $G_j^\bullet$.

The reduced rate condition follows from the conditional bias of
\eqref{eq:momentfnB}.  Fix $(j,n)$ and suppress the fold index.  Specializing
the exact bias identity \eqref{eq:cc_exact_bias} to the fixed policy gives
\begin{equation}\label{eq:cc_exact_biasB}
\begin{aligned}
\E_n[\Gamma_{B,j}^\bullet(O_i;\alpha,\widehat\eta_B)]
-\tau_{B,j}^{\bullet*}(\alpha,n)
=\frac1{\widehat p_j(n)}
\Bigl[&
\{\widehat p_j(n)-p_j^*(n)\}
\{\widehat\tau_{B,j}^\bullet(\alpha,n)
-\tau_{B,j}^{\bullet*}(\alpha,n)\} +\mathcal R_{B,Y}^\bullet\Bigr].
\end{aligned}
\end{equation}
Here $\mathcal R_{B,Y}^\bullet$ is the outcome remainder in
\eqref{eq:cc_RYD} or \eqref{eq:cc_RYS}, with
$\widehat\pi_c$ replaced by the fixed $\pi_B(c\mid n;\alpha)$.
By the same argument as in Lemma~\ref{lem:cc_lip}, we have
\[
|\mathcal R_{B,Y}^\bullet|
\le C(\varepsilon_q+\varepsilon_e)\varepsilon_m.
\]
There is no policy remainder $\mathcal R_\pi^\bullet$: since $\pi_B$ is
fixed, no Taylor expansion of the policy in $q$ is needed.  This removes
both the isolated $\varepsilon_q^2$ term and the
$\varepsilon_q\varepsilon_m$ contribution generated by estimation of the
CIPS policy.  The remaining $\varepsilon_q\varepsilon_m$ component inside
$(\varepsilon_q+\varepsilon_e)\varepsilon_m$ is the usual
assignment--outcome product.  Impose the same boundedness and
fixed-$K$ outer cross-fitting as in Theorem~\ref{thm:asymp}, together with the
analogous full-training-sample stability condition on the type-B numerator
integrands.  Since $\widehat p_j(n)$ is again an empirical frequency of
an observed indicator, the argument of Lemma~\ref{lem:cc_derived} gives
$\widehat p_j-p_j^*=O_p(M^{-1/2})$ and, with consistency of all regression
nuisances, $\widehat\tau_{B,j}^\bullet-\tau_{B,j}^{\bullet*}=o_p(1)$, hence
$\{\widehat p_j-p_j^*\}
\{\widehat\tau_{B,j}^\bullet-\tau_{B,j}^{\bullet*}\}
=O_p(M^{-1/2})o_p(1)=o_p(M^{-1/2})$.

Consequently the proof of Theorem~\ref{thm:asymp} applies with
\[
(\varepsilon_q+\varepsilon_e)\varepsilon_m=o_p(M^{-1/2}),
\]
and yields
\[
\sqrt M\{\widehat\tau_B^\bullet(\alpha)-\tau_B^\bullet(\alpha)\}
=\frac1{\sqrt M}\sum_{i=1}^M
\varphi_B^{\bullet*}(O_i;\alpha)+o_p(1)
\xrightarrow{d}\mathcal N(0,\sigma_{B,\bullet}^2),
\qquad
\sigma_{B,\bullet}^2=\E[\{\varphi_B^{\bullet*}(O_i;\alpha)\}^2],
\]
assuming $\sigma_{B,\bullet}^2>0$.
When $e_{a,j}$ is constructed from the product of the individual
propensities under \eqref{eq:assignCI}, bounded cluster size implies
$\varepsilon_e\le C\varepsilon_q$, so the rate condition simplifies to
$\varepsilon_q\varepsilon_m=o_p(M^{-1/2})$.

Inference proceeds exactly as in Theorem~\ref{thm:asymp} with
$\Gamma_B^\bullet$ in place of $\Gamma^\bullet$: setting
\begin{equation}\label{eq:VhatB}
\widehat V_B^{\bullet}(\alpha)
=\frac1M\sum_{k=1}^K\sum_{i\in\mathcal I_k}
\left\{\Gamma_B^\bullet(O_i;\alpha,\widehat\eta_B^{(-k)})
-\widehat\tau_B^{\bullet}(\alpha)\right\}^2,
\end{equation}
the argument of Step~5 of~\ref{app:asymp} applies verbatim and gives
$\widehat V_B^{\bullet}(\alpha)\to_p\sigma_{B,\bullet}^2$, so that
\[
\widehat\tau_B^{\bullet}(\alpha)
\ \pm\ z_{1-\gamma/2}
\sqrt{\widehat V_B^{\bullet}(\alpha)/M}
\]
is an asymptotically valid $(1-\gamma)$ confidence interval.  This is the
analytic standard error reported for the type-B estimators in
Section~\ref{sec:simulation}.

Equation~\eqref{eq:cc_exact_biasB} also displays double robustness.  If the
outcome regressions are correct, their errors vanish and hence
$\mathcal R_{B,Y}^\bullet=0$ regardless of the probability limit of the
assignment nuisance estimates.  If $(q_j,e_{a,j})$ are correct, the
coefficients multiplying the outcome-regression errors vanish, so
$\mathcal R_{B,Y}^\bullet=0$ regardless of the probability limit of the
outcome regressions.  Since $\widehat p_j$ is an empirical treated
proportion and the remaining quantities are bounded, the first product in
\eqref{eq:cc_exact_biasB} is $o_p(1)$ in either case.  Therefore
\eqref{eq:tauhatB} is consistent under either nuisance model.  This
union-model statement concerns consistency; root-$M$ inference when one
nuisance family is misspecified requires the corresponding additional
rate control.  The CIPS estimator itself is not doubly robust in this model-misspecification sense,
because an inconsistent $q_j$ changes the policy defining its estimand.

\paragraph{Proof of the type-B efficient influence function}
Fix $(j,n)$ and consider the same regular parametric submodel as in the
proof of Theorem~\ref{thm:eif}.  For
$\bullet\in\{DATT,SATT\}$, let
\[
\begin{aligned}
H_{B,j}^{DATT}(x;t)
&=\sum_{a_{-j}}\pi_B(a_{-j}\mid n;\alpha)
\{m_{1,j}(a_{-j},x,n;t)-m_{0,j}(a_{-j},x,n;t)\},\\
H_{B,j}^{SATT}(x;t)
&=\sum_{a_{-j}}\pi_B(a_{-j}\mid n;\alpha)
\{m_{0,j}(a_{-j},x,n;t)
-m_{0,j}(\mathbf 0_{-j},x,n;t)\},\\
\mathcal N_{B,j}^{\bullet}(t)
&=\int q_j(x,n;t)H_{B,j}^{\bullet}(x;t)
p_{X,n}(x;t)\,dx,\qquad
\mathcal D_j(t)=\int q_j(x,n;t)p_{X,n}(x;t)\,dx.
\end{aligned}
\]
At $t=0$,
$H_{B,j}^{\bullet}(x;0)=H_{B,j}^{\bullet*}(x,n)$,
$\mathcal D_j(0)=p_j^*(n)$, and
\[
\tau_{B,j}^{\bullet*}(\alpha,n)
=\frac{\mathcal N_{B,j}^{\bullet}(0)}{\mathcal D_j(0)}.
\]
The key difference from the CIPS calculation is that $\pi_B$ is fixed
along every parametric submodel.  Therefore
\[
\dot\pi_B(a_{-j}\mid n;\alpha)=0
\]
and
\[
\dot H_{B,j}^{\bullet}(x)
=\sum_{a_{-j}}\pi_B(a_{-j}\mid n;\alpha)
\dot d_j^\bullet(a_{-j},x),
\]
where
$\dot d_j^\bullet=\dot m_{1,j}-\dot m_{0,j}$ for
$\bullet=DATT$, and
$\dot d_j^\bullet=\dot m_{0,j}(a_{-j})
-\dot m_{0,j}(\mathbf 0_{-j})$ for $\bullet=SATT$.
Thus the product rule gives only the covariate-law,
focal-propensity, and outcome channels:
\begin{equation}\label{eq:NBchannels}
\begin{aligned}
\dot{\mathcal N}_{B,j}^{\bullet}
={}&\int q_j^*H_{B,j}^{\bullet*}s_{X,n}p_{X,n}^*\,dx
+\int H_{B,j}^{\bullet*}\dot q_jp_{X,n}^*\,dx\\
&+\int q_j^*\sum_{a_{-j}}
\pi_B(a_{-j}\mid n;\alpha)
\dot d_j^\bullet(a_{-j},x)p_{X,n}^*(x)\,dx.
\end{aligned}
\end{equation}
In particular, the policy channel (III) in
\eqref{eq:Dchannels}--\eqref{eq:Schannels} is absent.

The first two terms of \eqref{eq:NBchannels} are represented, respectively,
by
\[
q_j^*(X_i,n)H_{B,j}^{\bullet*}(X_i,n)
-\mathcal N_{B,j}^{\bullet}(0)
\quad\text{and}\quad
\{A_{ij}-q_j^*(X_i,n)\}H_{B,j}^{\bullet*}(X_i,n),
\]
by the covariate-score calculation and \eqref{eq:representer}; their sum is
$A_{ij}H_{B,j}^{\bullet*}(X_i,n)
-\mathcal N_{B,j}^{\bullet}(0)$.
Applying \eqref{eq:mrep} configuration by configuration to the last term
of \eqref{eq:NBchannels} gives an outcome-score representer
$R_{B,Y,j}^{\bullet}(O_i)$.  For DATT it is
$R_{Y,j}^{DATT}$ in \eqref{eq:RYD} with $\pi_\delta^*$ replaced by $\pi_B$,
and for SATT it is $R_{Y,j}^{SATT}$ in \eqref{eq:RYS} with the same
replacement.  Consequently,
\begin{equation}\label{eq:NBgrad}
\dot{\mathcal N}_{B,j}^{\bullet}
=\E_n\!\left[
\{A_{ij}H_{B,j}^{\bullet*}(X_i,n)
-\mathcal N_{B,j}^{\bullet}(0)
+R_{B,Y,j}^{\bullet}(O_i)\}S_n(O_i)
\right].
\end{equation}
The outcome representer has conditional mean zero given
$(A_i,X_i,N_i=n)$, so the expression inside braces in
\eqref{eq:NBgrad} has size-conditional mean zero.

The denominator calculation is unchanged:
\[
\dot{\mathcal D}_j
=\E_n[\{A_{ij}-p_j^*(n)\}S_n(O_i)].
\]
Using
$\mathcal N_{B,j}^{\bullet}(0)
=p_j^*(n)\tau_{B,j}^{\bullet*}(\alpha,n)$, the quotient rule and
\eqref{eq:NBgrad} yield
\[
\begin{aligned}
\left.\partial_t\tau_{B,j}^{\bullet}(\alpha,n;t)\right|_{t=0}
&=\frac{1}{p_j^*(n)}
\{\dot{\mathcal N}_{B,j}^{\bullet}
-\tau_{B,j}^{\bullet*}(\alpha,n)\dot{\mathcal D}_j\}\\
&=\E_n[\varphi_{B,j}^{\bullet*}(O_i;\alpha,n)S_n(O_i)],
\end{aligned}
\]
where the resulting expressions are exactly
\eqref{eq:phiBDATT} and \eqref{eq:phiBSATT}.

It remains to verify mean zero and tangent-space membership.  Multiplying
the unit-level candidate by $p_j^*(n)$ gives
\[
\begin{gathered}
p_j^*(n)\varphi_{B,j}^{\bullet*}=L_Y+L_A+L_X,\\
L_Y=R_{B,Y,j}^{\bullet},\qquad
L_A=\{A_{ij}-q_j^*(X_i,n)\}
\{H_{B,j}^{\bullet*}(X_i,n)
-\tau_{B,j}^{\bullet*}(\alpha,n)\},\\
L_X=q_j^*(X_i,n)
\{H_{B,j}^{\bullet*}(X_i,n)
-\tau_{B,j}^{\bullet*}(\alpha,n)\}.
\end{gathered}
\]
Here $L_Y$ has conditional mean zero given $(A_i,X_i,N_i=n)$,
$L_A$ is an $X_i$-measurable multiple of
$A_{ij}-q_j^*(X_i,n)$, and
\[
\E_n[L_X]
=\mathcal N_{B,j}^{\bullet}(0)
-p_j^*(n)\tau_{B,j}^{\bullet*}(\alpha,n)=0.
\]
Thus the candidate has size-conditional mean zero and belongs to
$\mathcal T_{j,n}$.  It represents the derivative along every regular
submodel and is therefore the size-$n$ unit-level canonical gradient.

Finally, averaging over units and differentiating the mixture over the
random cluster-size law exactly as in Step 3 gives
\[
\begin{aligned}
\E[\varphi_B^{\bullet*}(O_i;\alpha)S(O_i)]
={}&\sum_n\dot\nu_n\bar\tau_B^{\bullet*}(\alpha,n)
+\sum_n\nu_n\frac1n\sum_{j=1}^n
\dot\tau_{B,j}^{\bullet}(\alpha,n)\\
={}&\left.\partial_t\tau_B^\bullet(\alpha;t)\right|_{t=0}.
\end{aligned}
\]
Moreover, the size-conditional mean-zero property of every unit-level
gradient gives
\[
\begin{aligned}
\E[\varphi_B^{\bullet*}(O_i;\alpha)]
={}&\sum_n\nu_n\frac1n\sum_{j=1}^n
\E_n[\varphi_{B,j}^{\bullet*}(O_i;\alpha,n)]\\
&+\sum_n\nu_n
\{\bar\tau_B^{\bullet*}(\alpha,n)-\tau_B^{\bullet*}(\alpha)\}
=0.
\end{aligned}
\]
The within-size component is in the outcome, assignment, and covariate
tangent blocks, while
$\bar\tau_B^{\bullet*}(\alpha,N_i)-\tau_B^{\bullet*}(\alpha)$ belongs to
$\mathcal T_N$.  The expression in \eqref{eq:eifB} therefore has mean zero,
belongs to the full tangent space, and represents the global pathwise
derivative.  The same overlap, bounded-cluster-size, and moment conditions
used in Theorem~\ref{thm:eif} imply square integrability.  Hence
\eqref{eq:eifB} is the global efficient influence function, and its
variance is the corresponding semiparametric efficiency bound.
\hfill$\square$

\end{document}